\renewcommand{\bibname}{References}%
\let\ElseVierBibliography\bibliography%
\renewcommand{\bibliography}[1]{%
\section*{\bibname}%
\ElseVierBibliography{#1}%
}%
\newcolumntype{Y}{>{\centering\arraybackslash}X}
\newcommand\ignore[1]{}
\newcommand{\hsp}{\hspace{0.1in} }
\newcommand{\hspp}{\hspace{0.05in} }
\newcommand{\hsppp}{\hspace{0.02in} }
\renewcommand{\Pr}{\mathbb{P}} % probability
\DeclareMathOperator{\EV}{\mathbb{E}} % expected value
\DeclareMathOperator{\LLR}{\mathcal{L}}
\DeclareMathOperator{\LR}{\Lambda}
\DeclareMathOperator{\KL}{\mathds{I}}
\DeclareMathOperator{\ADD}{ADD}
\DeclareMathOperator{\Var}{Var}
\DeclareMathOperator{\ARL}{ARL}
\DeclareMathOperator{\SADD}{SADD}
\DeclareMathOperator{\ESADD}{ESADD}
\DeclareMathOperator{\STADD}{STADD}
\DeclareMathOperator*{\arginf}{arg\,inf}
\DeclareMathOperator*{\esssup}{ess\,sup}
\newcommand{\T}{T}
\newcommand{\TCS}{\tau_{\sf cs}}
\newcommand{\TSR}{\tau_{\sf sr}}
\newcommand{\abs}[1]{\left\vert#1\right\vert}
\theoremstyle{plain} %% produces italic text
\newtheorem{lemma}{Lemma}
\newtheorem{proposition}{Proposition}
\theoremstyle{remark}
\newtheorem*{remark}{Remark}
\journal{Applied Stochastic Models in Business and Industry}
\begin{document}

%+-----------------------------------------------------------------------------------------------+%
\begin{frontmatter}

%% Title, authors and addresses

%% use the tnoteref command within \title for footnotes;
%% use the tnotetext command for the associated footnote;
%% use the fnref command within \author or \address for footnotes;
%% use the fntext command for the associated footnote;
%% use the corref command within \author for corresponding author footnotes;
%% use the cortext command for the associated footnote;
%% use the ead command for the email address,
%% and the form \ead[url] for the home page:
%%
%% \title{Title\tnoteref{label1}}
%% \tnotetext[label1]{}
%% \author{Name\corref{cor1}\fnref{label2}}
%% \ead{email address}
%% \ead[url]{home page}
%% \fntext[label2]{}
%% \cortext[cor1]{}
%% \address{Address\fnref{label3}}
%% \fntext[label3]{}

%\title{\large{\bf\uppercase{Comparative performance analysis of Cumulative Sum chart and the Shiryaev--Roberts detection procedure for autocorrelated data}}}
\title{\large{\bf\uppercase{Comparative performance analysis of the Cumulative Sum chart and the Shiryaev--Roberts procedure for detecting changes in autocorrelated data}}}

%% use optional labels to link authors explicitly to addresses:
%% \author[label1,label2]{<author name>}
%% \address[label1]{<address>}
%% \address[label2]{<address>}

\author[bu]{Aleksey\ S.\ Polunchenko\corref{cor-author}}
\ead{aleksey@binghamton.edu}
\ead[url]{http://people.math.binghamton.edu/aleksey}
\author[qualcomm]{Vasanthan Raghavan}
\ead{vasanthan\_raghavan@ieee.org}
\cortext[cor-author]{Address correspondence to A.\ S.\ Polunchenko, Department of
Mathematical Sciences, State University of New York (SUNY) at Binghamton,
Binghamton, NY 13902--6000, USA; Tel: +1 (607) 777-6906; Fax: +1 (607) 777-2450; Email:~\href{mailto:aleksey@binghamton.edu}{aleksey@binghamton.edu}}
\address[bu]{Department of Mathematical Sciences, State University of New York
(SUNY) at Binghamton\\Binghamtom, NY 13902--6000, USA}
\address[qualcomm]{Qualcomm, Inc.,\\ Bridgewater, NJ 08807, USA}

\begin{abstract}
%% Text of abstract
We consider the problem of quickest change-point detection where the observations form a first-order autoregressive (AR) process driven by temporally independent standard Gaussian noise. Subject to possible change are both the drift of the AR(1) process ($\mu$) as well as its correlation
coefficient ($\lambda$), both known. The change is abrupt and persistent, and is of known magnitude, with $\abs{\lambda}<1$ throughout. For this scenario, we carry out a comparative performance analysis of the popular Cumulative Sum (CUSUM) chart and its less well-known but worthy competitor---the Shiryaev--Roberts (SR) procedure. Specifically, the performance is measured through Pollak's Supremum (conditional) Average Delay to Detection (SADD) constrained to a pre-specified level of the Average Run Length (ARL) to false alarm. Particular attention is drawn to the sensitivity of each procedure's SADD and ARL with respect to the value of $\lambda$ before and after the change. The performance is studied through the solution of the respective integral renewal equations obtained via Monte Carlo simulations. The simulations are designed to estimate the sought performance metrics in an unbiased and asymptotically strongly consistent manner, and to within a prescribed proportional closeness (also asymptotically). Our extensive numerical studies suggest that both the CUSUM chart and the SR procedure are asymptotically {\em second-order} optimal, even though the CUSUM chart is found to be slightly better than the SR procedure, irrespective of the model parameters. Moreover, the existence of a {\em worst-case} post-change correlation parameter corresponding to the poorest detectability of the change for a given ARL to false alarm is established as well. To the best of our knowledge, this is the first time the performance of the SR procedure is studied for autocorrelated data.
\end{abstract}

\begin{keyword}
%% keywords here, in the form: keyword \sep keyword

%% MSC codes here, in the form: \MSC code \sep code
%% or \MSC[2008] code \sep code (2000 is the default)

%% ASMBI requirement: no more than 5 keywords for their manuscript.
CUSUM chart\sep Shiryaev--Roberts procedure\sep Sequential analysis\sep Sequential change-point detection\sep Auto-regressive process
\end{keyword}

\end{frontmatter}
%-------------------------------------------------------------------------------------------------%

\ignore{
\section{Introduction}
\label{sec:Intro}

Sequential (quickest) change-point detection is concerned with the design and
analysis of procedures for ``on-the-go'' detection of possible changes in the
attributes of a random process. Specifically, the process is assumed to
be monitored through observations made sequentially and should its behavior
suggest that the process may have statistically changed, the aim is to conclude
so within the fewest observations possible, subject to a tolerance level on the
risk of false detection. This problem finds applications in many branches of
science and engineering: quality control, biostatistics, economics, seismology,
forensics, navigation, and communications systems (see,
e.g.,~\cite{Basseville+Nikiforov:Book93,Kenett+Zacks:Book1998,Poor+Hadjiliadis:Book08,Montgomery:Book2012,Ryan:Book2011}, etc.).

A sequential change-point detection procedure is defined as a stopping time, $\T$,
adapted to the observed data, $\{X_n\}_{n\ge1}$, that constitutes a rule to stop
and declare that (apparently) a change is in effect. In the basic change-point
detection problem, the observations are independent throughout the entire period
of surveillance with completely specified pre- and post-change distributions.
Under these assumptions, the problem is well-understood, and has been solved
under a variety of criteria. For a recent survey,
see, e.g.,~\cite{Tartakovsky+Moustakides:SA2010,Polunchenko+Tartakovsky:MCAP2012}. In general, two solutions stand
out: Page's~\citeyearpar{Page:B54} Cumulative Sum (CUSUM) chart and the
Shiryaev--Roberts (SR) procedure due to the independent work of Shiryaev~\cite{Shiryaev:SMD61,Shiryaev:TPA63} and Roberts~\cite{Roberts:T66}; the name
``Shiryaev--Roberts'' appears to have been coined by~\cite{Pollak:AS85}. While
the CUSUM chart and the SR procedure are statistically different, they are
rivals of one another with each possessing its own optimality properties.
%See also~\cite{Pollak:IWSM09} for a brief account of the SR procedure's history.

More precisely,~\cite{Moustakides:AS86} and using a different
approach~\cite{Ritov:AS90} showed that the CUSUM chart is exactly (minimax)
optimal in the sense of~\cite{Lorden:AMS71}. That is, it minimizes the
Essential Supremum Average Detection Delay (ESADD)
\begin{align*}
\ESADD(\T) &\triangleq
\sup_{k}\,\biggl\{\esssup\EV_k\big[(\T-k)^{+}\mid X_1,\cdots,
X_{k}\big]\biggr\},\;\text{where}\; % hereafter}\;
x^+= %\max\{0,x\},
\max(0,x),
\end{align*}
over the class $\Delta(\gamma)\triangleq\{\T\colon\ARL(\T)\ge\gamma\}$ of
detection procedures for which the average run length (ARL) to false
alarm, $\ARL(\T)\triangleq\EV_\infty[\T]$, is no smaller than a desired level
$\gamma>1$. Here and throughout the sequel, $\EV_k$ denotes the expectation
assuming that the change-point, $\nu$, is at time instance $k$ with $k=\infty$
understood to mean that there is no change. A different exact optimality of the
CUSUM chart was established by~\cite{Poor:AS1998}, who considered an exponential
delay penalty.

\ignore{
Optimality of CUSUM with respect to an exponential penalty was established by~\cite{Poor}. \cite{Shiryaev:RMS1996}

In continuous time,~\cite{Beibel} demonstrated that the CUSUM chart is optimal for a Brownian motion with constant drift., in the Bayesian setting
of Ritov (1990), which also yielded optimality in Lorden’s sense, and by
Shiryayev (1996).}

On the other hand, as shown in~\cite{Pollak+Tartakovsky:SS09}, the SR procedure
is optimal with respect to the stationary average detection delay (STADD)
\begin{align*}
\STADD(\T)
&\triangleq
\sum_{k=0}^\infty \frac{\EV_k[(\T-k)^+]} {%\left/
\ARL(\T)} %\right.,
\end{align*}
%also within
over the class $\Delta(\gamma)$ for every $\gamma>1$. This measure arises in
the context of detecting changes that occur at a distant time-horizon. It was
first proposed by~\cite{Shiryaev:TPA63} and later also studied
by~\cite{Feinberg+Shiryaev:SD2006}. They established the SR procedure's
optimality for the continuous-time Brownian motion model. It is also worth
mentioning that the two procedures are asymptotically optimal (as $\gamma \to
\infty$) with respect to both performance measures, $\ESADD$ and $\STADD$,
even when the observations are not independent; see, e.g.,~\cite{Lai:IEEE-IT1998} and~\cite{Tartakovsky+Veeravalli:TPA05}.

The two procedures have been studied and compared against each other extensively. First,~\cite{Roberts:T66} considered detecting a change in the mean of a Gaussian
sequence and compared the two procedures with respect to $\EV_0[\T]$ through
Monte Carlo simulations. The SR procedure was found to be inferior to the CUSUM
chart. This is not surprising as for either procedure $\EV_0[\T]$ coincides with
Lorden's ESADD. Next,~\cite{Pollak+Siegmund:B85} offered a comprehensive asymptotic
study (by letting $\gamma\to\infty$, i.e., for low false alarm risk) for the
problem of detecting a change in the drift of Brownian motion. Their conclusion
was that CUSUM is better for changes that occur in the beginning (i.e., $\nu=0$),
while the SR procedure outperforms CUSUM with respect to the conditional average
detection delay $\ADD_k(\T)\triangleq\EV_k[\T-k|\T>k]$ as $k\to\infty$.
\cite{Srivastava+Wu:AS1993} also presented an asymptotic analysis (as
$\gamma\to\infty$) for Brownian motion, but for the stationary average detection
delay case.~\cite{Tartakovsky+Ivanova:PIT92} obtained accurate asymptotic
approximations for the ARL to false alarm and the average detection delay for the
processes with i.i.d.\ increments (in continuous- and discrete-time) and performed
a detailed numerical comparison of the CUSUM and SR procedures for an exponential
model.~\cite{Dragalin:S94} analyzed the CUSUM procedure for the problem of detecting
a change in the mean of a Gaussian sequence in terms of the ARL to false alarm
$\EV_\infty[\T]$ and the ARL to detection $\EV_0[\T]$ using an accurate numerical
technique. Using a Markov chain representation,~\cite{Mahmoud_2008} compared the
CUSUM chart with the SR procedure with respect to the $\ARL$ in the non-asymptotic
regime. The exact analytical characterization of the two performance measures was
recently made possible by~\cite{Moustakides+etal:SS11} through a set of integral
equations. These equations were in turn solved numerically using very simple
techniques yielding the final performance metrics. Subsequently,~\cite{Moustakides+etal:CommStat09} applied their technique to
compare the CUSUM chart against the SR procedure with respect to $\SADD(\T)
\triangleq \sup_{k}\ADD_k(\T)\triangleq \sup_k \EV_k[\T-k|\T>k]$ and $\STADD(\T)$.
They concluded that the CUSUM chart is superior to the SR procedure with respect
to $\SADD(\T)$, while with respect to $\STADD(\T)$ the situation is the exact
opposite: the SR procedure is superior to the CUSUM chart.

Observations are often serially correlated in industrial practice, and a change
could occur either due to a shift in the mean level or in the spectra
(autocovariance structure of the process) or both attributes simultaneously
(see different examples
in~\cite{Goldsmith_Whitfield_1961,Bagshaw_Johnson_1974,Ermer_Chow_Wu_1979,Basseville+Nikiforov:Book93}, etc.). Specifically, according to~\cite{Alwan_Roberts_1995}, over 70\% of
quality control phenomena involve autocorrelated data. Many works in the
literature have shown that when the traditional Shewhart, Exponentially
Weighted Moving Average (EWMA) and CUSUM charts
designed for i.i.d.\ observations are used with correlated data, they often
result in seriously misleading conclusions; e.g., see typical case-studies
in~\cite{Berthouex_et_al_1978}~and~\cite{Wardell_et_al_1992}.

Motivated by these observations, modified versions of the traditional control
charts accommodating serial correlations have been proposed in the literature;
see~\cite{Vasilopoulos_et_al_1978,Berthouex_et_al_1978,Ermer_Chow_Wu_1979,Alwan_Roberts_1988,Alwan_Roberts_1995} for some extensions along these lines.
Most of these procedures decompose correlated data into common cause effects
and residuals (innovations) that are mutually independent. Under certain
assumptions, the statistical properties of the residuals can be characterized.
Specifically, the limiting distribution of the likelihood ratio function under
either a mean shift or a spectral change in an AR process has been studied
and various approximations to the ARL based on these studies are obtained
in~\cite{Bagshaw_Johnson_1975,Picard_1985,Alwan_Roberts_1988,Harris_Ross_1991,Woodall_1992,yashchin_1993,Wardell_et_al_1994,Davis_Huang_Yao_1995,runger_1995}. Alternately, Timmer et.~al~\cite{timmer_1998} characterize the ARL of the CUSUM chart using
a Markov chain representation and Gombay and Serban~\cite{Gombay_2008,Gombay_2009} study its
performance using the efficient score vector representation.

In spite of this vast literature, a comparative study of the optimality
properties of either the CUSUM chart or the SR procedure has not been explored
in the AR setting in full generality. In the special case where only the mean
of the AR process changes, optimality of the CUSUM chart has been established
in~\cite{Moustakides:IT98}. Minimax optimality of the CUSUM chart under
certain general change models relevant in practice is explored
in~\cite{knoth_2012}. The goal of this work is to provide a comparative study
of the two procedures in a non-asymptotic setting, i.e., for an arbitrary
(finite) level of the false alarm risk, assuming {\it correlated} observations.

\ignore{

The work in~\cite{Kligene_1983,telksnys}
concerns the estimation of the changepoint in an AR process.

Nikiforov 1979 developed approximations for ARL of the CUSUM test under a general ARMA (p,q) model.

Specifically, we assume that the observations are sampled from a first-order
autoregressive process (AR) ``corrupted'' by temporally-independent noise. That is,
\begin{align*}
X_{n} &=
\mu + \lambda X_{n-1}+\varepsilon_{n},\; n\ge 1 ,
\end{align*}
where $\{\varepsilon_n\}_{n\ge1}$ are independent, and $X_0=0$. The change is
assumed to be in the correlation coefficient of the AR process ($\lambda$)
and/or the mean of the noise process.

{\bf todo: give a survey of results for CUSUM for correlated data. need to mention nikiforov's book, moustakides 1998 IEEE paper,etc.}
}

The approach pursued in this work to evaluate the performance of the CUSUM chart
and the SR procedure consists of two steps. First, the corresponding integral
(renewal) equations that govern the desired performance measures are established.
These equations are Fredholm (linear) integral equations of the second-kind with
constant limits of integration. These equations are known to rarely allow an
analytical closed-form solution. Thus, in the second step, we develop numerical
techniques for the approximate solution of Fredholm integral equations.
Contemporary numerical analysis distinguishes two principal approaches:
deterministic and Monte Carlo (with random sampling involved). The former
class consists of techniques that first obtain a finite-dimensional representation
of the equation (e.g., by employing a quadrature to discretize the integral, or
by projecting the desired function on to a suitable finite-dimensional basis), and
then solve the obtained system of linear equations using a standard numerical
solver. While these methods are generally practical and can offer great performance,
they are typically problem-specific. More importantly, deterministic methods
collapse if the variables are multi-dimensional, or if the support of the desired
function is not compact. On the other hand, the Monte Carlo counterparts do not
suffer from this problem. Since the AR case results in equations that are both
multi-dimensional and with non-compact support, we rely on Monte Carlo methodologies,
and develop a basic Markov Chain Monte Carlo (MCMC) scheme.

Historically, the idea
to use the Monte Carlo method (or random sampling) for the numerical solution of
mathematical problems appears to have been first suggested by J.\ von Neumann and
S.\ M.\ Ulam. Specifically, they were concerned with the problem of inverting a
matrix; see, e.g.,~\cite{Forsythe+Ulam:MT1950}. However, as the idea appears
trivial (in a certain sense), there is a tendency to overlook the potential power
of the technique.

The paper is organized as follows. In Section 2, we set the backdrop for this
paper by elaborating on the problem set-up, developing the notations, and
connecting it with prior results in this area. In Section 3, the KL number
between autoregressive processes is studied as a function of the pre- and post-change
model parameters. In Section 4, we derive the integral equations for the
performance metrics of interest and provide a simple numerical solution that
allows for efficient computation of the operating characteristics. In Section 5,
we present the results of our numerical studies and discuss the findings.
Section 6 summarizes and concludes the paper.
}

%+-----------------------------------------------------------------------------------------------+%
\section{Introduction}
\label{sec:Intro}

Sequential (quickest) change-point detection is concerned with the design and analysis of reliable statistical machinery for quick detection of potential changes in the attributes of a random process. Specifically, the process is assumed to be continuously monitored through observations made sequentially, and should their behavior suggest that the process may have statistically changed, the aim is to conclude so within the fewest observations possible, subject to a tolerance level on the risk of false alarm.
A sequential change-point detection procedure is a %stopping time ($\tau$),
%adapted to the observations ($\{X_n\}_{n\ge1}$),
stopping-time adapted to the observations, and provides a rule to stop and
declare that a change may be in effect. This problem finds applications in many
branches of science and engineering: quality control, biostatistics, economics,
seismology, %forensics, navigation, anid
and communication systems; see,
e.g.,~\cite{Basseville+Nikiforov:Book93,Kenett+Zacks:Book1998,Montgomery:Book2012}.

In the simplest change-point detection problem, the observations are independent
and identically distributed (i.i.d.) with known pre- and post-change
distributions. In this setting, the problem is well-understood and has been
solved to optimize different objectives. For a recent survey,
see~\cite{Tartakovsky+Moustakides:SA2010,Polunchenko+Tartakovsky:MCAP2012}
and references therein. In general, two solutions stand out:
Page's Cumulative Sum (CUSUM) chart~\cite{Page:B54} and the
Shiryaev--Roberts (SR) procedure due to the independent work
of Shiryaev~\cite{Shiryaev:SMD61,Shiryaev:TPA63}
and Roberts~\cite{Roberts:T66}. While the two procedures are statistically different,
both are optimal under different sets of criteria. In particular, Moustakides~\cite{Moustakides:AS86} and Ritov~\cite{Ritov:AS90} have shown that the CUSUM chart is exactly minimax-optimal in the sense of minimizing the detection delay under the most unfavorable
set of observations and change-point. This type of minimax optimality was proposed by Lorden~\cite{Lorden:AMS71}.
On the other hand, Pollak and Tartakovsky~\cite{Pollak+Tartakovsky:SS09} showed that the SR procedure
is optimal in the stationary setting, a scenario more suitable for detecting changes
that occur at a distant time-horizon. Given that the Lorden criterion is
often conservative, Pollak and Siegmund~\cite{Pollak+Siegmund:AS1975} introduced a more reasonable metric of
detection delay under the most unfavorable change-point, but averaged over the
observations; see also~\cite{Pollak:AS85}. While the structure of the exactly optimal solution is unknown for the
Pollak criterion, both the CUSUM chart and SR procedure are asymptotically optimal
as the false alarm risk vanishes; see,
e.g.,~\cite{Tartakovsky+Veeravalli:TPA05}. Thus, there has been a good
justification for comparing the two procedures with each other.

This comparative analysis has been done extensively in the i.i.d.\ case and we
now present a brief sampling of this literature. The study in~\cite{Pollak+Siegmund:B85}
offered a comprehensive asymptotic analysis (in the low false alarm regime) for
the problem of detecting a change in the drift of Brownian motion. The
conclusion in~\cite{Pollak+Siegmund:B85} was that the CUSUM chart is better for
changes that occur in the beginning, whereas the SR procedure out-performs the CUSUM
chart for change at infinity. Dragalin~\cite{Dragalin:S94} developed an accurate
numerical technique to capture the performance of the CUSUM chart in detecting a change
in the mean of a Gaussian sequence. More recently, Moustakides et. al~\cite{Moustakides+etal:CommStat09,Moustakides+etal:SS11} have developed an exact analytical characterization of
the two procedures under either criteria through a set of {\em integral-equations}.
These equations are in turn solved numerically using simple computational techniques. Confirming the findings
of Pollak and Siegmund~\cite{Pollak+Siegmund:B85} and Pollak and Tartakovsky~\cite{Pollak+Tartakovsky:SS09}, these computations
show that the CUSUM chart is superior to the SR procedure under the Pollak criterion,
whereas the SR procedure is better in the stationary sense.

Despite the strong theoretical focus on the i.i.d.\ problem, observations are
often serially correlated in industrial practice with a first-order autoregressive
(AR(1)) process model being a good fit in many scenarios. A change could occur
either due to a shift in the mean level or in the correlation coefficient (of the
AR process) or both attributes simultaneously; see different examples
in~\cite{Basseville+Nikiforov:Book93,Kenett+Zacks:Book1998,Goldsmith_Whitfield_1961,Bagshaw_Johnson_1974,Ermer_Chow_Wu_1979,Steiner_et_al_2000}, etc. Many works in the literature have shown that when the traditional Shewhart,
Exponentially Weighted Moving Average (EWMA) and CUSUM charts designed for i.i.d.\
observations are used with AR processes, they result in seriously misleading
conclusions; e.g., see typical case-studies
in~\cite{Berthouex_et_al_1978,Wardell_et_al_1992,Alwan_Roberts_1995}.

Motivated by these observations, modified versions of the traditional control
charts accommodating serial correlations have been proposed;
see~\cite{Vasilopoulos_et_al_1978,Berthouex_et_al_1978,Ermer_Chow_Wu_1979,Alwan_Roberts_1988,Montgomery_Mastrangelo_1991,Lu_Reynolds_2001,Apley_Tsung_2002} for some extensions along these lines. Most of these procedures decompose the
correlated data into common cause effects and residuals (innovations) that are
mutually independent. Under certain assumptions, the statistical properties of
the residuals can be characterized. Specifically, the limiting distribution of
the likelihood ratio function under either a mean shift or a correlation change
in the AR process has been studied and various approximations to the average
run length (ARL) to false alarm %based on these studies
are obtained
in~\cite{Bagshaw_Johnson_1975,Picard_1985,Nikiforov_1986,Alwan_Roberts_1988,Harris_Ross_1991,Woodall_1992,yashchin_1993,Wardell_et_al_1994,runger_1995,Apley_Shi_1999}. In particular, Davis, Huang, and Yao~\cite{Davis_Huang_Yao_1995} has shown that the likelihood ratio
statistic converges weakly to the extreme value distribution and use this property
to characterize the performance of the CUSUM chart, whereas Timmer et.~al~\cite{timmer_1998}
estimate the ARL of the CUSUM chart using a Markov chain representation. The
studies in~\cite{Berkes_2009} and~\cite{Gombay_2009} use the efficient score vector
representation of the likelihood ratio statistic of weighted CUSUM charts to characterize
their performance.

In spite of this vast literature, optimality properties of the CUSUM chart have
been explored only under certain special settings and only up to first-order. For
example, Moustakides~\cite{Moustakides:IT98} has shown that the CUSUM test statistic reduces to
the i.i.d.\ statistic in the special case where only the mean of the AR process
changes and the CUSUM chart is thus optimal in the Lorden sense. First-order
optimality of the CUSUM chart and the SR procedure under more general observation
models has also been established; see
e.g.,~\cite{Lai:IEEE-IT1998,yakir_et_al_1999} and~\cite{Tartakovsky+Veeravalli:TPA05}.
First-order optimality of the CUSUM chart under certain general change-point models
relevant in practice is explored in~\cite{knoth_2012}. Nevertheless, a comparative
performance of the CUSUM chart with the SR procedure in the model parameter space
has not been explored in the AR setting in full generality. The focus of this
paper is on this task and we provide a comparative study of the two procedures in
the non-asymptotic setting with {\it correlated} observations. We are not aware of
any similar prior work in this area.

\ignore{
The closest relevant
work to this paper is that of~\cite{Mahmoud_2008} where a Markov chain representation
is used to study both procedures with respect to the steady-state ARL and signal
resistance in the non-asymptotic regime. Our contribution differs from that
of~\cite{Mahmoud_2008} on two counts: i) use of {\em integral-equations} to estimate
performance metrics of interest, and ii) use of the Kullback-Leibler (KL) divergence
to identify the best- and the worst-case post-change parameters in detecting the change.
}

\ignore{
The approach pursued here to evaluate the performance of the CUSUM chart
and the SR procedure consists of two steps. First, the {\em integral-equations}
that capture the desired performance metrics with either procedure are
established. These equations are Fredholm (linear) integral-equations of the
second-kind with constant limits of integration and rarely afford an analytical
closed-form solution. Thus, in the second step, we develop numerical techniques
for the approximate solution of Fredholm integral-equations. Contemporary
numerical analysis distinguishes two principal approaches: deterministic and
Monte Carlo (with random sampling involved). Deterministic techniques obtain
a finite-dimensional representation of the equation (e.g., by employing a
quadrature to discretize the integral, or by projecting the desired function on
to a suitable finite-dimensional basis), and then solve the obtained system of
linear equations using a standard numerical solver. While these methods are
practical, they are also problem-specific. More importantly, deterministic
methods collapse if the variables are multi-dimensional, or if the support of
the desired function is not compact. On the other hand, their Monte Carlo
counterparts do not suffer from these problems. Since the AR setting results in
equations that are both multi-dimensional as well as with non-compact support,
we rely on Monte Carlo methodologies and develop a Markov Chain Monte Carlo (MCMC)
scheme here.
Historically, the idea of using the Monte Carlo method %(or random sampling)
for the numerical solution of mathematical problems appears to have been first
suggested by J.\ von Neumann and S.\ M.\ Ulam. Specifically, both these pioneers
were concerned with the problem of inverting a matrix; see,
e.g.,~\cite{Forsythe+Ulam:MT1950}. In hindsight, while the Monte Carlo approach
appears simplistic and obvious, this paper makes a case for not overlooking the
power of the Monte Carlo approach.
}

This paper is organized as follows. In Section 2, we set the backdrop for this
paper by elaborating on the problem set-up, developing the notation, and
connecting it with prior results in this area. In Section 3, the KL number
between autoregressive processes is studied as a function of the pre- and post-change
model parameters. In Section 4, we derive the integral equations for the
performance metrics of interest and provide a simple numerical solution that
allows for efficient computation of the operating characteristics.
In Section 5,
we present the results of our numerical studies and discuss the findings.
Section 6 summarizes and concludes the paper.

%+-----------------------------------------------------------------------------------------------+%
\section{Problem Formulation and Preliminary Background}
\label{sec:problem-formulation+background}

The aim of this section is to formally state the problem, present the CUSUM chart and the Shiryaev--Roberts (SR) procedure, both set up appropriately, and review their (asymptotic)
optimality properties.

%We begin with the problem formulation.
Let $\{X_n\}_{n\ge0}$ be an observation sequence formed sequentially from the output of an AR(1) process driven by temporally-independent standard Gaussian noise $\{\varepsilon_n\}_{n \ge 1}$, i.e., $\varepsilon_n \sim \mathcal{N}(0,1)$, $n\ge1$, and %$\varepsilon_i \independent \varepsilon_j$ for
$\varepsilon_i$ is independent of $\varepsilon_j$ if
$i\neq j$. Let the statistical nature of the observation sequence, $\{X_n\}_{n\ge0}$, be temporally
piece-wise:
\begin{align}\label{eq:def-AR-eqn}
X_n
&=
\begin{cases}
\mu_{\infty}+\lambda_{\infty} X_{n-1}+\varepsilon_n,\;\text{for $1\le n\le\nu$;}\\
\mu_{0}+\lambda_{0} X_{n-1}+\varepsilon_n,\;\text{for $n\ge\nu+1$},
\end{cases}
\end{align}
where $\mu_d\in\mathbb{R}$ and $\lambda_d$ is such that $\abs{\lambda_d}<1$.
$\mu_d$ and $\lambda_d$ are known for both $d = \{0,\infty\}$, $X_0=x_0\in\mathbb{R}$
is a given deterministic value, and $\nu$ is a parameter discussed next. The data model
in~\eqref{eq:def-AR-eqn} says that the observation sequence, $\{X_n\}_{n\ge0}$, as it
is formed in a one-observation-at-a-time manner, undergoes a spontaneous change in its
statistical nature. The quickest change-point detection problem is to as quickly and
reliably as possible establish that the statistical nature has changed. The challenge
is that the time instance $\nu$, which is referred to as the {\em change-point}, is
{\em not} known in advance. A solution to the problem is a detection procedure identified
with a $\{X_n\}_{n\ge0}$-adapted stopping time, $\T$, and a ``good'' procedure is one
whose detection delay cost is the smallest possible within a given tolerable range of
the false alarm risk.

\begin{remark}
A noteworthy feature of the AR(1) model in~\eqref{eq:def-AR-eqn} is that
$X_{0}, X_1, \cdots, X_{\nu}$ (the pre-change observations) and
$X_{\nu+1}, X_{\nu+2}, \cdots$ (the post-change observations) are {\em not} independent
as $X_{\nu+1}$ (the first data point affected by change) is correlated with $X_{\nu}$
(the final data point not yet affected by change). This is different from the general
$\mathrm{AR}(m)$ model considered, e.g., in~\cite{Nikiforov_1986}, where the pre- and
post-change pieces of the observations sequence are assumed independent.
\end{remark}

More specifically, in this work, we will take the {\em minimax approach}, i.e., regard the change-point, $\nu$, as unknown (but not random); for an overview of other approaches, see, e.g.,~\cite{Tartakovsky+Veeravalli:TPA05,Tartakovsky+Moustakides:SA2010,Polunchenko+Tartakovsky:MCAP2012,Polunchenko+etal:JSM2013}. From now on, the notation $\nu=0$ is to be understood as the case where the parameters of $\{X_n\}$ are $\mu_0$ and $\lambda_0$ for all $n\ge1$, i.e., the data, $\{X_n\}_{n\ge1}$, are affected by change {\em ab initio}. Similarly, the notation $\nu=\infty$ is to mean that the parameters of $\{X_n\}$ are $\mu_\infty$ and $\lambda_\infty$ for all $n\ge1$.

Let $\mathcal{H}_k\colon\nu=k$ %, $0\le k<\infty$,
be the hypothesis that the change-point, $\nu$, is at epoch $k$, $0\le k<\infty$.
Let $\mathcal{H}_{\infty}\colon\nu=\infty$ be the hypothesis that $\nu=\infty$, i.e.,
that the process' parameters never change. Further, let $\Pr_k$ (and $\EV_k$) be the
probability measure (and the corresponding
expectation) given a known change-point $\nu=k$, where $0\le k\le\infty$. In particular,
$\Pr_\infty$ ($\EV_\infty$) is the probability measure (corresponding expectation) assuming
that the AR(1) process' parameters are $\mu_\infty$ and $\lambda_\infty$ for all $n\ge1$,
and never change (i.e., $\nu=\infty$). Likewise, $\Pr_0$ ($\EV_0$) is the probability
measure (corresponding expectation) assuming that the AR(1) process' parameters are $\mu_0$ and
$\lambda_0$ for all $n\ge1$ (i.e., $\nu=0$).

Under the minimax approach, the standard method to gauge the false alarm risk is through Lorden's~\cite{Lorden:AMS71} Average Run Length (ARL) to false alarm; it is defined as $\ARL(\T)\triangleq\EV_\infty[\T]$, and captures the average number of observations sampled before a false alarm is sounded. Let
\begin{align*}
\Delta(\gamma)
&\triangleq
\Bigl\{\T\colon\ARL(\T)\ge\gamma\Bigr\},\;\gamma>1,
\end{align*}
denote the class of procedures with the ARL to false alarm of at least $\gamma>1$, a pre-selected tolerance level. For the detection delay cost, we will use the criterion proposed by Pollak~\cite{Pollak:AS85}; see also~\cite[Section~5]{Pollak+Siegmund:AS1975}. It is known as the Supremum (conditional) Average Detection Delay (SADD), and is defined as,
\begin{align}
\SADD(\T)
&\triangleq
\sup_{0\le k<\infty}
\ADD_k(\T)\text{ with }\ADD_k(\T)\triangleq\EV_k[\T-k|\T>k].
\end{align}

The overarching problem of interest in this framework is to find
$\T_{\mathrm{opt}}\in\Delta(\gamma)$ that minimizes $\SADD(\T)$ over all
$\T\in\Delta(\gamma)$ for all $\gamma>1$, or more succinctly, to
\begin{align}\label{eq:def-Pollak-minmax-problem}
\text{find}\,\T_{\mathrm{opt}}
&=
\arginf_{\T\in\Delta(\gamma)}\SADD(\T),
\end{align}
for every $\gamma>1$. This problem is still open. Even in the basic i.i.d.\ case, only a
partial solution has been offered so far~\cite{Polunchenko+Tartakovsky:AS10,Tartakovsky+Polunchenko:IWAP10,Moustakides+etal:SS11,Tartakovsky+etal:TPA2012}. Specifically, as shown in~\cite{Polunchenko+Tartakovsky:AS10,Tartakovsky+Polunchenko:IWAP10}, the so-called generalized Shiryaev--Roberts (GSR) procedure (due to~\cite{Moustakides+etal:SS11}) is exactly SADD-optimal under two specific i.i.d. scenarios. This result was then extended in~\cite{Tartakovsky+etal:TPA2012} where, under a general i.i.d.\ scenario, the GSR procedure was demonstrated to minimize the SADD asymptotically, as $\gamma\to\infty$, to within an
$o(1)$ term; here $o(1) \to 0$, as $\gamma\to\infty$. However, beyond the basic i.i.d. case,
not much progress has been made so far, and only the asymptotic theory has been outlined.
In the general case, it was demonstrated in~\cite{Lai:IEEE-IT1998} that under certain regularity conditions the CUSUM chart and the SR procedure are asymptotically first-order optimal.

%We also remark that problem~\eqref{eq:def-Pollak-minmax-problem} is to set the stage for this work.

In the AR(1) case, the joint cumulative distribution functions (c.d.f.'s) of the sample $\boldsymbol{X}_{1:n}\triangleq(X_1,\ldots,X_n)$, $n\ge1$, under the ${\mathcal H}_{\infty}$
and ${\mathcal H}_k$ hypotheses are given by
\begin{align*}
\Pr(\boldsymbol{X}_{1:n}|\mathcal{H}_\infty)
&=
\prod_{j=1}^n\Pr_\infty(X_j|X_{j-1})
\;\text{and }
\Pr(\boldsymbol{X}_{1:n}|\mathcal{H}_k)=
\prod_{j=1}^k\Pr_\infty(X_j|X_{j-1})\prod_{j=k+1}^n\Pr_0(X_j|X_{j-1}),
\end{align*}
where here (and throughout the sequel), it is to be understood that
$\prod_{i}^{j}\equiv1$ whenever $i>j$.
Consequently, for the respective likelihood ratio (LR), we obtain
\begin{align*}
\LR_{1:n,\nu=k}
&\triangleq
\dfrac{d\Pr(\boldsymbol{X}_{1:n}|\mathcal{H}_k)}{d\Pr(\boldsymbol{X}_{1:n}|\mathcal{H}_\infty)}
%\\ &
= \prod_{j=k+1}^n\LR_j(X_j,X_{j-1}),
\end{align*}
where
\begin{align}\label{eq:instant-LR-formula}
\begin{split}
\LR_n(X_n,X_{n-1})
&\triangleq
\exp\Biggl\{\left(X_n-\dfrac{1}{2}\bigl[X_{n-1}(\lambda_0+\lambda_\infty)+(\mu_0+\mu_\infty)\bigr]\right)\\
&\qquad\qquad\qquad\qquad\Bigl[X_{n-1}(\lambda_0-\lambda_\infty)+(\mu_0-\mu_\infty)\Bigr]\Biggl\},\;n\ge1,
\end{split}
\end{align}
is the ``instantaneous'' LR for the $n$-th data point, $X_n$, {\em conditioned} on the $(n-1)$-th
data point, $X_{n-1}$; for notational brevity, we will also refer to $\LR_n(X_n,X_{n-1})$ as simply $\LR_n$, unless it is necessary to stress that it is a function of $X_{n}$ and $X_{n-1}$.

\begin{remark}
As a special case of the AR(1) model in~\eqref{eq:def-AR-eqn}, suppose, for the moment, that the change is {\em only} in the drift and the correlation coefficient is {\em not} affected, i.e., let $\lambda_{\infty}=\lambda_0\,(\triangleq\lambda)$, but $\mu_\infty\neq\mu_0$; then the LR formula given above reduces to
\begin{align*}
\LR_{n}
&=
\exp\left\{ (\mu_0-\mu_{\infty})\left[\tilde{\varepsilon}_{n}-\dfrac{\mu_0+\mu_{\infty}}{2}\right]\right\},\;\;n\ge1,
\end{align*}
where $\tilde{\varepsilon}_{n}\triangleq X_{n}-\lambda X_{n-1}$, $n\ge1$. This is easily recognized as the LR formula for the well-studied i.i.d.\ data model, i.e., when a sequence of independent unit-variance Gaussian random variables undergoes an abrupt and persistent shift in the mean from $\mu_\infty$ to $\mu_0$; see, e.g.,~\cite{Tartakovsky+Ivanova:PIT92,Kenett+Zacks:Book1998,Mahmoud_2008,Tartakovsky+etal:IWSM2009,Moustakides+etal:CommStat09,Moustakides+etal:SS11,Polunchenko+etal:SA2014,Polunchenko+etal:ASMBI2014} among many other references. Hence, when $\lambda_{\infty}=\lambda_0$, the AR(1) model in~\eqref{eq:def-AR-eqn} is equivalent to the basic i.i.d.\ model, and the presence of correlation in the observations is completely irrelevant; cf.~\cite[Section~IIIB,~p.~1967]{Moustakides:IT98}. We shall therefore always require that at least the correlation coefficient is affected by the change, i.e., $\lambda_\infty\neq\lambda_0$.
\end{remark}

We now switch attention to the objective of this work, which is to study the SR procedure
for detecting change in the AR process parameters and benchmarking its performance with that
of the CUSUM chart. The SR procedure corresponding to a threshold $A$ is defined as
\begin{align}\label{eq:def-Tsr}
\TSR(A)
&\triangleq
\inf\bigl\{n\ge1\colon R_n\ge A\bigr\},
\;\;
\text{such that}
\;\;
\inf\{\varnothing\}=\infty,
\end{align}
where the SR decision statistic, $\{R_n\}_{n\ge0}$, is defined as
\begin{align}\label{eq:def-Rn}
R_n
&\triangleq
\sum_{k=1}^n\LR_{1:n,\nu=k}
=
\sum_{k=0}^{n-1} \prod_{i=k}^n\LR_i,\,\,n\ge1,\;\;\text{with}\;\; R_0=0,
\end{align}
where $\{\LR_n\}_{n\ge1}$ is as in~\eqref{eq:instant-LR-formula}, and we note the recursion
\begin{align}\label{eq:statistic-SR-def}
R_{n}
&=
(1+R_{n-1})\LR_{n},\;\;n\ge1,\;\;\text{with}\;\; R_0=0.
\end{align}

We remark that, as can be seen from~\eqref{eq:statistic-SR-def}, the SR detection statistic, $\{R_n\}_{n\ge0}$, starts off at zero, i.e., $R_0=0$. This is the original definition of Shiryaev~\cite{Shiryaev:SMD61,Shiryaev:TPA63} and Roberts~\cite{Roberts:T66}. However, if the detection statistic is given a specifically designed {\em headstart}, i.e., if $R_0=r\ge0$, then the performance of the procedure may improve substantially.
The headstarted version of the SR procedure (the generalized SR procedure) is
proposed in~\cite{Moustakides+etal:SS11} and studied in~\cite{Tartakovsky+etal:TPA2012}.
The basic proposal of giving headstart to a procedure was first proposed
in~\cite{Lucas+Saccucci:T90} in the context of the CUSUM chart.

Contrary to the quasi-Bayesian background of the SR procedure, the CUSUM chart is based on the maximum likelihood argument, and its stopping time is defined as
\begin{align}\label{eq:def-Tcs}
\TCS(A)
&\triangleq
\inf\bigl\{n\ge1\colon V_n\ge A\bigr\},
\;\;
\text{such that}
\;\;
\inf\{\varnothing\}=\infty,
\end{align}
where the decision statistic, $\{V_n\}_{n\ge0}$, is given by
\begin{align}
V_n
&\triangleq
\max_{0 \le k\le n-1}\LR_{1:n,\nu=k},\;\;n\ge1,\;\;\text{with}\;\; V_0=0,
\end{align}
and we note the recursion
\begin{align}\label{eq:statistic-CS-def}
V_{n}
&=
\max\{1,V_{n-1}\}\LR_{n},\;\;n\ge1,\;\;\text{with}\;\; V_0=0.
\end{align}

As mentioned in the Introduction, a majority of the change-point detection theory developed
to date is restricted to the i.i.d.\ model, and is largely of asymptotic character. The
cornerstone of the asymptotic theory for the i.i.d.\ model is the condition
\begin{align}\label{eq:iid-asymptotic-opt-condition}
\dfrac{1}{n-k}\log(\LR_{1:n,\nu=k})
&=
\dfrac{1}{n-k}\sum_{j=k+1}^{n}\log(\LR_{j})
\xrightarrow[n\to\infty]{p}D(\Pr_0\parallel\Pr_\infty)\triangleq\KL,
\end{align}
to be valid under the probability measure $\Pr_k$ for all $0\le k<\infty$; cf.~\cite{Lai:JRSS95,Lai:IEEE-IT1998,Tartakovsky:USC-CAMS-TR1998}. The quantity $\KL\triangleq D(\Pr_0\parallel\Pr_\infty)$ is the Kullback--Leibler divergence or information number (see~\cite{Kullback+Leibler:AMS1951}) defined as
\begin{align}
\KL
&\triangleq
D(\Pr_0\parallel\Pr_\infty)
\triangleq
\lim_{n\to\infty}\dfrac{1}{n}
\int\log\left(\dfrac{d\Pr_0(\boldsymbol{X}_{0:n})}{d\Pr_{\infty}(\boldsymbol{X}_{0:n})}\right)\,d\Pr_0(\boldsymbol{X}_{0:n}),
\end{align}
and it can be interpreted as the {\em directional} distance from the probability measure $\Pr_0$ to the probability measure $\Pr_\infty$. If the i.i.d.\ scenario is such that
$\KL$ is finite, then the condition in~(\ref{eq:iid-asymptotic-opt-condition}) is
automatically (over-)fulfilled by the Strong Law of Large Numbers. Thus, from~\cite{Pollak:AS85,Pollak:AS87} and an argument given in~\cite{Tar1}, it can be deduced
that under the i.i.d.\ assumption both the CUSUM chart and the SR procedure minimize the SADD to within an additive term of order $O(1)$ asymptotically, as $\gamma\to\infty$. That is, $\SADD(\TCS)-\inf_{\T\in\Delta(\gamma)}\SADD(\T)=O(1)$, as $\ARL(\TCS)=\gamma\to\infty$, and $\SADD(\TSR)-\inf_{\T\in\Delta(\gamma)}\SADD(\T)=O(1)$, as $\ARL(\TSR)=\gamma\to\infty$. This is
known as asymptotic second-order optimality.

However, except in the i.i.d.\ case, condition~\eqref{eq:iid-asymptotic-opt-condition} is too
weak to even guarantee that the moment sequence of the stopping time of interest is bounded
from above, let alone to ensure any asymptotic optimality thereof. Hence, unless condition~\eqref{eq:iid-asymptotic-opt-condition} is strengthened, it is not feasible to extend
the asymptotic theory for the i.i.d.\ model to the general non-i.i.d.\ case. This strengthened condition has been obtained in~\cite{Lai:JRSS95,Lai:IEEE-IT1998,Tartakovsky:USC-CAMS-TR1998,Tartakovsky:WCBSMS2000}  and the condition we need is
\begin{align*}
\dfrac{1}{n-k}\log( \LR_{1:n,\nu=k})
&=
\dfrac{1}{n-k}\sum_{j=k+1}^{n}\log(\LR_{j})
\xrightarrow[n\to\infty]{a.s.}\KL
\end{align*}
under $\Pr_k$ for every $k$, $0\le k<\infty$, with the constraint on the rate of convergence:
\begin{align}
\sum_{n=k+1}^\infty
\Pr_k\big( \vert \log(\LR_{1:n,\nu=k})  - (n-k)\KL \vert>(n-k)\epsilon\big)<\infty,\text{for every}
{\hspace{0.03in}} \epsilon>0,
\end{align}
and every $0\le k<\infty$. Together, these two conditions are known as complete convergence~\cite{Hsu+Robbins:NAS-USA1947}.

The complete convergence condition is not restrictive, and is generally met in practice; in particular, the condition is true for correlated Markov processes, such as the AR(1) model in~\eqref{eq:def-AR-eqn}. In fact, for the AR(1) model in~\eqref{eq:def-AR-eqn}, the complete convergence condition has already been verified in~\cite[Example~1,~p.~2455]{Dragalin+etal:IEEE-IT1999}, although in a slightly different context. Hence, it is safe to deduce from~\cite{Lai:JRSS95,Lai:IEEE-IT1998,Tartakovsky:USC-CAMS-TR1998,Tartakovsky:WCBSMS2000}, that both the CUSUM chart and the SR procedure are asymptotically first-order optimal as $\ARL(\T)=\gamma\to\infty$.
That is,
\begin{align*}
\SADD(\TCS)\sim\SADD(\TSR)\sim\inf_{\T\in\Delta(\gamma)}\SADD(\T)
&\ge
\dfrac{\log\gamma}{\KL}[1+o(1)],
\end{align*}
where $o(1)\to0$, as $\gamma\to\infty$.

To conclude this section, we point out that the KL number, $\KL$, is the key in
understanding what one can expect (performance-wise) from a detection procedure, be it the CUSUM chart or the SR procedure. In particular, the higher the KL number, $\KL$, the lower the SADD, i.e., the better the performance. It is therefore worthwhile to analyze the effect that each of the four parameters of the AR(1) model in~\eqref{eq:def-AR-eqn} has on the KL number. This analysis is undertaken in the next section, and, in particular, it is shown that the nature of the dependence of the KL number on each of the four parameters may be counterintuitive.

%+-----------------------------------------------------------------------------------------------+%
\section{Analysis of the Kullback--Leibler Information Number}
\label{sec:KL}

%As made apparent in the preceding section, the KL number, $\KL$, associated with a given change-point data model is essential to predict the behavior of any detection procedure set up for that model. Specifically, t
The KL number captures the discrimination between the post- and pre-change hypotheses, and is thus a measure of the detectability of the change. This section's aim is take a careful look at the KL
number of the AR(1) model in~\eqref{eq:def-AR-eqn}.

\begin{proposition}
The KL number, $\KL$, for the AR(1) model in~\eqref{eq:def-AR-eqn} is given by the formula
\begin{align}\label{eq:AR-data-model-KL-formula}
\begin{split}
\KL
&\triangleq
\KL(\mu_\infty,\mu_0,\lambda_\infty,\lambda_0)\\
&=
\dfrac{1}{2} \cdot \dfrac{(\lambda_0-\lambda_{\infty})^2}{1-\lambda_0^2}
+\frac{(1-\lambda_\infty)^2}{2} \cdot
\left[\dfrac{\mu_0}{1-\lambda_0}-\dfrac{\mu_{\infty}}{1-\lambda_\infty}\right]^2.
%\left/\,\,\dfrac{1}{1-\lambda_\infty^2}\right..
\end{split}
\end{align}
\end{proposition}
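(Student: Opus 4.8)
The plan is to realize $\KL$ as a Cesàro limit of one-step expected log-likelihood ratios and then evaluate it explicitly under $\Pr_0$ using the AR(1) recursion. Since $X_0=x_0$ is deterministic and shared by both hypotheses, the path likelihood ratio $d\Pr_0(\boldsymbol{X}_{0:n})/d\Pr_\infty(\boldsymbol{X}_{0:n})$ equals $\prod_{j=1}^{n}\LR_j$ with $\LR_j$ as in~\eqref{eq:instant-LR-formula}, so that
\begin{align*}
\KL
&=
\lim_{n\to\infty}\dfrac{1}{n}\sum_{j=1}^{n}\EV_0\bigl[\log\LR_j\bigr].
\end{align*}

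Next I would substitute the post-change recursion $X_n=\mu_0+\lambda_0 X_{n-1}+\varepsilon_n$ (valid under $\Pr_0$) into~\eqref{eq:instant-LR-formula}. Writing $a\triangleq\lambda_0-\lambda_\infty$ and $b\triangleq\mu_0-\mu_\infty$, the first factor inside the exponent collapses to $\varepsilon_n+\tfrac12(aX_{n-1}+b)$, whence $\log\LR_n=\varepsilon_n(aX_{n-1}+b)+\tfrac12(aX_{n-1}+b)^2$. Taking $\EV_0$ and using that $\varepsilon_n$ is independent of $X_{n-1}$ with $\EV_0[\varepsilon_n]=0$ removes the cross term, leaving $\EV_0[\log\LR_n]=\tfrac12\,\EV_0\bigl[(aX_{n-1}+b)^2\bigr]$.

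The remaining ingredient is the asymptotic first two moments of $X_{n-1}$ under $\Pr_0$. From the moving-average representation $X_n=\lambda_0^n x_0+\mu_0\sum_{i=0}^{n-1}\lambda_0^i+\sum_{i=1}^{n}\lambda_0^{n-i}\varepsilon_i$ and $\abs{\lambda_0}<1$ one gets $\EV_0[X_n]\to\mu_0/(1-\lambda_0)$ and $\Var_0(X_n)\to 1/(1-\lambda_0^2)$. Hence $\EV_0[\log\LR_n]$ converges as $n\to\infty$, and a convergent sequence shares its limit with its Cesàro averages, so with $m\triangleq\mu_0/(1-\lambda_0)$,
\begin{align*}
\KL
&=
\dfrac{1}{2}\left(\dfrac{a^2}{1-\lambda_0^2}+(am+b)^2\right).
\end{align*}
The claimed expression~\eqref{eq:AR-data-model-KL-formula} then follows from the identity $am+b=(\lambda_0-\lambda_\infty)\tfrac{\mu_0}{1-\lambda_0}+(\mu_0-\mu_\infty)=(1-\lambda_\infty)\bigl[\tfrac{\mu_0}{1-\lambda_0}-\tfrac{\mu_\infty}{1-\lambda_\infty}\bigr]$, obtained by placing $\mu_0$ over the common denominator $1-\lambda_0$.

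I expect the one genuinely delicate point to be the treatment of the initial transient: because $x_0$ is a fixed value rather than a draw from the stationary law, the per-step quantity $\EV_0[\log\LR_n]$ really does depend on $n$, so one must either invoke (as above) convergence of this sequence together with the fact that Cesàro means inherit the limit, or argue directly that the $n$-dependent transient terms are summable and hence drop out after division by $n$ in the definition of $\KL$. The rest is routine: the collapse of the exponent, the vanishing of the cross term, and the final algebraic simplification that cleanly separates the $\lambda$-contribution from the $\mu$-contribution.
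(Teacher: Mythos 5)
Your proof is correct and follows essentially the same route as the paper's: both compute $\KL$ directly from its definition as a Cesàro limit of the expected per-step log-likelihood ratio and evaluate that limit using the stationary moments of the AR(1) process under $\Pr_0$ together with the fact that Cesàro means inherit limits. The only difference is organizational --- by substituting the post-change recursion into the exponent before taking expectations you reduce everything to $\tfrac12\EV_0[(aX_{n-1}+b)^2]$, thereby avoiding the lag-one cross moment $\EV_0[X_iX_{i-1}]$ that the paper computes separately and making the nonnegativity of the final formula manifest.
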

\begin{proof}
The desired formula can be derived directly from the KL number's definition, which is
\begin{align*}
\KL
&\triangleq
\lim_{n\rightarrow\infty}\dfrac{1}{n}
\int\log\left(\dfrac{d\Pr_0(\boldsymbol{X}_{0:n})}{d\Pr_{\infty}(\boldsymbol{X}_{0:n})}\right)\,d\Pr_0(\boldsymbol{X}_{0:n}).
\end{align*}

Now, since the Radon--Nikodym derivative ($d\Pr_0/d\Pr_\infty$) under the log in the
integral (in the right-hand side above) has already been computed
in~\eqref{eq:instant-LR-formula}, we obtain
\begin{align*}
\begin{split}
\KL
&=
\lim_{n\to\infty}\dfrac{1}{n}\int\sum_{i=1}^n\Biggl\{X_i(\mu_0-\mu_{\infty})
-X_{i-1}(\lambda_0\mu_0-\lambda_{\infty}\mu_{\infty}) \\
&\qquad\qquad \qquad\qquad
+X_i X_{i-1}(\lambda_0-\lambda_{\infty})-X_{i-1}^2\, \left( \dfrac{\lambda_0^2-\lambda_{\infty}^2}{2} \right) -\left( \dfrac{\mu_0^2-\mu_{\infty}^2}{2}\right) \Biggr\}d\Pr_0(\boldsymbol{X}_{0:n})\\
&=
(\mu_0-\mu_{\infty})\lim_{n\to\infty}\dfrac{1}{n}\sum_{i=1}^n\EV_{0}[X_i]
-(\lambda_0\mu_0-\lambda_{\infty}\mu_{\infty})\lim_{n\to\infty}\dfrac{1}{n}\sum_{i=1}^n\EV_{0}[X_{i-1}] \\
&\qquad\qquad \qquad\qquad
+ (\lambda_0-\lambda_{\infty})\lim_{n\to\infty}\dfrac{1}{n}\sum_{i=1}^n\EV_0[ X_i X_{i-1}]
\\ &\qquad\qquad\qquad\qquad\qquad\qquad\qquad
- \left( \dfrac{\lambda_0^2-\lambda_{\infty}^2}{2} \right)
\lim_{n\to\infty}\dfrac{1}{n}\sum_{i=1}^n\EV_0[ X_{i-1}^2]-
\left( \dfrac{\mu_0^2-\mu_{\infty}^2}{2} \right)\\
& =
[\mu_0(1-\lambda_0)-\mu_{\infty}(1-\lambda_\infty)]\lim_{n\to\infty}\dfrac{1}{n}\sum_{i=1}^n\EV_{0}[X_i]
%+\\ &\qquad\qquad\qquad\qquad
+ (\lambda_0-\lambda_{\infty})\lim_{n\to\infty}\dfrac{1}{n}\sum_{i=1}^n\EV_0[ X_i X_{i-1}]
\\ &\qquad\qquad \qquad\qquad\qquad\qquad\qquad
- \left( \dfrac{\lambda_0^2-\lambda_{\infty}^2}{2} \right)
\lim_{n\to\infty}\dfrac{1}{n}\sum_{i=1}^n\EV_0[ X_{i}^2]-
\left(\dfrac{\mu_0^2-\mu_{\infty}^2}{2} \right).
\end{split}
\end{align*}

Now, recall the basic result that if $\{a_n\}_{n\ge1}$ is a convergent sequence such that $\lim_{n\to\infty} a_n\triangleq a$, then its so-called Ce{\`s}aro mean sequence (see, e.g.,~\cite[Chapter~V,~Section~5.4,~p.~96]{Hardy:Book1991}), i.e., the sequence $\{b_n\}_{n\ge1}$ formed as
\begin{align*}
b_n
&\triangleq
\dfrac{1}{n}\sum_{i=1}^n a_i,\;\;n\ge1,
\end{align*}
is also convergent with the same limit, i.e., $\lim_{n\to\infty} b_n=a$; see, e.g.,~\cite[Chapter~V,~Section~5.7,~p.~100--102]{Hardy:Book1991}. Hence, with the aid of the $\Pr_d$-stationarity of the AR(1) model in~\eqref{eq:def-AR-eqn} under $d=\{0,\infty\}$, i.e., the assumption that $\vert\lambda_d\vert<1$, $d=\{0,\infty\}$, it is easily established that
\begin{align*}
\begin{split}
\lim_{n\to\infty}\dfrac{1}{n}\sum_{i=1}^n\EV_{0}[X_i]
&=
\lim_{n\to\infty}\EV_0[X_n]=\dfrac{\mu_0}{1-\lambda_0},\\
\lim_{n\to\infty}\dfrac{1}{n}\sum_{i=1}^n\EV_0[X_i^2]
&=
\lim_{n\to\infty}\EV_0[X_n^2]
=
\dfrac{\mu_0^2}{(1-\lambda_0)^2}+\dfrac{1}{1-\lambda_0^2},\\
\lim_{n\to\infty}\dfrac{1}{n}\sum_{i=1}^n\EV_0[X_iX_{i-1}]
&=
\lim_{n\to\infty}\EV_0[X_nX_{n-1}]
=
\dfrac{\mu_0^2}{(1-\lambda_0)^2}+\dfrac{\lambda_0}{1-\lambda_0^2}.
\end{split}
\end{align*}
The desired formula for the KL number follows once these computations are plugged into the
right-hand side of the expression above and simplifying it.
\end{proof}

As a ``sanity check'', it is easily verified from~\eqref{eq:AR-data-model-KL-formula} that $\KL\in[0,+\infty]$ for all possible values of the four model
parameters $\{\lambda_{\infty},\lambda_0,\mu_{\infty},\mu_{0}\}$. The smallest value $\KL=0$ is achieved for the case when there is no change at all, i.e., when $\mu_{\infty}=\mu_0$ and $\lambda_{\infty}=\lambda_0\,(\triangleq\lambda)$, $\lambda\in(-1,1)$. On the other hand, $\KL\to+\infty$, as $\lambda_0\to\pm1$.

We now review several special cases of the AR(1) model that will be considered in the sequel. To start with, observe that in the case when only the drift changes, i.e., when
$\lambda_{\infty}=\lambda_0\,(\triangleq\lambda)$, the obtained formula for $\KL$ reduces to
\begin{align}\label{kl_div2}
\KL
=
\dfrac{(\mu_0-\mu_{\infty})^2}{2},
\end{align}
which is independent of $\lambda$ and is a (symmetric) function only of $\vert\mu_0-\mu_{\infty}\vert$, i.e., of the discernibility in the drift of the process, $\{X_n\}_{n\ge0}$, between the pre- and post-change regimes. This is expected, and~\eqref{kl_div2} is also the same as the KL number
between observations that are both i.i.d.\
pre- and post-change; $\mathcal{N}(\mu_{\infty},1)$ and $\mathcal{N}(\mu_0,1)$, respectively. This suggests that the various change-point detection procedures in this case should be
similar in performance to detecting a change in i.i.d.\ processes. This result has been established in~\cite{Moustakides:IT98} where the CUSUM chart is shown to be optimal in the Lorden sense.

\ignore{
If the change is only in the correlation coefficient, i.e., if $\mu_{\infty}=\mu_{0}(=\mu)$ and $\lambda_0\neq\lambda_\infty$, then the expression for $\LLR_{n}$ simplifies as
\begin{align}
\LLR_{n+1}
&=
X_n \cdot \left( \frac{ \lambda_0 - \lambda_{\infty}} {\sigma^2} \right)
\cdot \left[ X_{n+1} - X_n \cdot \left( \frac{\lambda_0 + \lambda_{\infty}}{2}
\right) - \mu \right].
\end{align}

In the pre-change i.i.d.\ case ($\lambda_{\infty} = 0$) with $\mu_{\infty} = 0$,
we have
\begin{eqnarray}
Z_{n+1} = \frac{ \mu_0 }{2 \sigma^2} \cdot
\left( 2 X_{n+1} - 2\lambda_0 X_n - \mu_0 \right) +
\frac{ \lambda_0 X_n }{2 \sigma^2 } \cdot
\left( 2 X_{n+1} - \lambda_0 X_n \right) . \nonumber
%Z_{n+1} = \frac{\Big(2 X_{n+1} - \lambda_0 X_n - \mu_0 \Big) \cdot
%\Big(\lambda_0 X_n + \mu_0 \Big) }{2 \sigma^2}.
%\nonumber
\end{eqnarray}
}

\begin{figure*}[htb!]
\begin{center}
\begin{tabular}{cc}
%\begin{minipage}{5.5in}
%\centerline{
\includegraphics[height=2.1in,width=2.8in] {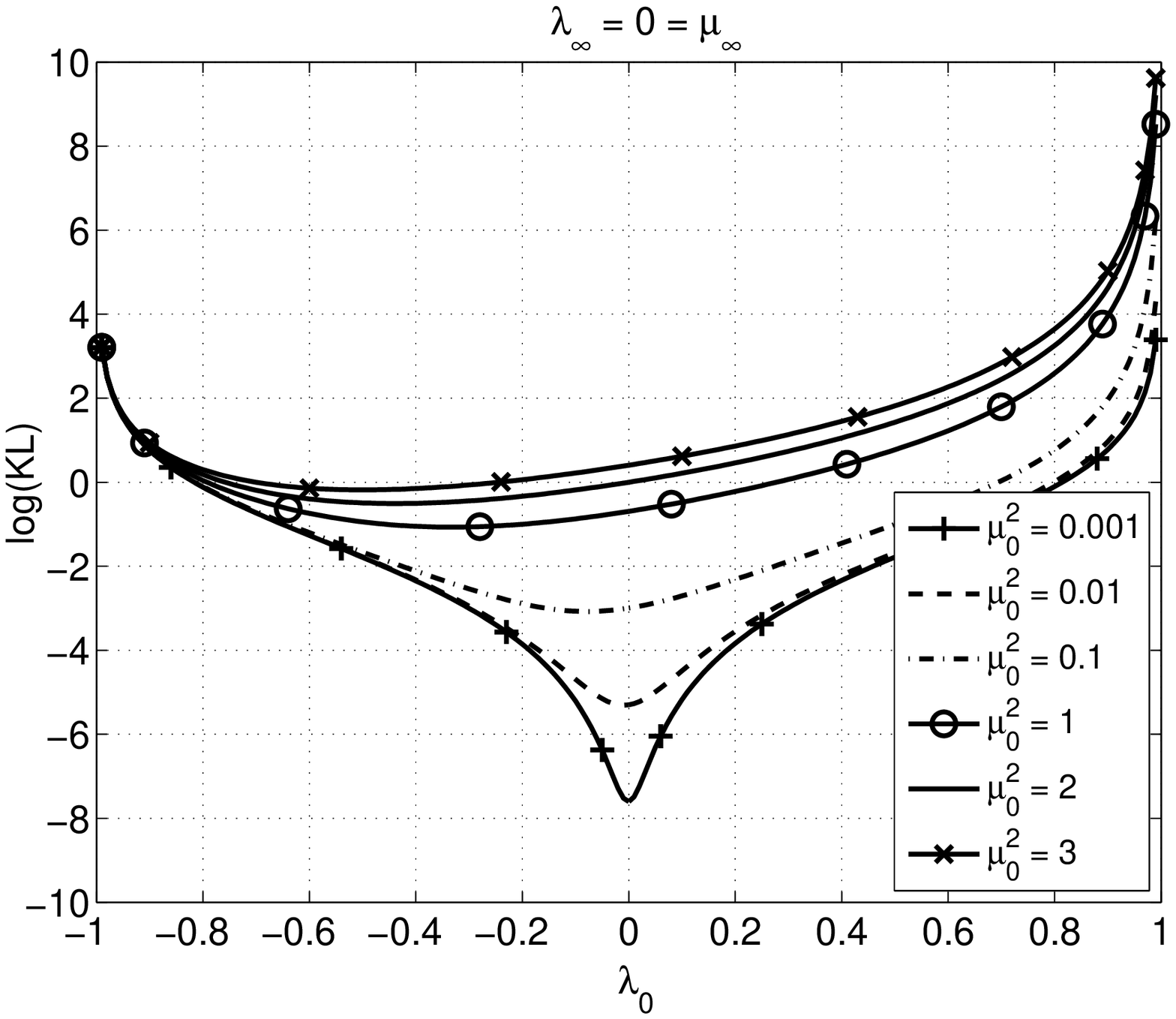}
&
\includegraphics[height=2.1in,width=2.8in] {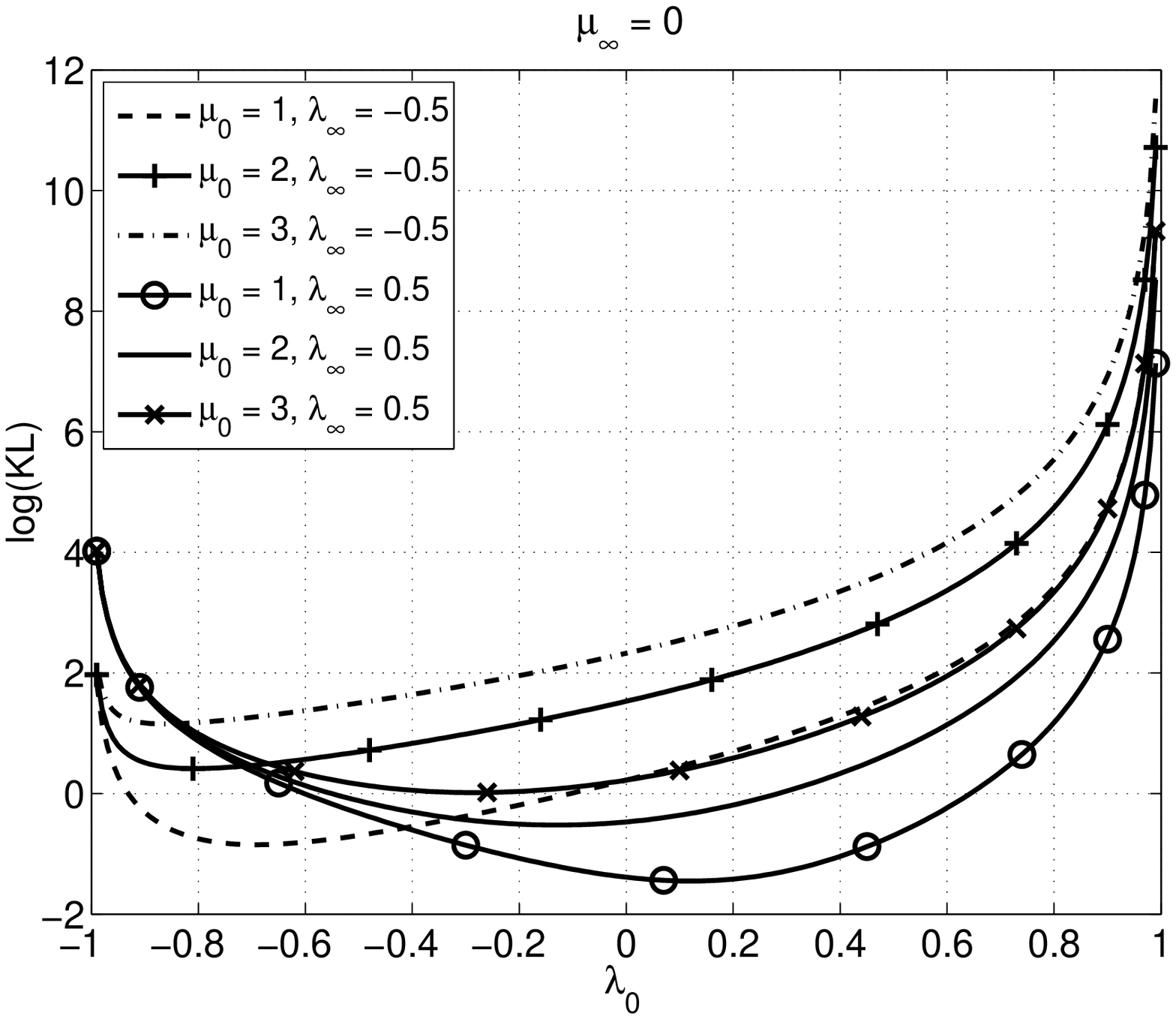}
%{kl_divergence_mu_123.eps}
\\ {\hspace{0.35in}} (a) & {\hspace{0.35in}} (b)
\\
\includegraphics[height=2.1in,width=2.8in] {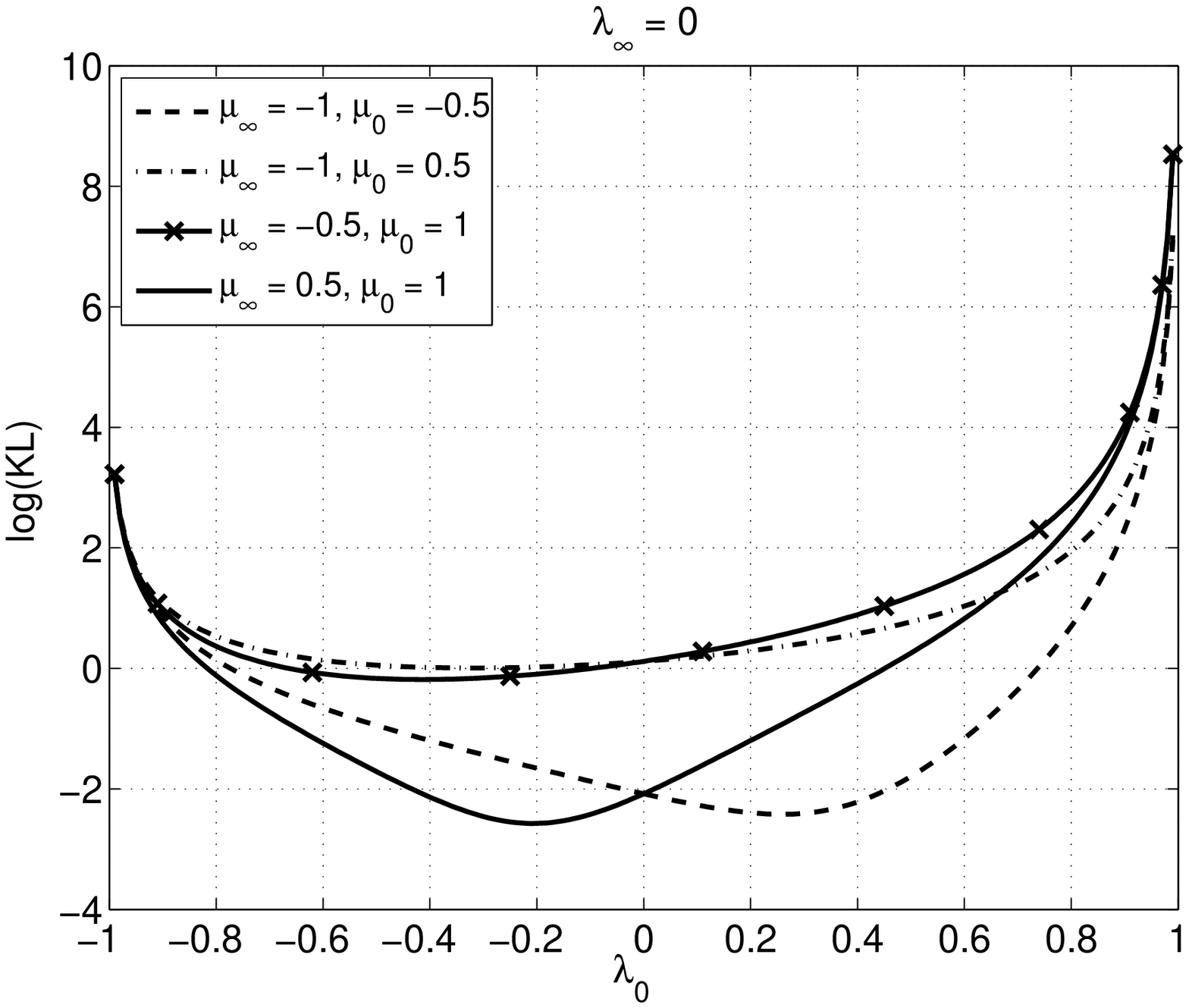}
&
\includegraphics[height=2.1in,width=2.8in] {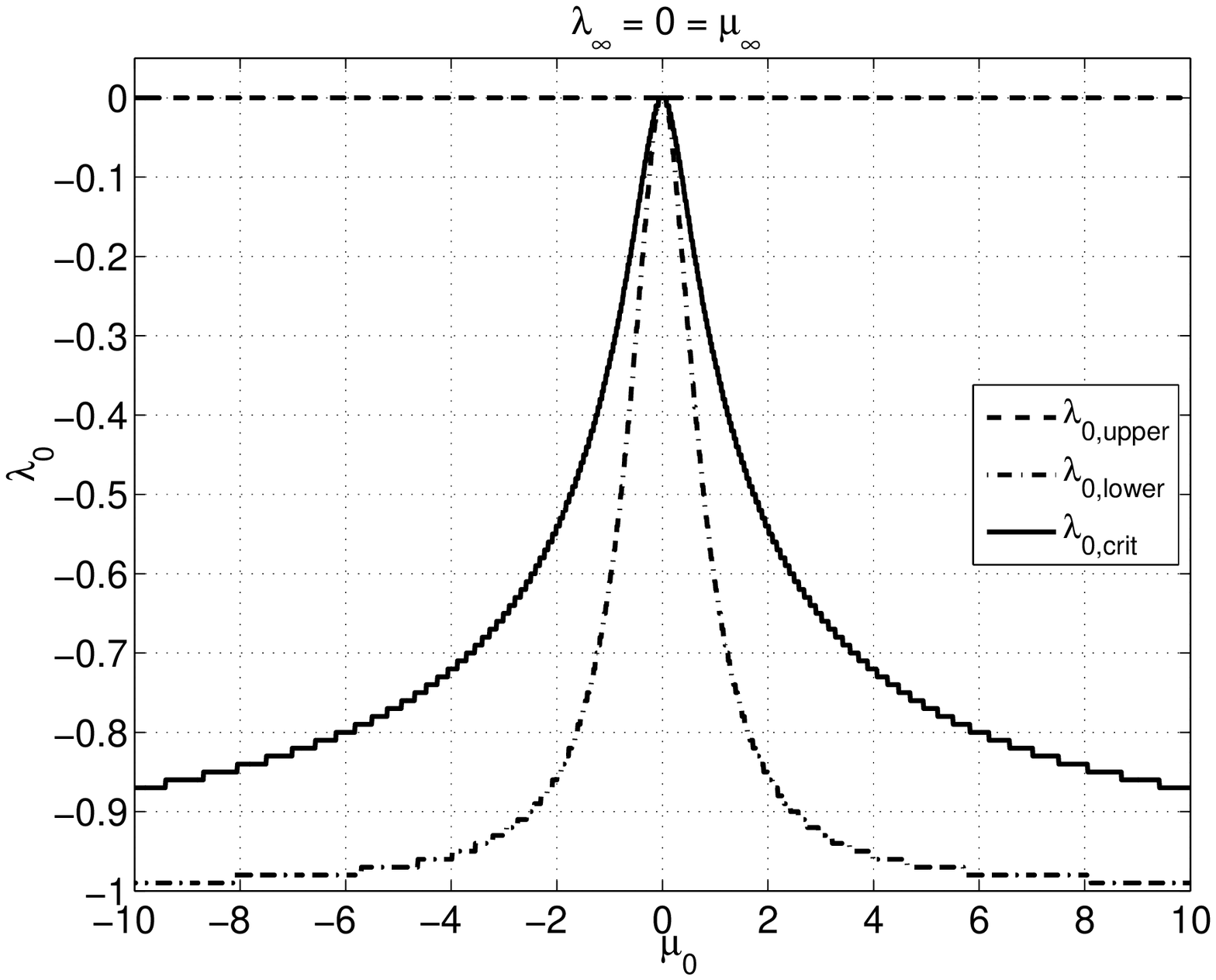}
\\
{\hspace{0.35in}} (c) & {\hspace{0.35in}} (d)
\\
\includegraphics[height=2.1in,width=2.8in] {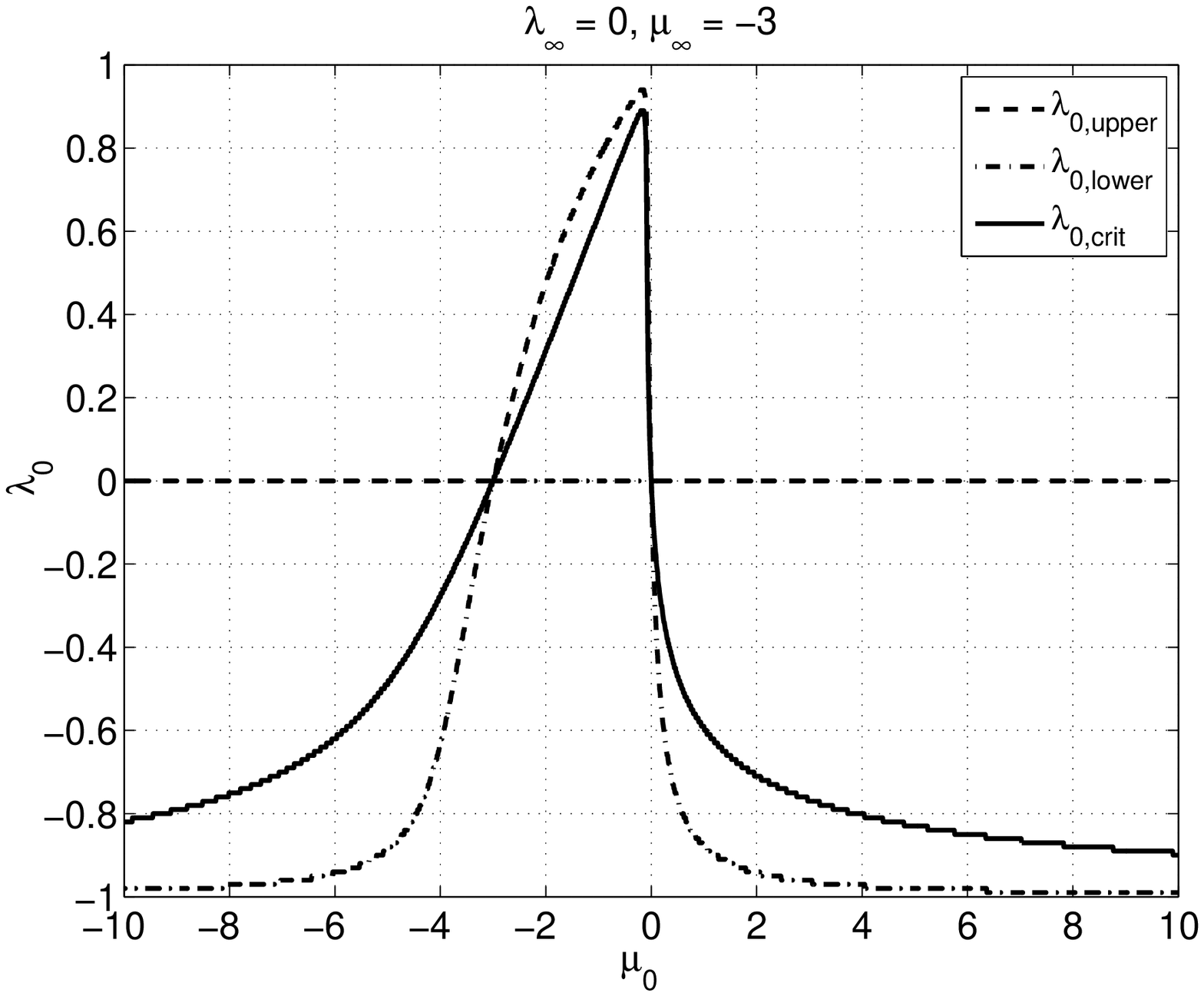}
&
\includegraphics[height=2.1in,width=2.8in] {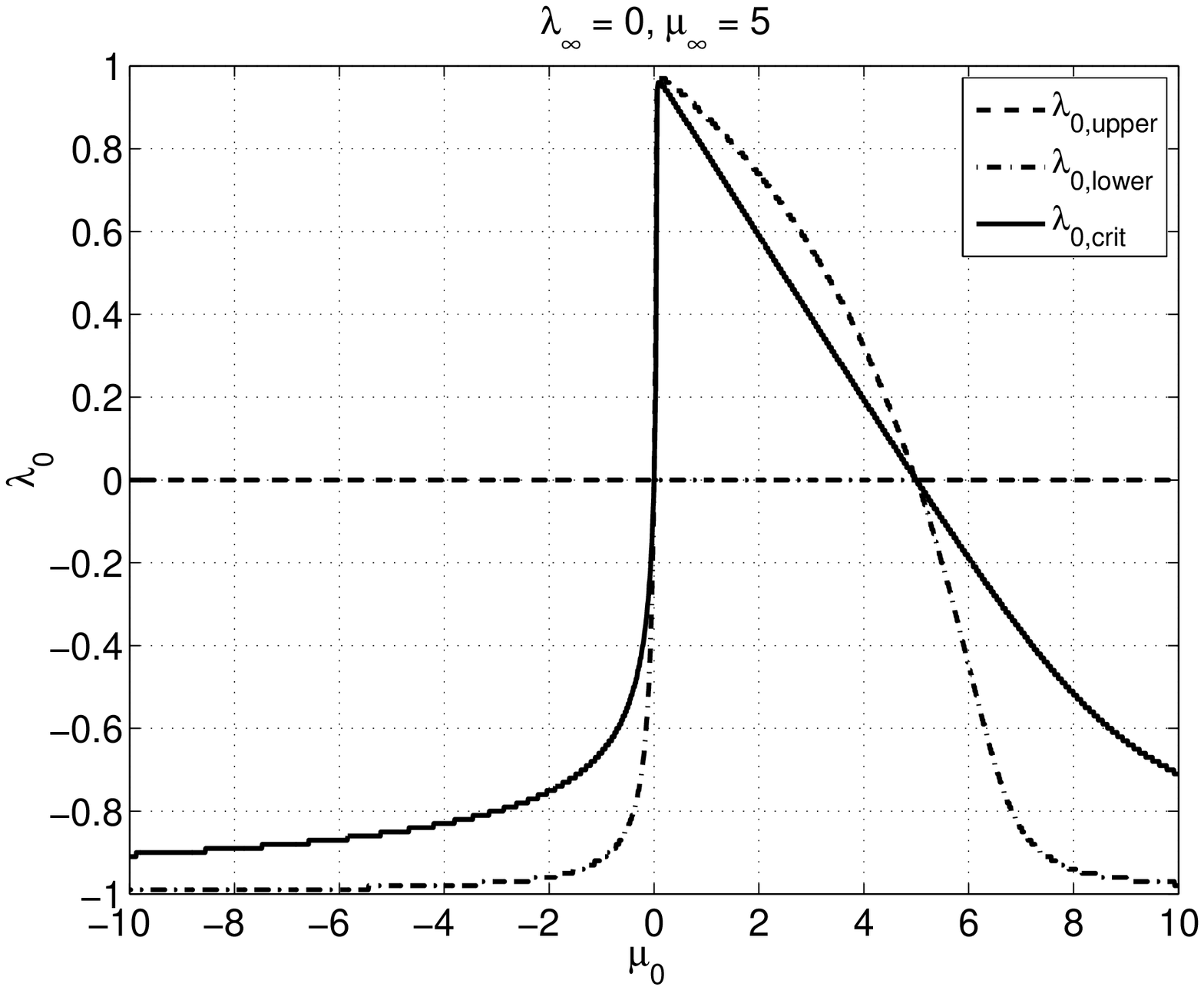}
\\
{\hspace{0.35in}} (e) & {\hspace{0.35in}} (f)
\end{tabular}
\caption{\label{fig_kl}
KL number for different sets of AR process parameters: (a) i.i.d.\
pre-change ($\lambda_{\infty} = 0$) with $\mu_{\infty} = 0$, (b) $\mu_{\infty} =
0$, and (c) $\lambda_{\infty} = 0$. %and (d) $\lambda_{\infty} = \pm 0.50$.}
(d)-(f) $\lambda_{0, \hsppp {\sf upper}}$, $\lambda_{0, \hsppp {\sf lower}}$
and $\lambda_{0, \hsppp {\sf crit}}$ as a function of $\mu_0$ in
the i.i.d.\ pre-change setting ($\lambda_{\infty} = 0$) for different
values of $\mu_{\infty}$.}
\end{center}
\vspace{-5mm}
\end{figure*}

The KL number in the i.i.d.\ pre-change case ($\lambda_{\infty} = 0$) with
$\mu_{\infty} = 0$ is plotted as a function of $\lambda_0$ for different values of $\mu_0$ in Fig.~\ref{fig_kl}(a).
Similarly, the KL number corresponding to the $\mu_{\infty} = 0$ and
$\lambda_{\infty} = 0$ settings are plotted as a function of $\lambda_0$
for different combinations of parameters in Figs.~\ref{fig_kl}(b) and~(c), respectively. These plots clearly
illustrate the existence of a certain {\em worst-case} (in the sense of
detectability) post-change correlation that leads to the smallest value of
$\KL$ conditioned on the other model parameters. To understand this behavior of $\KL$,
we now study these special cases more carefully.

In the case where $\mu_{\infty}=\mu_{0}(\triangleq\mu)$, $\KL$ reduces to
\begin{align}\label{kl_div3}
\KL
&\triangleq
\dfrac{(\lambda_0-\lambda_{\infty})^2}{2(1-\lambda_0^2)} \cdot
\left[1+\mu^2\, \left(\frac{1+\lambda_0}{1-\lambda_0} \right) \right].
\end{align}
%Clearly, ${\cal KL}$ is increasing in $\frac{\mu^2}{\sigma^2}$. It can also be
It can be checked that
\begin{align}
\dfrac{\partial\KL}{\partial\lambda_0}
&\triangleq
\dfrac{\lambda_0-\lambda_{\infty}}{(1-\lambda_0)^2}\cdot
\left[\mu^2\cdot\dfrac{1-\lambda_{\infty}}{1-\lambda_0}+\frac{1-\lambda_0\lambda_{\infty}}{(1+\lambda_0)^2} \right].
\end{align}
Thus, for a fixed $\lambda_{\infty}$ and $\mu$, ${\KL}$
in~(\ref{kl_div3}) decreases in $\lambda_0$ for all $\lambda_0 <
\lambda_{\infty}$ with ${\KL} = 0$ obtained at $\lambda_0 =
\lambda_{\infty}$. ${\KL}$ then starts increasing from $0$ as
$\lambda_0$ increases past $\lambda_{\infty}$. Specifically, a higher
correlation in the post-change process increases ${\KL}$ provided
that both the processes are positively correlated. %On the other hand,
%${\cal KL}$ as a function of $\lambda_{\infty}$ decreases for all
%$\lambda_{\infty} < \lambda_0$ attaining a minimal value of $0$ at
%$\lambda_{\infty} = \lambda_0$ and then increasing thereafter.

When the observations are i.i.d.\ pre-change ($\lambda_{\infty} = 0)$
with $\mu_{\infty} = 0$, ${\KL}$ reduces to
\begin{eqnarray}
\KL = \frac{1}{2(1 - \lambda_0)} \cdot \left[ \frac{\lambda_0^2}
{1 + \lambda_0} + \frac{ \mu_0^2}{1 - \lambda_0} \right].
%\nonumber
\label{kl_div4}
\end{eqnarray}
For a fixed $\mu_0$, it can be checked that
%While ${\cal KL}$ is increasing in $\frac{\mu_0^2}{\sigma^2}$, it can also
%be easily checked that
\begin{eqnarray}
\frac{ \partial \KL }{\partial \lambda_0} = \frac{1}{(1 - \lambda_0)^3}
\cdot \left[ \frac{\lambda_0(1 - \lambda_0)}{ ( 1 + \lambda_0)^2} + \mu_0^2
\right]. \nonumber
\end{eqnarray}
From the above equation, a trivial calculation shows that ${\KL}$ is globally
minimized at $\lambda_{0, \hsppp {\sf crit}}$, defined as,
\begin{eqnarray}
\lambda_{0, \hsppp {\sf crit}} \triangleq
\frac{ \sqrt{ 8 \mu_0^2 + 1} - (2 \mu_0^2 + 1)
}{ 2 ( \mu_0^2 - 1)} .
\nonumber
\end{eqnarray}
Further, the KL number of the correlated process is smaller than
$\mu_0^2/2$ (the KL number of the corresponding i.i.d.\ problem)
if and only if
%\begin{eqnarray}
%\lambda_0 \in \left[ \lambda_{0, \hsppp {\sf lower}}, \hsppp 0 \right]
%\nonumber
%\end{eqnarray}
$\lambda_0$ belongs to the interval $\left[ \lambda_{0, \hsppp {\sf lower}},
\hsppp \lambda_{0, \hsppp {\sf upper}} \right]$, where
\begin{eqnarray}
\lambda_{0, \hsppp {\sf lower}} \triangleq \max\left( -1, \frac{1}{2} \cdot
\left[1 - \sqrt{\frac{9\mu_0^2+1}{\mu_0^2+1} } \right]
\right) \hspp {\rm and} \hspp
\lambda_{0, \hsppp {\sf upper}} = 0. \nonumber
\end{eqnarray}
It can be checked that $-1 \leq \lambda_{0, \hsppp {\sf lower}}
%\left\{ \lambda_{0, \hsppp {\sf crit}} , \hsppp
%\lambda_{0, \hsppp {\sf lower}} \right\}
\leq \lambda_{0, \hsppp {\sf crit}} \leq \lambda_{0, \hsppp {\sf upper}}
= 0$ for all $\mu_0$ with both $\lambda_{0, \hsppp {\sf lower}}$
and $\lambda_{0, \hsppp {\sf crit}}$ decreasing in $\mu_0^2$.
Further, we also have
\begin{eqnarray}
\lambda_{0, \hsppp {\sf lower}} \rightarrow 0 \hspp {\rm and} \hspp
\lambda_{0, \hsppp {\sf crit}} \rightarrow 0 & {\rm as} &
\mu_0 \rightarrow 0
\nonumber \\
\lambda_{0, \hsppp {\sf lower}} \rightarrow -1 \hspp {\rm and} \hspp
\lambda_{0, \hsppp {\sf crit}} \rightarrow -1 & {\rm as} &
\mu_0 \rightarrow \infty.
\nonumber
\end{eqnarray}
See Fig.~\ref{fig_kl}(d) for a plot of the three quantities as a function
of $\mu_0$. In Figs.~\ref{fig_kl}(e)-(f), these three quantities are
plotted as a function of $\mu_0$ when $\mu_{\infty}= -3$
and $\mu_{\infty} = 5$, respectively. Note that, in general, the
behavior of all the three quantities is asymmetric in $\mu_0$.

More generally, if $\lambda_{\infty} \neq 0$, the behavior of
$\lambda_{0, \hsppp {\sf upper}}$, $\lambda_{0, \hsppp {\sf lower}}$
and $\lambda_{0, \hsppp {\sf crit}}$ as a function of
$\lambda_{\infty}$ for different $\mu_{\infty}$ and
$\mu_0$ values is presented in Figs.~\ref{fig_lambda_crit}(a)-(b).
From Figs.~\ref{fig_kl}(d)-(f) and Figs.~\ref{fig_lambda_crit}(a)-(b),
we observe that either $\lambda_{0, \hsppp {\sf lower}}$ or
$\lambda_{0, \hsppp {\sf upper}}$ equals $\lambda_{\infty}$ for every
case in the model parameter space. The observed trends depend on the precise
relationship between $\mu_{\infty}$, $\mu_0$ and $0$ and can be summarized
as follows:
\begin{eqnarray}
\begin{array}{ccl}
 %\left\{ \right.
0 < \mu_0 < \mu_{\infty} \hsp {\rm or} \hsp \mu_{\infty} < \mu_0 < 0
& \Longrightarrow & \lambda_{0, \hsppp {\sf lower}} = \lambda_{\infty}
\\
\mu_0 < 0 < \mu_{\infty} \hsp {\rm or} \hsp 0 < \mu_{\infty} < \mu_0
\hsp {\rm or} \hsp \mu_0 < \mu_{\infty} < 0 \hsp {\rm or} \hsp
\mu_{\infty} < 0 < \mu_0
& \Longrightarrow &
\lambda_{0, \hsppp {\sf upper}} = \lambda_{\infty} .
\end{array}
\nonumber
\end{eqnarray}

To summarize the above analysis, the KL number, $\KL\triangleq\KL(\mu_\infty,\mu_0,\lambda_\infty,\lambda_0)$, associated with the
AR(1) model in~\eqref{eq:def-AR-eqn} and given by~\eqref{eq:AR-data-model-KL-formula}, is always larger than the KL number for the basic i.i.d.\ problem for post-change
correlation above and below certain cut-off values. In the intervening
regime, the KL number is smaller than the i.i.d.\ problem with a worst-case
$\lambda_0$ given by $\lambda_{0,\hsppp {\sf crit}}$. These trends are illustrated
pictorially in Figs.~\ref{fig_kl}(a)-(c).
\begin{figure}[!htb]
\begin{center}
\begin{tabular}{cc}
\includegraphics[height=2.3in,width=2.8in] {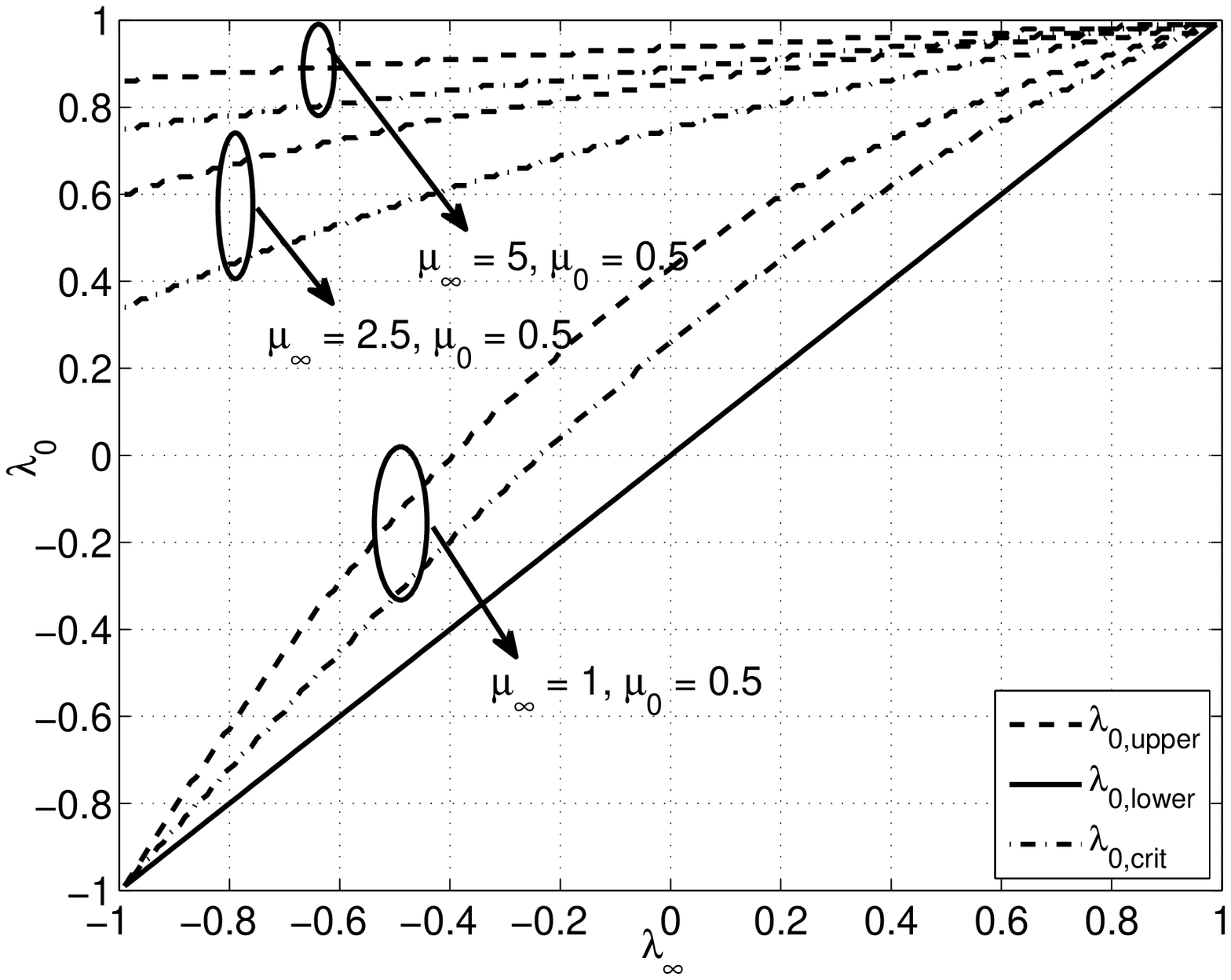}
&
\includegraphics[height=2.3in,width=2.8in] {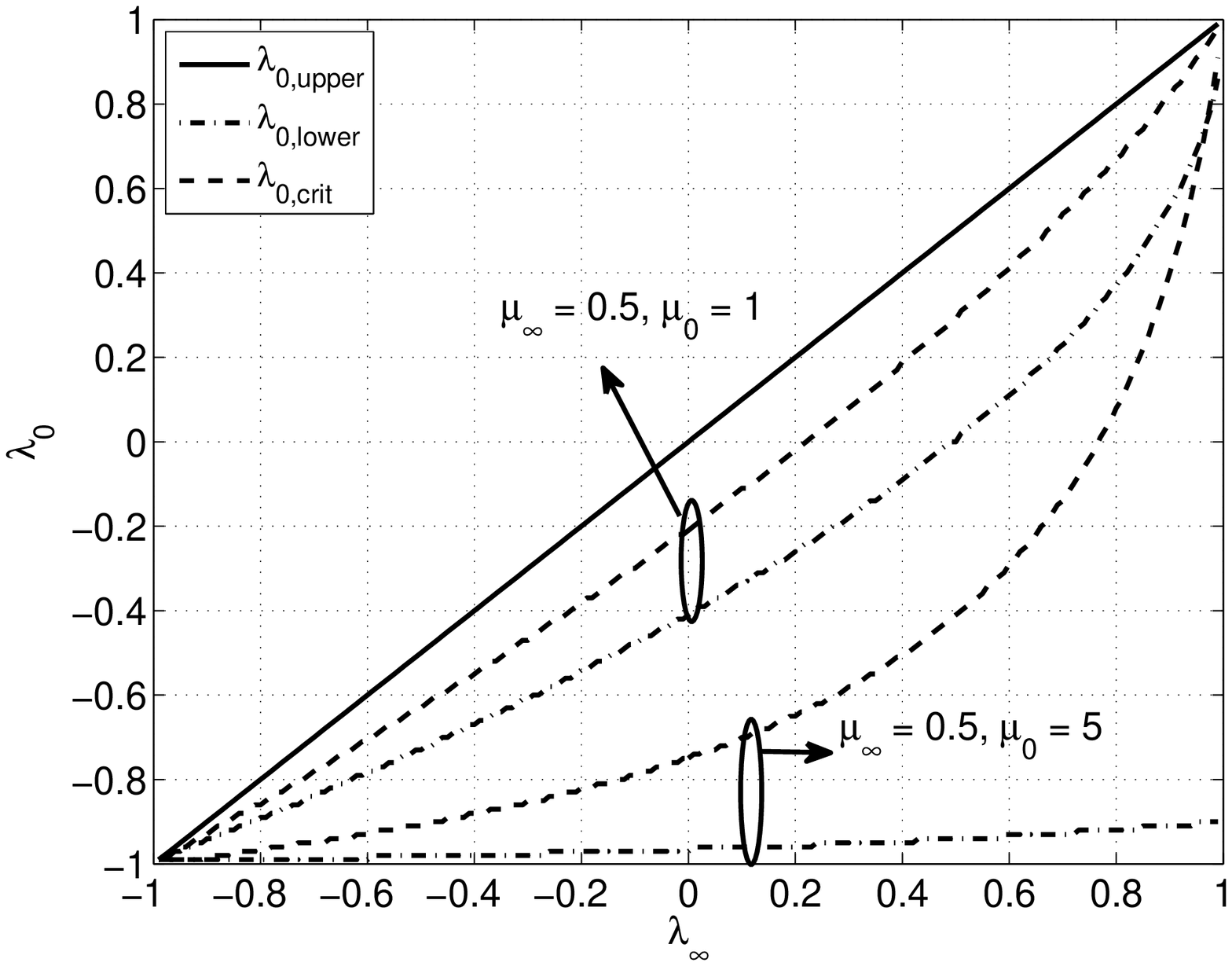}
\\ {\hspace{0.35in}} (a) & {\hspace{0.35in}} (b)
\end{tabular}
\caption{\label{fig_lambda_crit}
$\lambda_{0, \hsppp {\sf upper}}$, $\lambda_{0, \hsppp {\sf lower}}$
and $\lambda_{0, \hsppp {\sf crit}}$ as a function of $\lambda_{\infty}$
with different $\mu_{\infty}$ and $\mu_0$ values.}
\end{center}
\end{figure}

%+-----------------------------------------------------------------------------------------------+%
\section{Performance Evaluation}
\label{sec:performance-evaluation}

This section is devoted to developing a numerical framework to evaluate the performance
of the CUSUM chart~\eqref{eq:def-Tcs}--\eqref{eq:statistic-CS-def} and the SR procedure~\eqref{eq:def-Tsr}--\eqref{eq:statistic-SR-def} when applied to the AR(1) model in~\eqref{eq:def-AR-eqn}. Specifically, for each of the stopping times---either $\T=\TCS$
if it is the CUSUM chart, or $\T=\TSR$ if it is the SR procedure---the framework is ``tailored'' to two antagonistic performance characteristics:\begin{inparaenum}[\itshape a)]\item the usual ``in-control'' ARL to false alarm, i.e., $\ARL(\T)$, %\triangleq\EV_\infty[\T]$,
and \item Pollak's~\cite{Pollak:AS85} Supremum (conditional) Average Detection Delay, i.e., $\SADD(\T)$. %\triangleq\sup_{0\le k<\infty}\ADD_k(\T)$ with $\ADD_k(\T)\triangleq\EV_k[\T-k|\T>k]$, $0\le k<\infty$
\end{inparaenum} %; cf.~\eqref{eq:def-SADD}.

The framework here is a build-up to the one previously offered and applied in~\cite{Tartakovsky+etal:IWSM2009,Moustakides+etal:CommStat09,Moustakides+etal:SS11} for the
i.i.d.\ model (with no cross-dependence in the observed data); see also, e.g.,~\cite{Polunchenko+etal:MIPT2013,Polunchenko+etal:SA2014,Polunchenko+etal:ASMBI2014}. Accordingly, just as the prototype framework of~\cite{Tartakovsky+etal:IWSM2009,Moustakides+etal:CommStat09,Moustakides+etal:SS11,Polunchenko+etal:MIPT2013,Polunchenko+etal:SA2014,Polunchenko+etal:ASMBI2014}, our framework is also developed in two stages: we first derive a renewal integral equation for each performance metric involved, and then, as neither one of the obtained equations can be solved analytically, we supply a numerical method to do so, and carry out an analysis of the method's accuracy. What is new in the setting considered here is that the integral equations are not one- but two-dimensional, and (therefore) the numerical method is not deterministic but rather a Monte-Carlo-type estimation technique of {\em prescribed proportional closeness}, a criterion considered, e.g., in~\cite[p.~339]{Ehrenfeld+Littauer:Book1964},~\cite{Simons+Zacks:Stanford1967,Nadas:AMS1969,Willson+Folks:CommStat-SA1983,Zacks:AMS1966,Zacks:HofS-AinR2001}.

%+-----------------------------------------------------------------------------------------------+%
\subsection{The renewal equations}
\label{ssec:integral-equations}

To compute the performance of the CUSUM chart~\eqref{eq:def-Tcs}--\eqref{eq:statistic-CS-def} and
that of the SR procedure~\eqref{eq:def-Tsr}--\eqref{eq:statistic-SR-def} when applied to the AR(1) model in~\eqref{eq:def-AR-eqn}, we now derive analytically exact renewal equations on the performance characteristics of interest. %To begin with, recall
%\begin{align*}
%\LR_n\equiv
%\LR_n(X_n,X_{n-1})
%&\triangleq
%\exp\Biggl\{\left(X_n-\dfrac{1}{2}\bigl[X_{n-1}(\lambda_0+\lambda_\infty)+(\mu_0+\mu_\infty)\bigr]\right)\\
%&\qquad\qquad\qquad\qquad\Bigl[X_{n-1}(\lambda_0-\lambda_\infty)+(\mu_0-\mu_\infty)\Bigr]\Biggl\},\;n\ge1,
%\end{align*}
%i.e., the ``instantaneous'' LR for the $n$-th data point, $X_n$, conditional on the $(n-1)$-st data point, $X_{n-1}$, with $X_0=x_0\in\mathbb{R}$ fixed; cf.~\eqref{eq:instant-LR-formula}.
To begin with, it is direct to see from~\eqref{eq:instant-LR-formula}
that $\LR_n(X_n,X_{n-1})$ is (absolutely) continuous with respect to both $X_n$ and $X_{n-1}$ for all $n\ge1$.
Next, given the observation series, $\{X_n\}_{n\ge0}$, consider the generic detection procedure associated with the generic stopping time
\begin{align}\label{eq:generic-T-def}
\T(x_0,y_0,A)
&\triangleq
\inf\big\{n\ge1\colon Y_n\ge A\big\},\;y_0\ge0,\;x_0\in\mathbb{R},\; A>0,
\end{align}
whose decision-making is based off the generic detection statistic $\{Y_n\}_{n\ge0}$ defined as
\begin{align*}
Y_{n}
&=
\Psi(Y_{n-1})\LR_{n}(X_{n-1},X_{n})\;\text{for}\;n=1,2,\ldots\;\text{with}\; Y_0=y_0\ge 0\;\text{and}\;X_0=x_0\in\mathbb{R},
\end{align*}
where $\Psi(z)$ is a (sufficiently) smooth non-negative-valued function defined (at least) for $z\ge0$, and $y_0\ge0$ and $x_0\in\mathbb{R}$ are given constants; the assumptions that $\Psi(z)\ge0$ for $z\ge0$ and that $y_0\ge0$ are necessary to ensure that $\{Y_n\}_{n\ge0}$ is almost surely non-negative under any probability measure, so that the two-dimensional (homogeneous) Markov process, $\{(Y_n,X_n)\}_{n\ge0}$, is restricted to the half-plane $[0,\infty)\times\mathbb{R}$.

Now note that since the choice of $\Psi(z)$ is flexible, the generic stopping time $\T(x,y,A)$ can be seen to describe a rather large class of LR-based detection procedures; in particular, if $\Psi(z)=1+z$, then $\{Y_n\}_{n\ge0}$ and the SR detection statistic, $\{R_n\}_{n\ge0}$, are identical, and therefore, for this choice of $\Psi(z)$ the generic stopping time $\T(x,y,A)$ and that associated with the SR procedure coincide. Hence, the SR procedure is a special case of $\T(x,y,A)$, as is the CUSUM chart; indeed, if $\Psi(z)=\max\{1,z\}$, then $\{Y_n\}_{n\ge0}$ is the CUSUM detection statistic, $\{V_n\}_{n\ge0}$, and therefore, in this case the generic stopping time $\T(x,y,A)$ is no different from that associated with the CUSUM chart. This flexibility of the generic stopping time $\T(x,y,A)$ can be used to study simultaneously the performance of not only the CUSUM chart or the SR procedure, but also of a far larger number of other procedures (e.g., EWMA procedure).

Let $\Pr_d(Y_{n}\le y_2,X_{n}\le x_2|Y_{n-1}=y_1,X_{n-1}=x_1)$, $d=\{0,\infty\}$, denote the {\em transition probability function} to describe the evolution (in time, $n$) of the two-dimensional (homogeneous) Markov process $\{(Y_n,X_n)\}_{n\ge0}$ under probability measure $\Pr_d$, $d=\{0,\infty\}$; note that $\Pr_d(Y_{n}\le y_2,X_{n}\le x_2|Y_{n-1}=y_1,X_{n-1}=x_1)$, $d=\{0,\infty\}$, is independent of $n$. Let
\begin{align}\label{eq:K-def}
K_d(y_2,x_2|y_1,x_1)
&\triangleq
\dfrac{\partial^2}{\partial y_2\partial x_2}\Pr_d(Y_{n}\le y_2,X_{n}\le x_2|Y_{n-1}=y_1,X_{n-1}=x_1),
\end{align}
$d=\{0,\infty\}$, be the respective {\em transition probability density kernel}; it is clear that $K_d(y_2,x_2|y_1,x_1)$, $d=\{0,\infty\}$, is independent of $n$ as well. A straightforward calculation shows that %$\Pr_d(Y_{n}\le y_2,X_{n}\le x_2|Y_{n-1}=y_1,X_{n-1}=x_1)=\Phi(\min\{x_2, %\xi(x_1,y_1,y_2)\}-\mu_d-\lambda_d x_1)$, if $\mu_0-\mu_\infty+x_1(\lambda_0-\lambda_\infty)\ge0$, and %$\Pr_d(Y_{n}\le y_2,X_{n}\le x_2|Y_{n-1}=y_1,X_{n-1}=x_1)=\Phi(x_2-\mu_d-\lambda_d %x_1)-\Phi(\xi(x_1,y_1,y_2)-\mu_d-\lambda_d x_1)$, if $\mu_0-\mu_\infty+x_1(\lambda_0-\lambda_\infty)<0$,
\begin{align*}
& \Pr_d(Y_{n}\le y_2,X_{n}\le x_2|Y_{n-1}=y_1,X_{n-1}=x_1)
\nonumber \\
& =
\left\{
\begin{array}{cc}
\Phi(\min\{x_2, \xi(x_1,y_1,y_2)\}-\mu_d-\lambda_d x_1),
& {\rm if} \hspp \mu_0-\mu_\infty+x_1(\lambda_0-\lambda_\infty)\ge0
\\
\Phi(x_2-\mu_d-\lambda_d x_1)-\Phi(\xi(x_1,y_1,y_2)-\mu_d-\lambda_d x_1),
& {\rm if} \hspp \mu_0-\mu_\infty+x_1(\lambda_0-\lambda_\infty)<0,
\end{array} \right.
\end{align*}
where
\begin{align*}
\xi(x_1,y_1,y_2)
&\triangleq
\dfrac{1}{\mu_0-\mu_\infty+x_1(\lambda_0-\lambda_\infty)}\log\left(\dfrac{y_2}{\Psi(y_1)}\right)
+
\dfrac{\mu_0+\mu_\infty+x_1(\lambda_0+\lambda_\infty)}{2},
\end{align*}
and
\begin{align*}
\Phi(x)
&\triangleq
\dfrac{1}{\sqrt{2 \pi}} \int_{-\infty}^x e^{-\tfrac{t^2}{2}}dt,
\end{align*}
i.e., the standard Gaussian cdf.

We are now in a position to derive the first renewal equation of interest, viz., that on the first moment of $\T(x_0,y_0,A)$ under measure $\Pr_\infty$, i.e., for the ARL to false alarm of the generic stopping time $\T(x_0,y_0,A)$. Specifically, for notational brevity, denote $\ell(x,y,A)\triangleq\EV_\infty[\T(x,y,A)]$. By conditioning on the first observation, $X_1$, and using a routine renewal argument akin to that made in~\cite{Moustakides+etal:SS11} we obtain
\begin{align}\label{eq:ARL-int-eqn}
\ell(x_1,y_1,A)
&=
1+\int_{-\infty}^{\infty}\int_0^A K_\infty(y_2,x_2|y_1,x_1)\,\ell(x_2,y_2,A)\,dy_2\,dx_2.
\end{align}

The double integral in the right-hand side of this equation cannot be separated, since $Y_n$ and $X_n$ are correlated for all $n\ge1$. However, it is because $Y_n$ and $X_n$ are correlated for all $n\ge1$, the double integral is effectively a single integral, and is taken along the curve given by the points $(y_2,x_2)$ for which $K_\infty(y_2,x_2|y_1,x_1)\neq0$. These points satisfy the equation $u(x_1,y_1,y_2)=x_2$, or written explicitly
\begin{align*}
\dfrac{1}{\mu_0-\mu_\infty+x_1(\lambda_0-\lambda_\infty)}\log\left(\dfrac{y_2}{\Psi(y_1)}\right)
+
\dfrac{\mu_0+\mu_\infty+x_1(\lambda_0+\lambda_\infty)}{2}
&=
x_2,
\end{align*}
and for all other values of $(x_1,y_1)$ and $(x_2,y_2)$ the integral is zero, and $\ell(x_1,y_1,A)=1$ irrespective of $A>0$.

Thus, the double integral in the right-hand side of~\eqref{eq:ARL-int-eqn} is to be understood in the Riemann--Stieltjes sense with the measure of integration being $\Pr_\infty(Y_{n}\le y_2,X_{n}\le x_2|Y_{n-1}=y_1,X_{n-1}=x_1)$. Since the latter is a two-dimensional cdf, it is clearly a function of bounded variation, and therefore the existence of the integral is justified.

Next, introduce $\delta_0(x,y,A)\triangleq\EV_0[\T(x,y,A)]$, and observe that
\begin{align}\label{eq:ADD0-int-eqn}
\delta_0(x_1,y_1,A)
&=
1+\int_{-\infty}^{\infty}\int_0^A K_0(y_2,x_2|y_1,x_1)\,\delta_0(x_2,y_2,A)\,dy_2\,dx_2,
\end{align}
which is an exact ``copy'' of equation~\eqref{eq:ARL-int-eqn} except that $K_\infty(y_2,x_2|y_1,x_1)$ is replaced with $K_0(y_2,x_2|y_1,x_1)$.
For $k\ge1$, since $\{R_n^{r=x}\}_{n\ge0}$ is Markovian, one can establish the recursion
\begin{align}\label{eq:ADDk-recursion}
\delta_{k+1}(x_1,y_1,A)
&=
\int_{-\infty}^{\infty}\int_0^A{K}_{\infty}(y_2,x_2|y_1,x_1)\,\delta_k(x_2,y_2,A)\,dy_2\,dx_2,\; k\ge0,
\end{align}
with $\delta_0(x,y,A)$ first found from equation~\eqref{eq:ADD0-int-eqn}; cf.~\cite{Moustakides+etal:SS11}. Using this recursion one can generate the entire functional sequence $\{\delta_k(x,y,A)\}_{k\ge0}$ by repetitive application of the linear integral operator
\begin{align*}
\mathcal{K}_{\infty}\circ u
&\triangleq
[\mathcal{K}_{\infty}\circ u](x_1,y_1)
\triangleq
\int_0^A {K}_{\infty}(y_2,x_2|y_1,x_1)\,u(x_2,y_2)\,dy_2\,dx_2,
\end{align*}
where $u(x,y)$ is assumed to be sufficiently smooth inside the strip $\mathbb{R}\times[0,A]$. Temporarily deferring formal discussion of this operator's properties, note that using this operator notation, recursion~\eqref{eq:ADDk-recursion} can be rewritten as $\delta_{k+1}=\mathcal{K}_{\infty}\circ \delta_{k}$, $k\ge0$, or equivalently, as $\delta_{k}=\mathcal{K}_{\infty}^{k}\circ \delta_{0}$, $k\ge0$, where
\begin{align*}
\mathcal{K}_{\infty}^{k}\circ u
&\triangleq
\underbrace{\mathcal{K}_{\infty}\circ\cdots\circ\mathcal{K}_{\infty}}_{\text{$k$ times}}\circ\,u\;\text{for}\;k\ge1,
\end{align*}
and $\mathcal{K}_{\infty}^{0}$ is the identity operator from now on denoted as $\mathbb{I}$, i.e., $\mathcal{K}_{\infty}^{0}\circ u=\mathbb{Id}\circ u\triangleq u$. Similarly, in the operator form, equation~\eqref{eq:ARL-int-eqn} can be rewritten as $\ell=1+\mathcal{K}_\infty\circ\ell$, and equation~\eqref{eq:ADD0-int-eqn} can be rewritten as $\delta_0=1+\mathcal{K}_0\circ\delta_0$.

%\begin{align}\label{eq:ADDk-int-eqn}
%\delta_{k+1}(x_1,y_1,A)
%&=
%\int_{-\infty}^{\infty}\int_0^A K_0(y_2,x_2|y_1,x_1)\,\delta_k(x_2,y_2,A)\,dy_2\,dx_2,
%\end{align}
%
%\begin{align}\label{eq:Pk-int-eqn}
%\rho_{k+1}(x_1,y_1,A)
%&=
%\int_{-\infty}^{\infty}\int_0^A K_0(y_2,x_2|y_1,x_1)\,\rho_k(x_2,y_2,A)\,dy_2\,dx_2,
%\end{align}
%
%we reiterate that since $\SADD(\T)=\ADD_0(\T)$ for both $\T=\mathcal{C}_A$ and $\T=\mathcal{S}_A$, i.e., the ``worst-case'' ADD---for the CUSUM chart and for the SR procedure---is when the change is in effect {\em ab initio}, to compute $\SADD(\T)$ it is sufficient to compute $\ADD_0(\T)\equiv\EV_0[\T]$ only, once for $\T=\mathcal{C}_A$ and once for $\T=\mathcal{S}_A$.
%
\begin{lemma}
For the generic detection procedure $\T(x_0,0,A)$ given by~\eqref{eq:generic-T-def} it is true that
\begin{align*}
\SADD(\T(x_0,0,A))
&\triangleq
\sup_{0\le k<\infty}\ADD_k(\T(x_0,0,A))
\nonumber \\
& =
\ADD_0(\T(x_0,0,A))\triangleq\EV_0[\T(x_0,0,A)].
\end{align*}
\end{lemma}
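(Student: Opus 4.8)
The statement comprises two inequalities. The inequality ``$\ge$'' is immediate, since $\ADD_0(\T(x_0,0,A))$ is one of the terms over which the supremum defining $\SADD$ is taken; moreover, because $\T(x_0,0,A)\ge1$ almost surely under every $\Pr_k$, one has $\ADD_0(\T(x_0,0,A))=\EV_0[\T(x_0,0,A)\mid\T(x_0,0,A)>0]=\EV_0[\T(x_0,0,A)]$. Everything therefore reduces to proving that $\ADD_k(\T(x_0,0,A))\le\EV_0[\T(x_0,0,A)]$ for each $k\ge1$.

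The plan is to exploit the Markov structure of $\{(Y_n,X_n)\}_{n\ge0}$ together with the renewal equation~\eqref{eq:ADD0-int-eqn}. Writing $\delta_0(x,y,A)\triangleq\EV_0[\T(x,y,A)]$ as before, the first step is to establish the identity
\begin{align*}
\ADD_k(\T(x_0,0,A))
&=
\EV_\infty\bigl[\delta_0(X_k,Y_k,A)\mid\T(x_0,0,A)>k\bigr],\quad k\ge1.
\end{align*}
This rests on two observations. First, the event $\{\T(x_0,0,A)>k\}=\{Y_1<A,\ldots,Y_k<A\}$ is measurable with respect to $\sigma(X_0,\ldots,X_k)$, and on this $\sigma$-field the measures $\Pr_k$ and $\Pr_\infty$ coincide (under $\Pr_k$ the observations $X_1,\ldots,X_k$ are still governed by the pre-change dynamics), so the conditional law of $(X_k,Y_k)$ given $\{\T(x_0,0,A)>k\}$ is the same under $\Pr_k$ and under $\Pr_\infty$. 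Second, by the time-homogeneous Markov property of $\{(Y_n,X_n)\}_{n\ge0}$ and the fact that under $\Pr_k$ the observations $X_{k+1},X_{k+2},\ldots$ follow the post-change dynamics, conditionally on $\{\T(x_0,0,A)>k\}$ and on $(X_k,Y_k)=(x,y)$ the shifted stopping time $\T(x_0,0,A)-k$ has exactly the law of $\T(x,y,A)$ under $\Pr_0$; taking conditional expectations and averaging yields the display. Trivially, also $\ADD_0(\T(x_0,0,A))=\delta_0(x_0,0,A)$.

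The second step is a monotonicity property: $\delta_0(x,y,A)$ is non-increasing in $y$ for each fixed $x$ and $A$. Indeed, run two copies of $\{(Y_n,X_n)\}_{n\ge0}$ under $\Pr_0$ driven by the same noise $\{\varepsilon_n\}_{n\ge1}$ and the same $X_0=x$ but started from $Y_0=y\le y'=Y_0'$; since $\Psi$ is non-negative and non-decreasing and each $\LR_n\ge0$, an easy induction gives $Y_n\le Y_n'$ for all $n$, hence $\T(x,y',A)\le\T(x,y,A)$ pathwise, and taking $\EV_0$ yields the claim. Since the statistic $\{Y_n\}_{n\ge0}$ is non-negative and starts from its minimal value $Y_0=0$, on $\{\T(x_0,0,A)>k\}$ we have $Y_k\ge0$ and therefore $\delta_0(X_k,Y_k,A)\le\delta_0(X_k,0,A)$, which combined with the identity above gives $\ADD_k(\T(x_0,0,A))\le\EV_\infty[\delta_0(X_k,0,A)\mid\T(x_0,0,A)>k]$.

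The remaining step --- which I expect to be the main obstacle --- is to bound $\EV_\infty[\delta_0(X_k,0,A)\mid\T(x_0,0,A)>k]$ by $\delta_0(x_0,0,A)$. In the i.i.d.\ analysis of~\cite{Moustakides+etal:SS11} the second coordinate is absent and this step is vacuous, whereas here the autoregressive coordinate $X_k$ is a genuine object, produced by $k$ steps of the pre-change recursion, and need not equal $x_0$; one cannot simply invoke $\sup_x\delta_0(x,0,A)$. The tool I would use is a Wald-type likelihood-ratio change of measure: rewrite the restriction to $\{\T(x_0,0,A)>k\}$ of the $\Pr_\infty$-law of the trajectory $(X_{0:k},Y_{0:k})$ in terms of the corresponding $\Pr_0$-law through $\LR_{1:k,\nu=0}=\prod_{j=1}^k\LR_j$, and combine this with the fixed-point equation $\delta_0=1+\mathcal{K}_0\circ\delta_0$, the non-negativity of the kernels $K_0,K_\infty$, and $\delta_0\ge1$, so as to show that conditionally on no alarm by time $k$ the pair $(X_k,Y_k)$ is, for the post-change detection problem, no more adverse than $(x_0,0)$. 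Once this comparison is in place, chaining it with the two previous displays gives $\ADD_k(\T(x_0,0,A))\le\delta_0(x_0,0,A)=\ADD_0(\T(x_0,0,A))$ for every $k\ge1$; taking the supremum over $k$ then yields the asserted equality.
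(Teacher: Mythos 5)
Your first three steps are sound and in fact go beyond what the paper itself offers: the paper states this lemma with no proof at all, and later (Section~5) concedes that ``generalizing this result to the AR setting seems difficult,'' settling for numerical evidence that $\ADD_k$ is maximized at $k=0$ in all cases examined. Your representation $\ADD_k(\T(x_0,0,A))=\EV_\infty[\delta_0(X_k,Y_k,A)\mid\T(x_0,0,A)>k]$ is correct (the event $\{\T>k\}$ is $\sigma(X_0,\dots,X_k)$-measurable, on which $\Pr_k$ and $\Pr_\infty$ agree, and the Markov property handles the post-change continuation), and the coupling argument for monotonicity of $\delta_0(x,y,A)$ in $y$ works, granted the extra hypothesis that $\Psi$ is non-decreasing (true for both $\Psi(z)=1+z$ and $\Psi(z)=\max\{1,z\}$, though the paper assumes only smoothness and non-negativity).

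The genuine gap is exactly where you flag it: the comparison $\EV_\infty[\delta_0(X_k,0,A)\mid\T(x_0,0,A)>k]\le\delta_0(x_0,0,A)$ is announced as a plan, not proved, and the sketched change-of-measure route does not obviously deliver it. Unlike the i.i.d.\ case, where the detection statistic is the entire state and starting it from $0$ is provably the worst initialization, here the conditional law of the AR coordinate $X_k$ given no alarm spreads out under the pre-change dynamics, and $x\mapsto\delta_0(x,0,A)$ need not attain its maximum at $x=x_0$: the ``instantaneous'' detectability of the change depends on $X_{n-1}$ through the conditional Kullback--Leibler divergence of the two transition densities, so the post-change delay started from an unfavourable $x$ can exceed that started from $x_0$. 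Whether this loss is always dominated by the gain from $Y_k\ge0$ (indeed $Y_k$ is typically well above $0$ under $\Pr_\infty$ given no alarm) is precisely the hard part; a Wald likelihood-ratio identity on $\{\T>k\}$ relates $\Pr_\infty$- and $\Pr_0$-expectations of trajectory functionals, but it does not by itself yield the required stochastic comparison of $(X_k,Y_k)$ with $(x_0,0)$. As written, your argument reduces the lemma to an unproven (and possibly false in full generality) claim --- which is consistent with the paper treating the statement as an empirically supported assertion rather than a theorem accompanied by a proof.
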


Equations~\eqref{eq:ARL-int-eqn} and~\eqref{eq:ADD0-int-eqn} provide
a ``complete package'' to compute any of the desired performance characteristics of the CUSUM chart and those of the SR procedure. The question to be considered next is to compute
these characteristics in practice.

%+--------------------------------------------------------------------------------+%
\subsection{The numerical solution and its accuracy}

The renewal equations established in the preceding subsection on the performance metrics of interest are (two-dimensional) Fredholm (linear) integral equations of the second kind. Such equations rarely permit an analytical, closed-form solution, even in a single dimension. Hence, a numerical method is in order, and it is the aim of this subsection to propose one.

The branch of numerical analysis concerned with the design and analysis of numerical schemes to solve Fredholm integral equations of the second kind has plenty of powerful methods for efficient solution of these equations in one dimension. However, even then the dimension is two, things get much more complicated. We propose to consider a Markov Chain Monte-Carlo (MCMC) technique.

We start with an observation that although the integral involved in the equation of interest is a double integral, it can actually be reduced to a single equation, as $X_n$ and $R_n$ are dependent on one another. Specifically, this single integral is along the curve described in the $(X_n,R_n)$ space by the relation $R_n=(1+R_{n-1})\LR_n$, where both $X_{n-1}$ and $R_{n-1}$ are assumed fixed. Let us therefore deal with a single-dimensional equivalent of the equation of interest:
\begin{align}\label{eq:generic-fredholm-int-eqn}
u(x)
&=
1+\int{K}(x,y)\,u(y)\,dy,
\end{align}
where $\mathcal{K}(x,y)\ge0$, $\forall (x,y)\in\mathbb{R}^2$, is the respective transition probability density for the appropriate Markov chain.

It is known (and can be easily shown) that the solution to this equation admits the following Neumann series:
\begin{align*}
\begin{aligned}
u(z_0)
&=
1+\int_{[0,A]}{K}(z_0,z_1)\,dz_1+\int_{\otimes_{i=1}^2[0,A]}{K}(z_0,z_1)\,{K}(z_1,z_2)\,dz_1\,dz_2\\
&\qquad\qquad\qquad+\int_{\otimes_{i=1}^3[0,A]}{K}(z_0,z_1)\,{K}(z_1,z_2)\,{K}(z_2,z_3)\,dz_1\,dz_2\,dz_3+\ldots\\
&=
1+\sum_{k=1}^\infty\int_{\otimes_{i=1}^k[0,A]}\left(\prod_{j=1}^k{K}(z_{j-1},z_j)\right)dz_1\,dz_1\ldots dz_k,
\end{aligned}
\end{align*}
where $\otimes$ denotes the usual direct product (as applied to sets).

We first note that equation~\eqref{eq:generic-fredholm-int-eqn} can be solved either at a particular (single) point, or over a particular interval. We are interested in the former, with the point being zero, i.e., when the detection statistic has no headstart. For the actual solution method to obtain $u(0)\triangleq\EV[\T]$, one option would be to use a deterministic numerical scheme to ``linearize'' the integral in the right-hand side of~\eqref{eq:generic-fredholm-int-eqn}, and then, to ensure the linearization is ``optimal'', reduce the integral equation to a system of linear equations for a vector approximating the unknown function~\cite{Atkinson+Han:Book09}. The problem with this approach is that the integral is actually two-dimensional, and is over an unbounded region. As a result, it is difficult to ``chop up'' the region of integration to form a partition of reasonable size. To overcome this problem, we suggest to consider a Monte Carlo technique.

The idea of the basic Monte Carlo approach to evaluate $u(0)\triangleq\EV[\T]$ is to compute it {\em statistically}, i.e., to---in one way or another---estimate it based on a (somehow) generated sample, $\{\T_j\}_{1\le j\le N}$, of $N\ge1$ independent instantiations of the (same) stopping time $\T$. Specifically, let $\widehat{\EV[\T]}$ denote an estimator of $\EV[\T]$. The standard choice for $\widehat{\EV[\T]}$ is to use the sample mean
\begin{align*}
\bar{T}_N
&\triangleq
\dfrac{1}{N}\sum_{j=1}^N \T_j,
\end{align*}
which is well-justified since the sample mean is unbiased (for any $N\ge1$), and, due to the (strong) Law of Large Numbers, is also asymptotically (as $N\to\infty$) consistent.

While it is not a problem to simulate as many independent instantiations of $\T$ as
necessary (even if $N$ is $10^6$ or higher), the question of proximity of the respective estimate $\widehat{\EV[\T]}\triangleq \bar{T}_N$ to the actual true (but unknown) value of $\EV[\T]$ is to be addressed with care. To that end, the standard solution is to construct a $(1-\epsilon)\,\%$-confidence interval, $\epsilon\in(0,1)$, of a prescribed width, $w>0$. Specifically, let $\sigma_T$ denote the standard deviation of $\T$, i.e., $\sigma_{\T}\triangleq\sqrt{\Var[\T]}$. If $\sigma_T$ is known, then from the Central Limit Theorem (CLT) we immediately have
\begin{align*}
\sqrt{N}\,\dfrac{\bar{\T}_N-\EV[\T]}{\sigma_{\T}}
\underset{N\to\infty}{\overset{d}{\longrightarrow}}
\mathcal{N}(0,1),
\end{align*}
and therefore, to ensure that
\begin{align*}
\Pr\left(\dfrac{\vert\bar{\T}_N-\EV[\T]\vert}{\sigma_{\T}}\le w\right)
&\ge1-\epsilon,
\end{align*}
it suffices to take the sample size $N$ as
\begin{align*}
N
&\ge
\left\lfloor2z_{\epsilon/2}\frac{\sigma_{\T}}{w}\right\rfloor + 1,
\end{align*}
where $\lfloor x\rfloor$ is the floor function, and $z_{\epsilon/2}$ is the $\epsilon/2$-th percentile of the standard Gaussian distribution. Rephrasing this, for this choice of $N$, with probability of at least $1-\epsilon\in(0,1)$, the unknown true mean $\EV[\T]$ of the stopping time $\T$ will be contained in the interval $(\bar{\T}_N-w,\bar{\T}_N+w)$, i.e., it will be true that $\Pr\left(\bar{\T}_N-w\le \EV[\T]\le \bar{\T}_N+w\right)\ge 1-\epsilon$.

As may be seen, the problem with this approach is that the standard deviation, $\sigma_{\T}$, is {\em not} known. One could, of course, estimate it, and then build a confidence interval off the {\em estimated} value, $\hat{\sigma}_{\T}$. The issue with this idea, however, is that the distribution of $(\bar{T}_N-\EV[\T])/\hat{\sigma}_{\T}$ may fail to be normal, even asymptotically, as $N\to\infty$. A more elegant way out of this is to note that if the detection statistic's headstart is zero, which
is the case in this work, then $\sigma_{\T}\le\EV[\T]$. With formal proof of this inequality temporarily deferred, let us illustrate what it can lead to. Since $\sigma_{\T}\le\EV[\T]$ and $\EV[\T]>0$ (in fact $\EV[\T]\ge1$), the event $\big\{\vert\bar{\T}_N-\EV[\T]\vert\le w\EV[\T]\big\}$ is contained in the event $\big\{\vert\bar{\T}_N-\EV[\T]\vert\le w\sigma_{\T}\big\}$, and therefore
\begin{align*}
\Pr\left(\dfrac{\vert\bar{\T}_N-\EV[\T]\vert}{\EV[\T]}<w\right)
&\ge
\Pr\left(\dfrac{\vert\bar{\T}_N-\EV[\T]\vert}{\sigma_{\T}}<w\right),
\end{align*}
so that confidence bounds for the {\em relative error}, $\vert\bar{\T}_N-\EV[\T]\vert/\EV[\T]$, are readily available; in this case $w$ is measured in percentages. This criterion is known as {\em prescribed proportional closeness}~\cite[p.~339]{Ehrenfeld+Littauer:Book1964},~\cite{Simons+Zacks:Stanford1967,Nadas:AMS1969,Willson+Folks:CommStat-SA1983,Zacks:AMS1966,Zacks:HofS-AinR2001}.

We now prove the claim made earlier that $\sigma_{\T}\le\EV[\T]$ when the detection statistic has no headstart. Let $\mathfrak{M}_k(x,A)\triangleq\EV[\T^k(x,A)]$ with $k\in\mathbb{N}$, i.e., $\mathfrak{M}_k(x,A)$ is the $k$-th moment of the stopping time $\T(x,A)$; in particular, $\mathfrak{M}_1(x,A)\triangleq\EV[\T(x,A)]$. We will need the following two lemmas.
\begin{lemma}\label{lem:ineq1}
$\mathfrak{M}_1(x,A)\le\mathfrak{M}_1(0,A)$ for any given $A>0$ and $x\ge0$.
\end{lemma}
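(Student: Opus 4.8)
The plan is to prove the stronger \emph{pathwise} statement that, on a suitably coupled probability space, $\T(x,A)\le\T(0,A)$ almost surely, after which the moment inequality follows by simply taking expectations. Fix $A>0$ and $x\ge0$. Realize the two procedures simultaneously by driving them with the \emph{same} noise sequence $\{\varepsilon_n\}_{n\ge1}$ and the same deterministic initialization $X_0=x_0$. Since the observation sequence $\{X_n\}_{n\ge0}$ generated by the AR(1) recursion~\eqref{eq:def-AR-eqn} depends only on $x_0$ and $\{\varepsilon_n\}_{n\ge1}$ --- and, crucially, \emph{not} on whatever headstart is assigned to the detection statistic --- both copies see the identical trajectory $\{X_n\}_{n\ge0}$, hence the identical sequence of instantaneous likelihood ratios $\{\LR_n\}_{n\ge1}$ from~\eqref{eq:instant-LR-formula}. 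Write $\{Y_n(x)\}_{n\ge0}$ and $\{Y_n(0)\}_{n\ge0}$ for the two detection statistics, started from $Y_0(x)=x$ and $Y_0(0)=0$ respectively, each evolving according to $Y_n=\Psi(Y_{n-1})\LR_n$.

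The key point is that the one-step update $z\mapsto\Psi(z)\LR_n$ is non-decreasing in $z$: indeed $\LR_n\ge0$, and for every procedure to which the lemma is applied --- the SR procedure, for which $\Psi(z)=1+z$, and the CUSUM chart, for which $\Psi(z)=\max\{1,z\}$ --- the function $\Psi$ is non-decreasing on $[0,\infty)$. A one-line induction on $n$ then gives $Y_n(x)\ge Y_n(0)$ for all $n\ge0$: the base case is $Y_0(x)=x\ge0=Y_0(0)$, and if $Y_{n-1}(x)\ge Y_{n-1}(0)$, then, using the common $\LR_n$, $Y_n(x)=\Psi(Y_{n-1}(x))\LR_n\ge\Psi(Y_{n-1}(0))\LR_n=Y_n(0)$. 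Consequently, whenever $Y_n(0)\ge A$ we also have $Y_n(x)\ge A$, so $\{n\ge1\colon Y_n(0)\ge A\}\subseteq\{n\ge1\colon Y_n(x)\ge A\}$, and taking infima yields $\T(x,A)\le\T(0,A)$ on this coupling. Taking expectations gives $\mathfrak{M}_1(x,A)=\EV[\T(x,A)]\le\EV[\T(0,A)]=\mathfrak{M}_1(0,A)$; the very same argument in fact delivers $\mathfrak{M}_k(x,A)\le\mathfrak{M}_k(0,A)$ for every $k\in\mathbb{N}$, and under any of the measures $\Pr_k$, $0\le k\le\infty$ (so, in particular, for both the ARL to false alarm and the SADD).

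There is no genuine obstacle here; the only two points needing a word of care are (i) checking that the coupling is legitimate, i.e., that $\{X_n\}_{n\ge0}$ --- and hence $\{\LR_n\}_{n\ge1}$ --- is unchanged by the choice of headstart, which is immediate from~\eqref{eq:def-AR-eqn} and~\eqref{eq:instant-LR-formula}, and (ii) the monotonicity of $\Psi$, which holds for all the concrete procedures in question. If one prefers to sidestep coupling, the same monotonicity can instead be extracted, term by term, from the Neumann-series representation of the solutions of the renewal equations~\eqref{eq:ARL-int-eqn} and~\eqref{eq:ADD0-int-eqn}; but the pathwise argument above is cleaner and has the bonus of establishing monotonicity of all moments at once.
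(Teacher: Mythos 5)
Your proposal is correct, and it is in fact substantially more rigorous than what the paper offers. The paper's ``proof'' of this lemma is a one-sentence appeal to intuition: a statistic that starts closer to the threshold $A$ should (on average) cross it sooner. No mechanism is supplied for why that intuition is valid. Your coupling argument supplies exactly the missing mechanism: driving both copies with the same noise (hence the same observations and the same $\LR_n$'s, since the headstart does not feed back into the data), and then using monotonicity of the update map $z\mapsto\Psi(z)\LR_n$ to propagate $Y_n(x)\ge Y_n(0)$ by induction, which gives the \emph{pathwise} domination $\T(x,A)\le\T(0,A)$ and therefore the inequality for all moments and under every measure $\Pr_k$. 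Your two points of care are also well placed: the legitimacy of the coupling is immediate from~\eqref{eq:def-AR-eqn} and~\eqref{eq:instant-LR-formula}, and the monotonicity of $\Psi$ --- which genuinely is needed, and which the paper's heuristic silently presupposes --- holds for both $\Psi(z)=1+z$ (SR) and $\Psi(z)=\max\{1,z\}$ (CUSUM), the only cases the paper actually uses. The one caveat worth recording is that the lemma is stated for the fully generic procedure $\T(x_0,y_0,A)$ with an arbitrary smooth non-negative $\Psi$, and for a non-monotone $\Psi$ neither your argument nor the paper's intuition applies; so strictly speaking the statement should carry a monotonicity hypothesis on $\Psi$, which your write-up makes explicit and the paper does not.
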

\begin{proof}
The sought-after inequality $\mathfrak{M}_1(x,A)\le\mathfrak{M}_1(0,A)$ is merely the statement that, on average, the higher the headstart of the detection statistic behind the stopping time $\T(x,A)$, the sooner (on average) the respective detection procedure is to terminate. That is, the closer the detection statistic is initially to the detection threshold, $A>0$, the sooner (on average) it is to reach the threshold, and therefore, the sooner (on average) the detection procedure is to stop.
\end{proof}
\begin{lemma}\label{lem:ineq2}
$\big[(\mathbb{I}-\mathcal{K})^{-1}\circ\mathfrak{M}_1\big](x,A)\le\mathfrak{M}_1(0,A)\,\mathfrak{M}_1(x,A)$ for any given $A>0$ and $x\ge0$.
\end{lemma}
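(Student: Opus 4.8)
The plan is to combine three ingredients: the Neumann-series representation of the resolvent $(\mathbb{I}-\mathcal{K})^{-1}$, the monotonicity of $\mathfrak{M}_1$ in the headstart recorded in Lemma~\ref{lem:ineq1}, and the positivity of the integral operator $\mathcal{K}$, whose kernel is non-negative.

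First I would write down the renewal equation for the first moment. Conditioning $\T(x,A)$ on its first step, exactly as in the derivation of~\eqref{eq:ARL-int-eqn}, yields $\mathfrak{M}_1=\mathbf{1}+\mathcal{K}\circ\mathfrak{M}_1$, where $\mathbf{1}$ denotes the function identically equal to $1$ and $\mathcal{K}$ is the \emph{taboo} operator of the chain killed at the level $A$, built just as $\mathcal{K}_\infty$. Thus $(\mathbb{I}-\mathcal{K})\circ\mathfrak{M}_1=\mathbf{1}$, and---granting the operator-theoretic properties of $\mathcal{K}$ whose formal discussion is deferred in the paper, in particular that its Neumann series converges, which holds here because the killed chain terminates almost surely with a finite mean, i.e.\ $\sum_{k\ge0}\mathcal{K}^k\circ\mathbf{1}<\infty$---one may write $\mathfrak{M}_1=(\mathbb{I}-\mathcal{K})^{-1}\circ\mathbf{1}=\sum_{k\ge0}\mathcal{K}^k\circ\mathbf{1}$, with $\mathcal{K}^k\circ\mathbf{1}(x,A)=\Pr(\T(x,A)>k)$.

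Next I would feed $\mathfrak{M}_1$ through the same expansion. By Lemma~\ref{lem:ineq1}, $\mathfrak{M}_1(z,A)\le\mathfrak{M}_1(0,A)$ for every admissible $z\ge0$; since the detection statistic driving $\T(x,A)$ stays in $[0,A)$ up to the stopping time, every intermediate integration variable appearing in $\mathcal{K}^k\circ\mathfrak{M}_1$ lies in that range, so $\mathfrak{M}_1\le\mathfrak{M}_1(0,A)\,\mathbf{1}$ pointwise on the relevant domain. Since $\mathcal{K}$ has a non-negative kernel it is monotone (i.e.\ $u\le v$ implies $\mathcal{K}\circ u\le\mathcal{K}\circ v$), and iterating this bound gives $\mathcal{K}^k\circ\mathfrak{M}_1\le\mathfrak{M}_1(0,A)\,\mathcal{K}^k\circ\mathbf{1}$ for every $k\ge0$. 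Summing over $k$ and invoking the Neumann series once more,
\begin{align*}
\big[(\mathbb{I}-\mathcal{K})^{-1}\circ\mathfrak{M}_1\big](x,A)
&=\sum_{k\ge0}\mathcal{K}^k\circ\mathfrak{M}_1(x,A)\\
&\le\mathfrak{M}_1(0,A)\sum_{k\ge0}\mathcal{K}^k\circ\mathbf{1}(x,A)
=\mathfrak{M}_1(0,A)\,\mathfrak{M}_1(x,A),
\end{align*}
which is exactly the claimed inequality.

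The delicate points are the book-keeping that legitimizes applying Lemma~\ref{lem:ineq1} at each stage---namely checking that the intermediate headstarts produced under $\mathcal{K}$ genuinely remain in $[0,A)$ in the two-dimensional $(Y_n,X_n)$ picture---together with the interchange of the infinite summation with $\mathcal{K}$, which is legitimate by monotone convergence because all integrands are non-negative and $\mathfrak{M}_1$ is finite; I expect the former to be the main obstacle. As a consistency check, one can verify the identity $\mathcal{K}^k\circ\mathfrak{M}_1(x,A)=\EV\big[(\T(x,A)-k)^{+}\big]$, whence $(\mathbb{I}-\mathcal{K})^{-1}\circ\mathfrak{M}_1(x,A)=\EV\big[\tfrac12\,\T(x,A)\big(\T(x,A)+1\big)\big]=\tfrac12\big(\mathfrak{M}_2(x,A)+\mathfrak{M}_1(x,A)\big)$, so the lemma is equivalent to $\mathfrak{M}_2(x,A)\le 2\,\mathfrak{M}_1(0,A)\,\mathfrak{M}_1(x,A)-\mathfrak{M}_1(x,A)$; evaluating this at $x=0$ gives $\Var[\T]=\mathfrak{M}_2(0,A)-\mathfrak{M}_1(0,A)^2\le\mathfrak{M}_1(0,A)^2$, i.e.\ $\sigma_{\T}\le\EV[\T]$, which is the inequality this lemma is designed to supply.
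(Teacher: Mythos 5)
Your proof is correct and follows essentially the same route as the paper: the paper's own (one-sentence) argument is precisely to apply Lemma~\ref{lem:ineq1} to upper-bound each summand $\mathcal{K}^k\circ\mathfrak{M}_1$ in the Neumann expansion of $(\mathbb{I}-\mathcal{K})^{-1}\circ\mathfrak{M}_1$ by $\mathfrak{M}_1(0,A)\,\mathcal{K}^k\circ\mathbf{1}$ and then resum, which is exactly what you do. Your write-up merely makes explicit the positivity/monotonicity of $\mathcal{K}$ and the convergence book-keeping that the paper leaves implicit.
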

\begin{proof}
The desired result is a direct consequence of Lemma~\ref{lem:ineq1}, i.e., the inequality $\mathfrak{M}_1(x,A)\le\mathfrak{M}_1(0,A)$, $\forall x\ge0$, applied to upperbound each summand in the respective Neumann expansion for $\big[(\mathbb{I}-\mathcal{K})^{-1}\circ\mathfrak{M}_1\big](x,A)$.
\end{proof}

With Lemma~\ref{lem:ineq1} and Lemma~\ref{lem:ineq2} in mind, it is easy to see that $\big[(\mathbb{I}-\mathcal{K})^{-1}\circ\mathfrak{M}_1\big](x,A)\le\mathfrak{M}_1^2(0,A)$. As shown in~\cite{Polunchenko+etal:SA2014}, the second moment, $\mathfrak{M}_2(x,A)\triangleq\EV[\T^2(x,A)]$, of the generic stopping time, $\T(x,A)$, is governed by the integral equation $\mathfrak{M}_2=2\mathfrak{M}_1-1+\mathcal{K}\circ\mathfrak{M}_2$, where $\mathfrak{M}_1(x,A)\triangleq\EV[\T(x,A)]$. The operator solution to this equation is of the form $\mathfrak{M}_2=2(\mathbb{I}-\mathcal{K})^{-1}\circ\mathfrak{M}_1-\mathfrak{M}_1$. Finally, since $\Var[\T]=\mathfrak{M}_2-\mathfrak{M}_1^2$, and $\mathcal{K}\circ\mathfrak{M}_1=\mathfrak{M}_1-1\le\mathfrak{M}_1$, we obtain
\begin{align*}
\mathfrak{M}_2(x,A)
&=
2\big[(\mathbb{I}-\mathcal{K})^{-1}\circ\mathfrak{M}_1\big](x,A)-\mathfrak{M}_1(x,A)\\
&\le
2\mathfrak{M}_1^2(0,A)-\mathfrak{M}_1(x,A)\\
&\le
2\mathfrak{M}_1^2(0,A).
\end{align*}
Therefore, $\mathfrak{M}_2(0,A)\le 2\mathfrak{M}_1^2(0,A)$, and we have
\begin{align*}
\sigma_{\T}^2\triangleq\Var[\T(0,A)]
&\triangleq
\mathfrak{M}_2(0,A)-\mathfrak{M}_1^2(0,A)\\
&\le
\mathfrak{M}_1^2(0,A),
\end{align*}
which is to say that $\sigma_{\T}\le\EV[\T]$, provided the detection statistic starts off zero (no headstart). We stress that the assumption of no headstart is a critical one, and when the detection statistic does have a (non-zero) headstart, the inequality $\sigma_{\T}\le\EV[\T]$ may fail to hold.

We conclude this subsection with a remark on how to reduce the variance of the sample mean $\bar{\T}_N$. The idea is that since the first two terms in the Neumann series are always computable exactly, as such there is no need to estimate either one of them. Hence, instead of sampling the trajectories from $\mathcal{K}(x,y)$ assuming the starting point is 0, one may start each trajectory off a random point, sampled from $\mathcal{K}(x,y)$, but restricted to the interval $[0,A]$.

\begin{figure*}[htb!]
\begin{center}
\begin{tabular}{cc}
%\begin{minipage}{5.5in}
%\centerline{
\includegraphics[height=2.1in,width=2.8in] {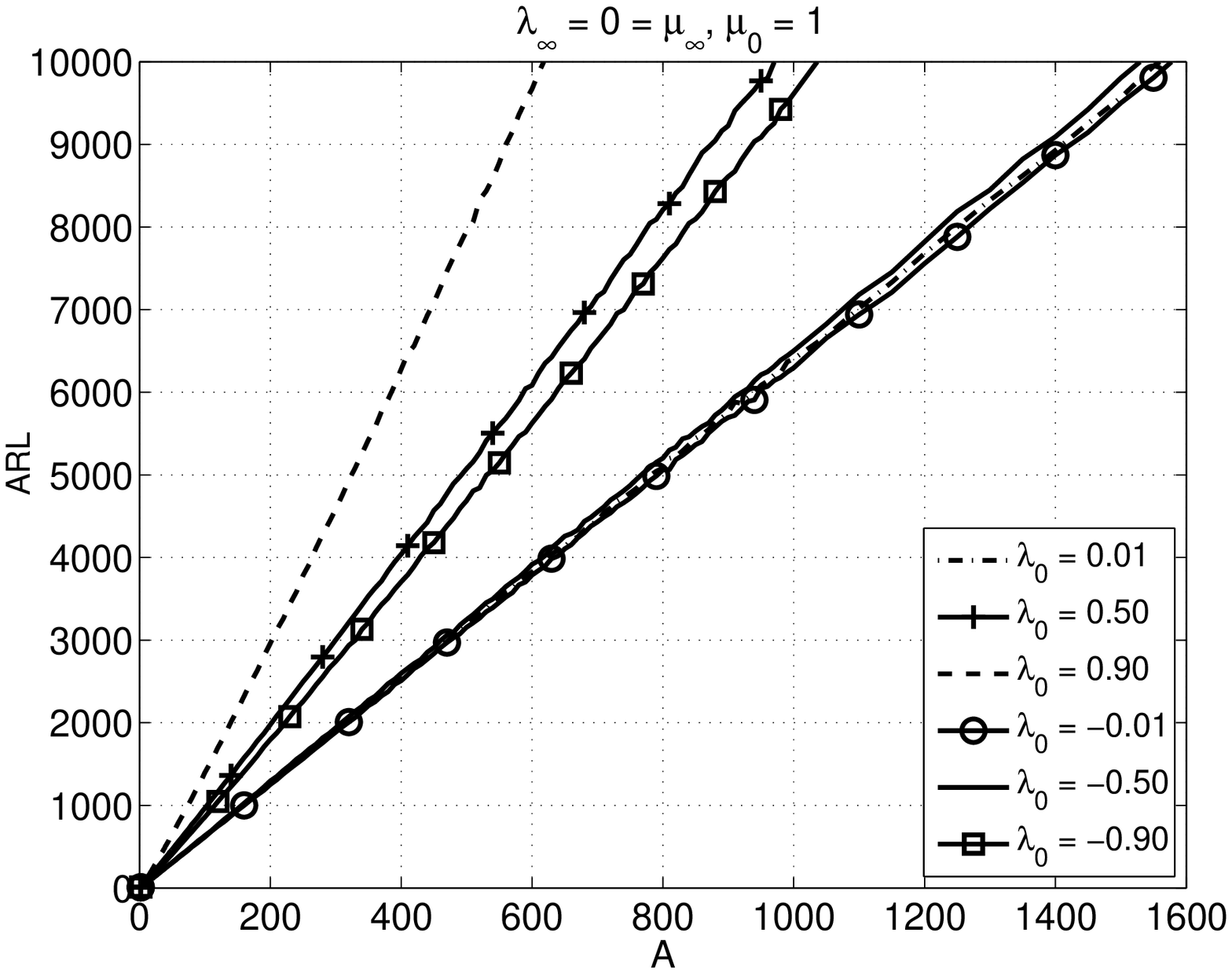}
&
\includegraphics[height=2.1in,width=2.8in] {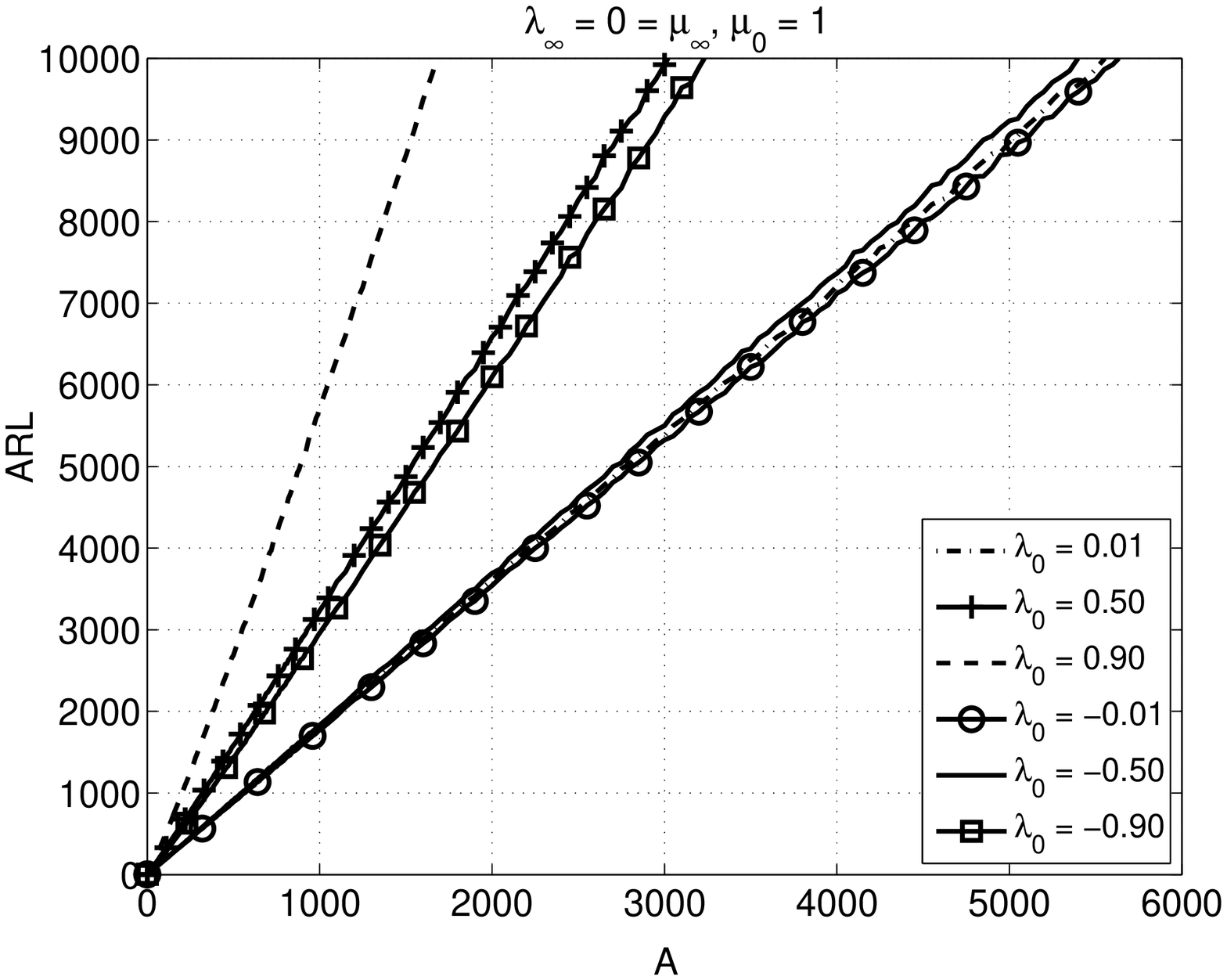}
\\ {\hspace{0.05in}} (a) & (b)
\\
\includegraphics[height=2.1in,width=2.8in] {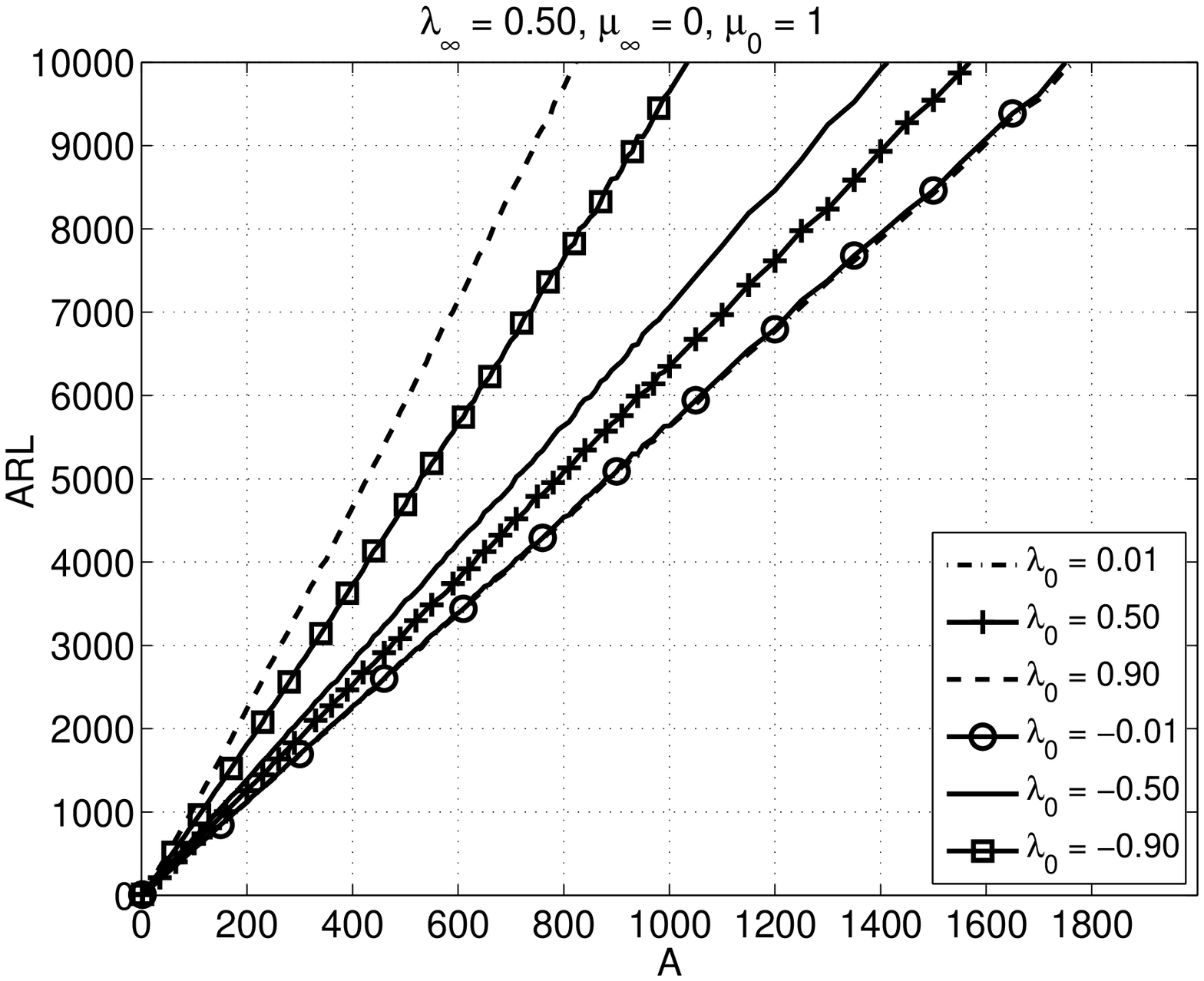}
&
\includegraphics[height=2.1in,width=2.8in] {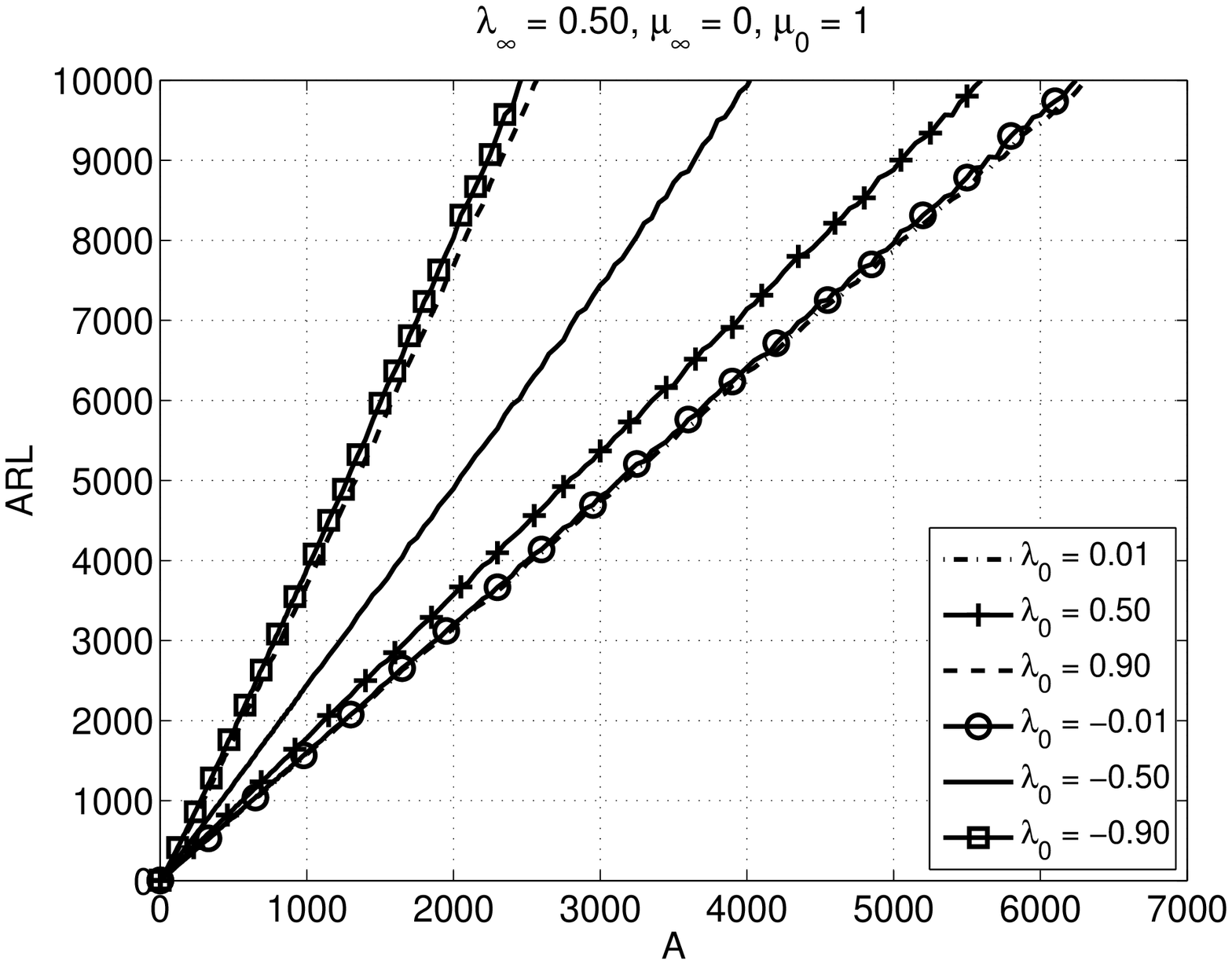}
\\ {\hspace{0.05in}} (c) & (d)
\\
\includegraphics[height=2.1in,width=2.8in] {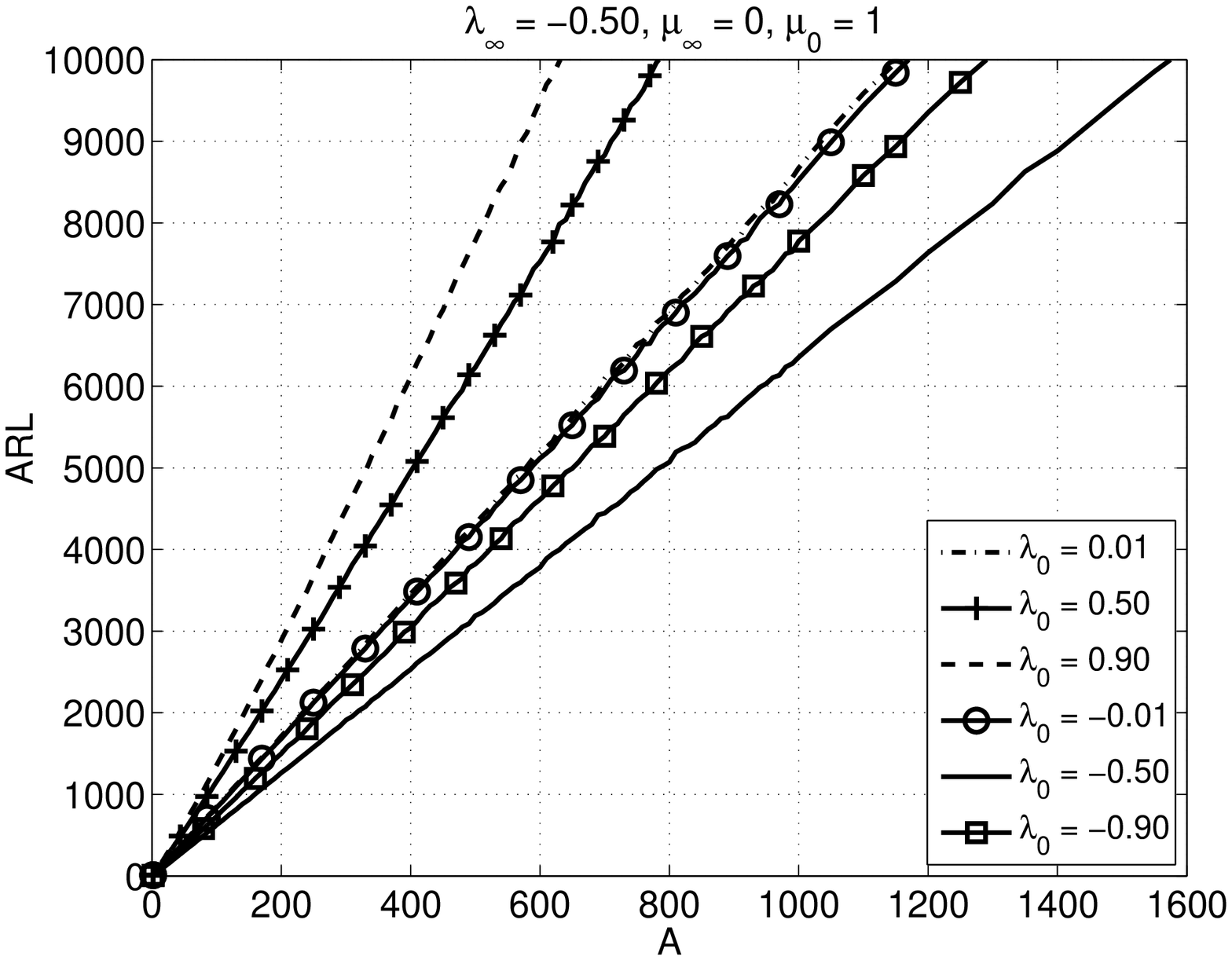}
&
\includegraphics[height=2.1in,width=2.8in] {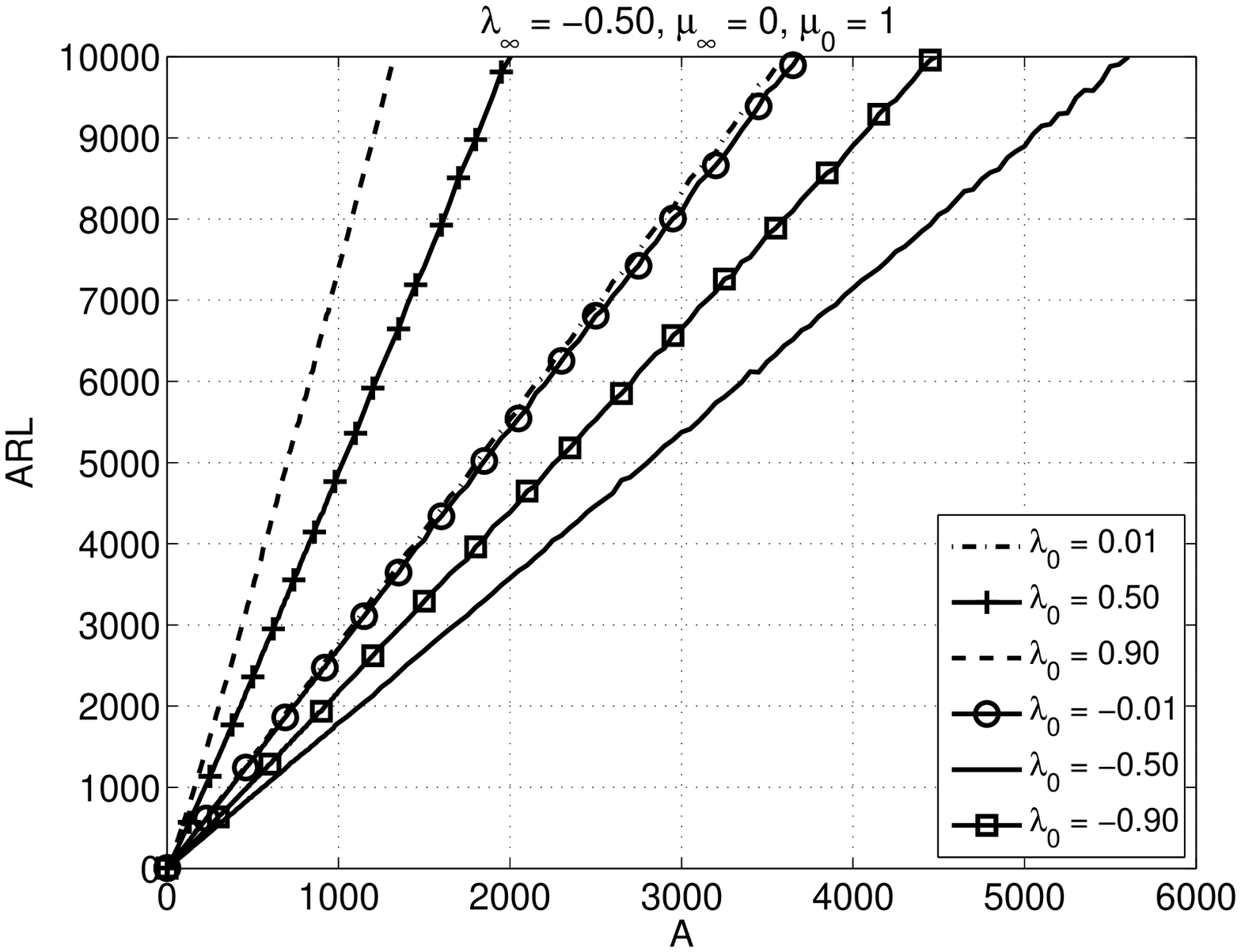}
\\ {\hspace{0.05in}} (e) & (f)
\end{tabular}i
\caption{\label{fig_threshold}
$\ARL$ as a function of threshold $A$ for the (a) CUSUM chart and (b) SR
procedure in the i.i.d.\ pre-change setting with $\mu_{\infty} = 0$.
$\ARL$ vs.\ $A$ for the CUSUM chart and SR procedure in the case where
(c)-(d) $\lambda_{\infty} = 0.50$ and (e)-(f) $\lambda_{\infty} = -0.50$
for different $\lambda_0$ values.}
\end{center}
\vspace{-5mm}
\end{figure*}

\begin{figure*}[htb!]
\begin{center}
\begin{tabular}{cc}
%\begin{minipage}{5.5in}
%\centerline{
\includegraphics[height=2.3in,width=2.8in] {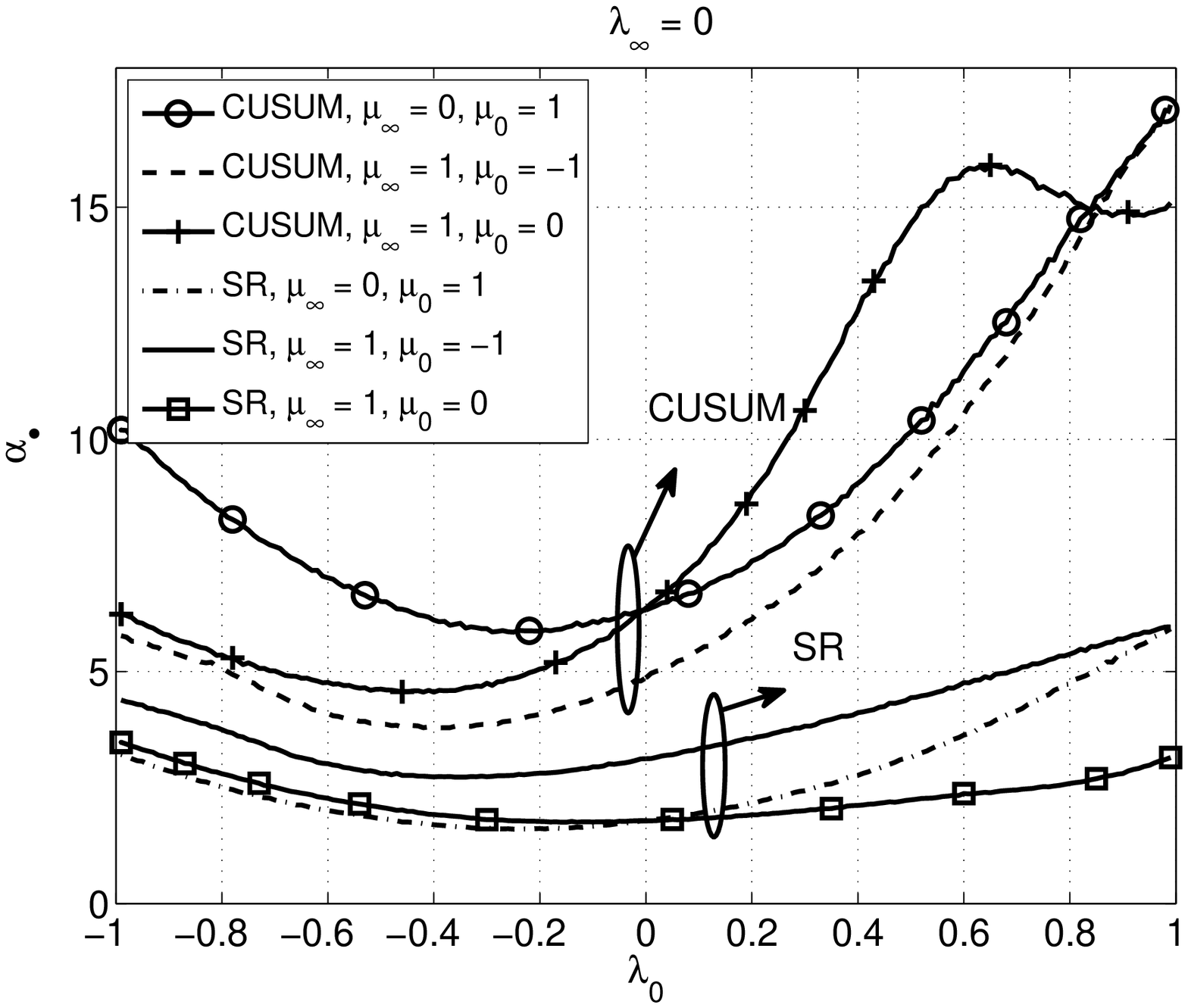}
&
\includegraphics[height=2.3in,width=2.8in] {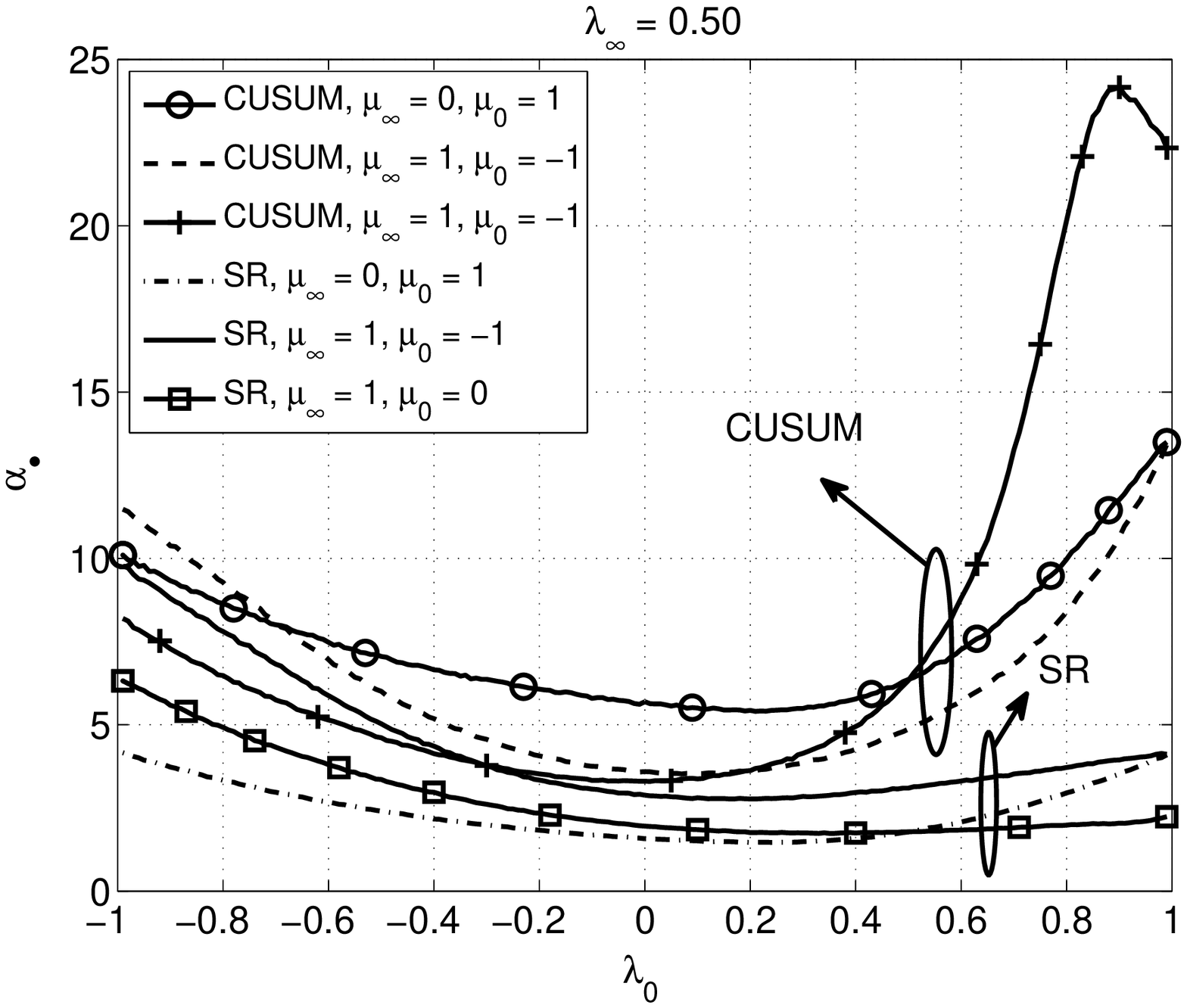}
%&
\\
{\hspace{0.05in}} (a) &  (b) \\
\includegraphics[height=2.3in,width=2.8in] {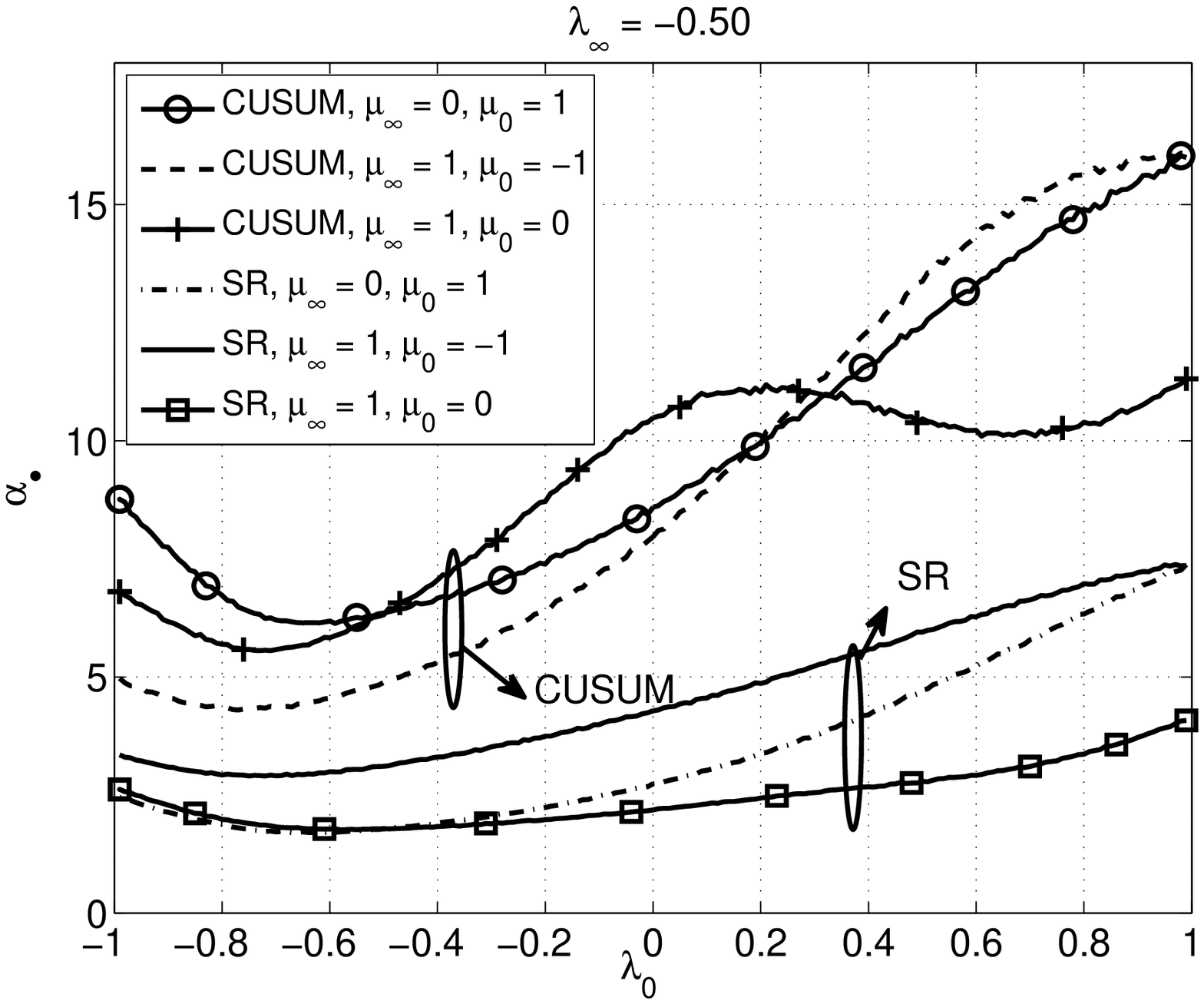}
&
\\ {\hspace{0.05in}} (c) &
\end{tabular}
\caption{\label{fig_partial} Estimates of $\alpha_{\sf cs}$ and $\alpha_{\sf sr}$ as a
function of $\lambda_0$ under three different settings for pre-change and post-change
means.}
\end{center}
\vspace{-5mm}
\end{figure*}

\begin{figure*}[htb!]
\begin{center}
\begin{tabular}{cc}
%\begin{minipage}{5.5in}
%\centerline{
\includegraphics[height=2.1in,width=2.8in] {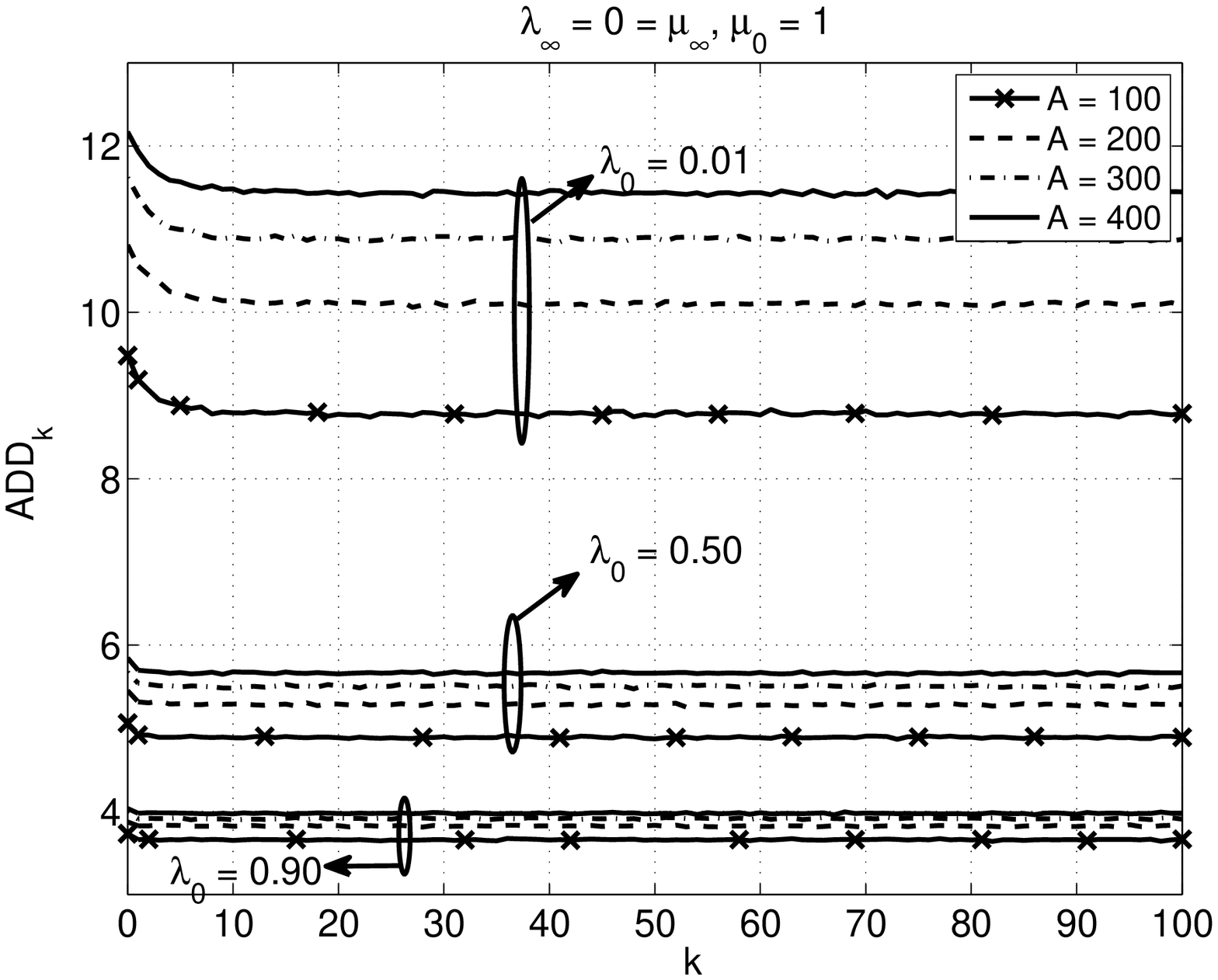}
&
\includegraphics[height=2.1in,width=2.8in] {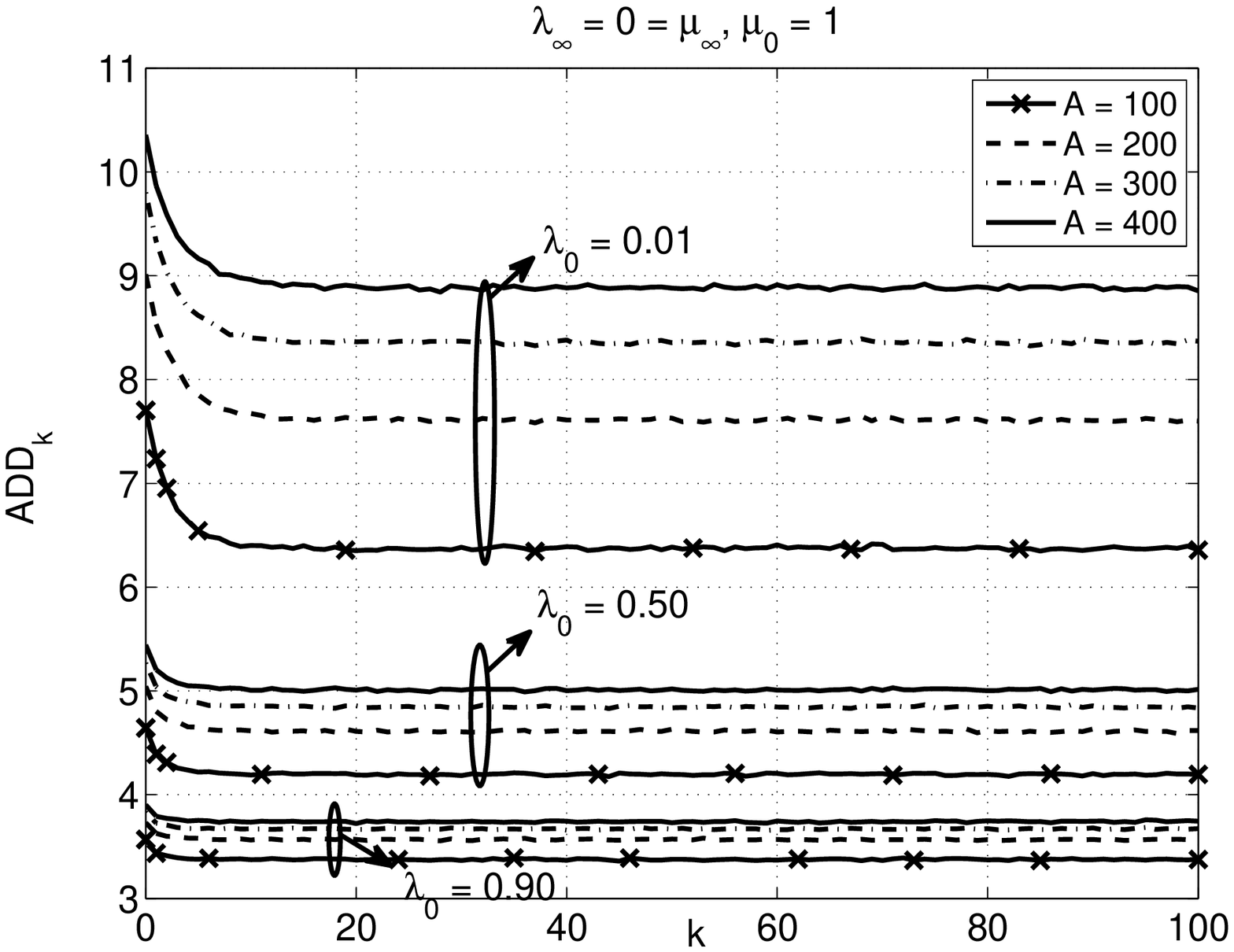}
\\ {\hspace{0.05in}} (a) & (b) \\
\includegraphics[height=2.1in,width=2.8in] {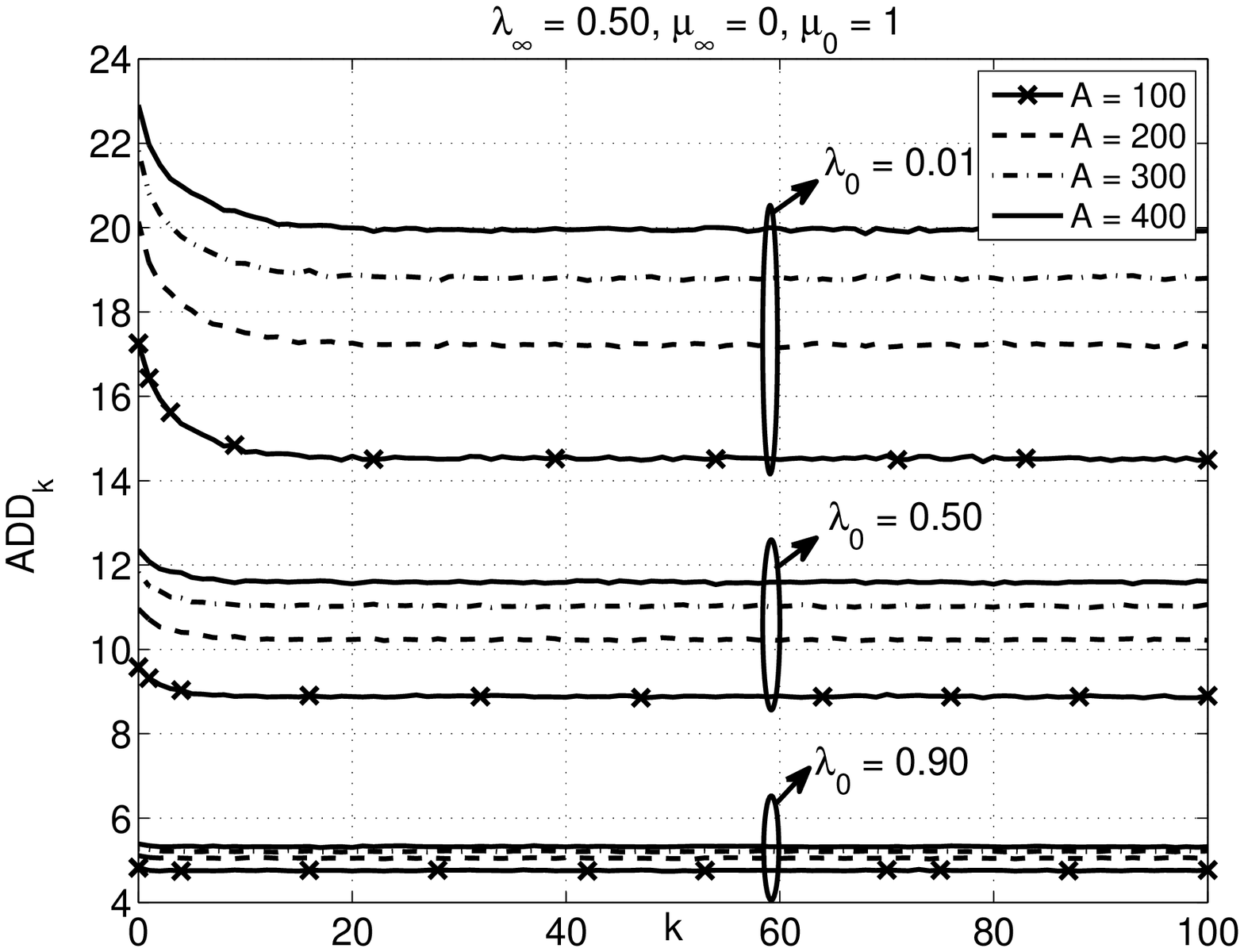}
&
\includegraphics[height=2.1in,width=2.8in] {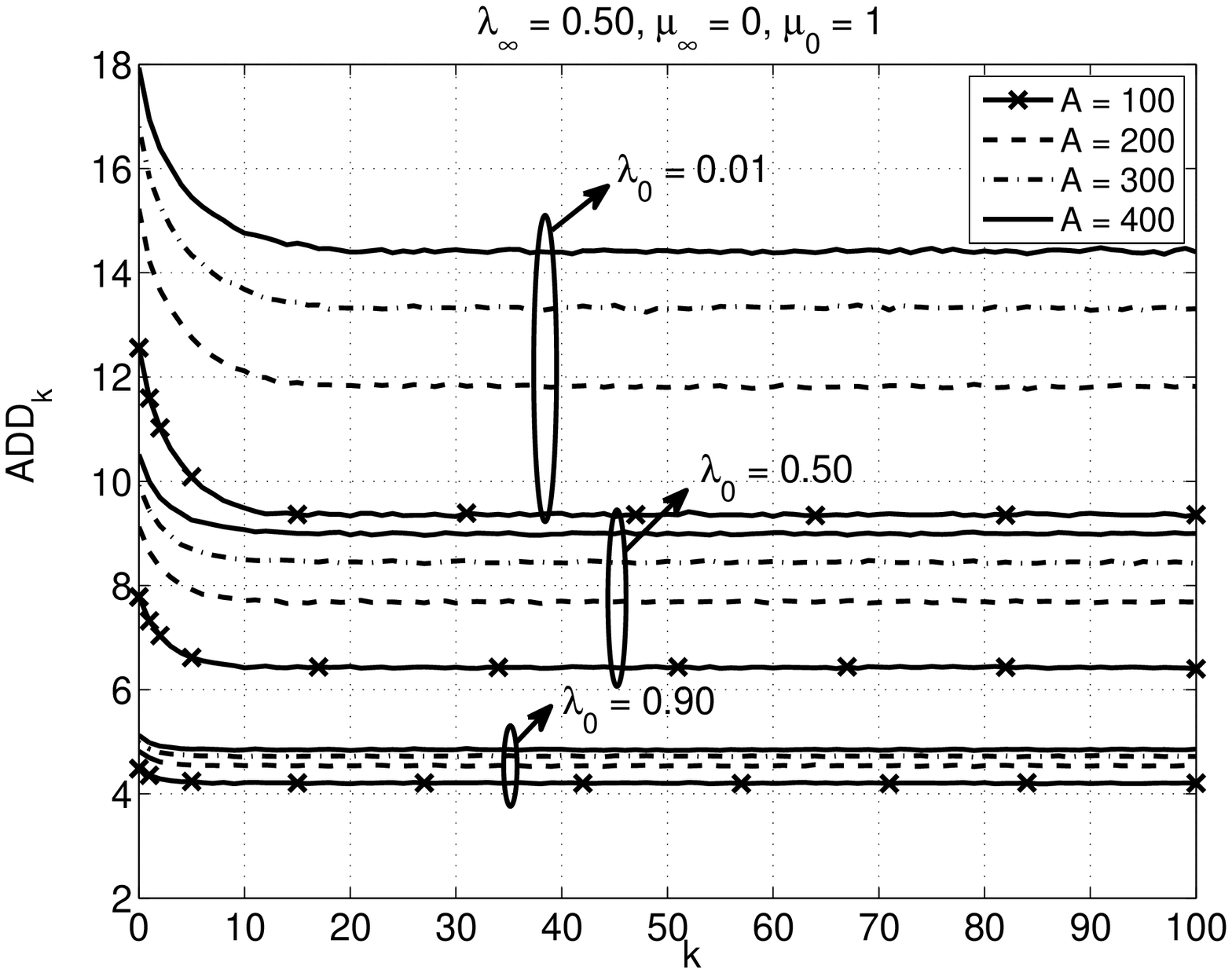}
\\ {\hspace{0.05in}} (c) & (d) \\
\includegraphics[height=2.1in,width=2.8in] {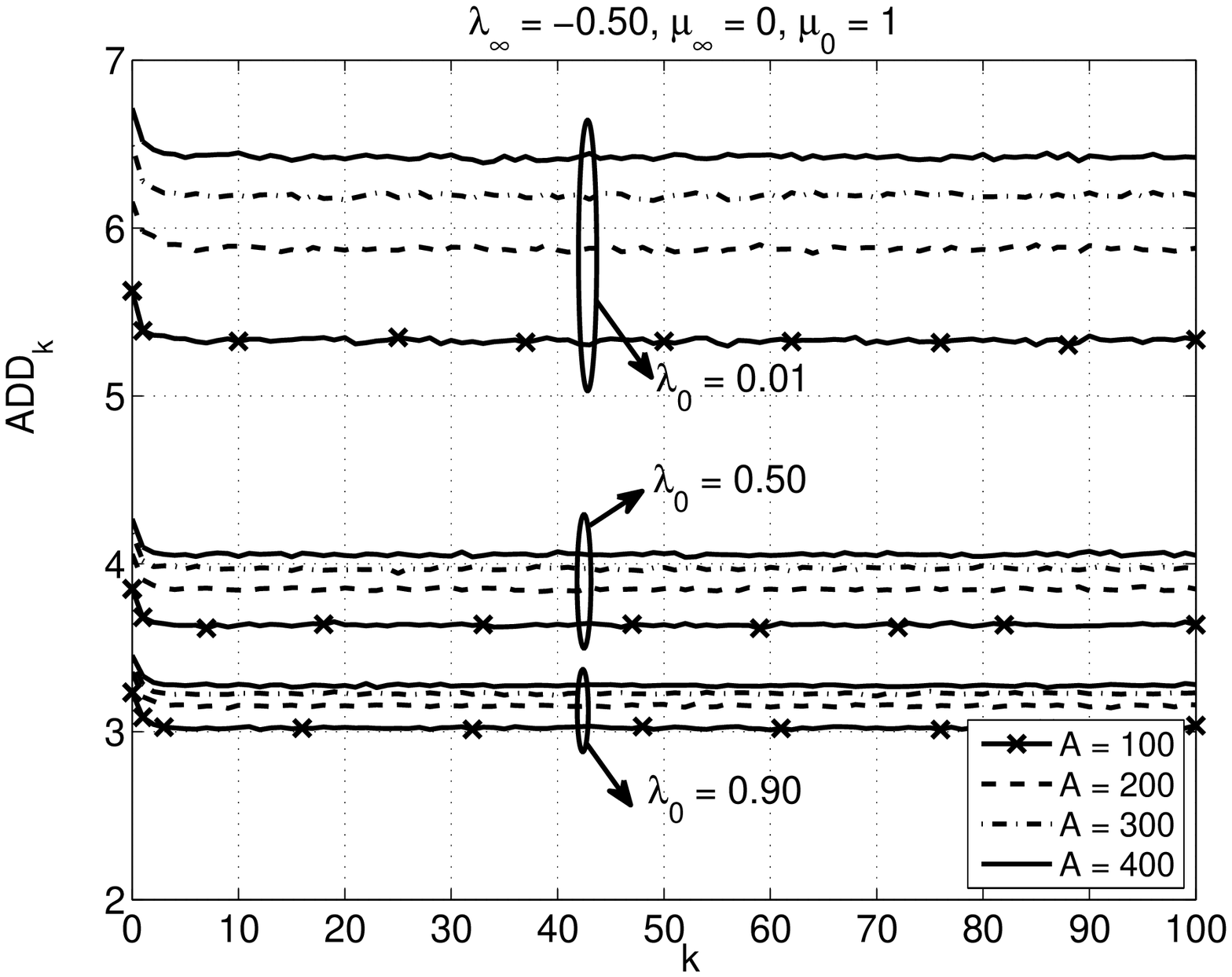}
&
\includegraphics[height=2.1in,width=2.8in] {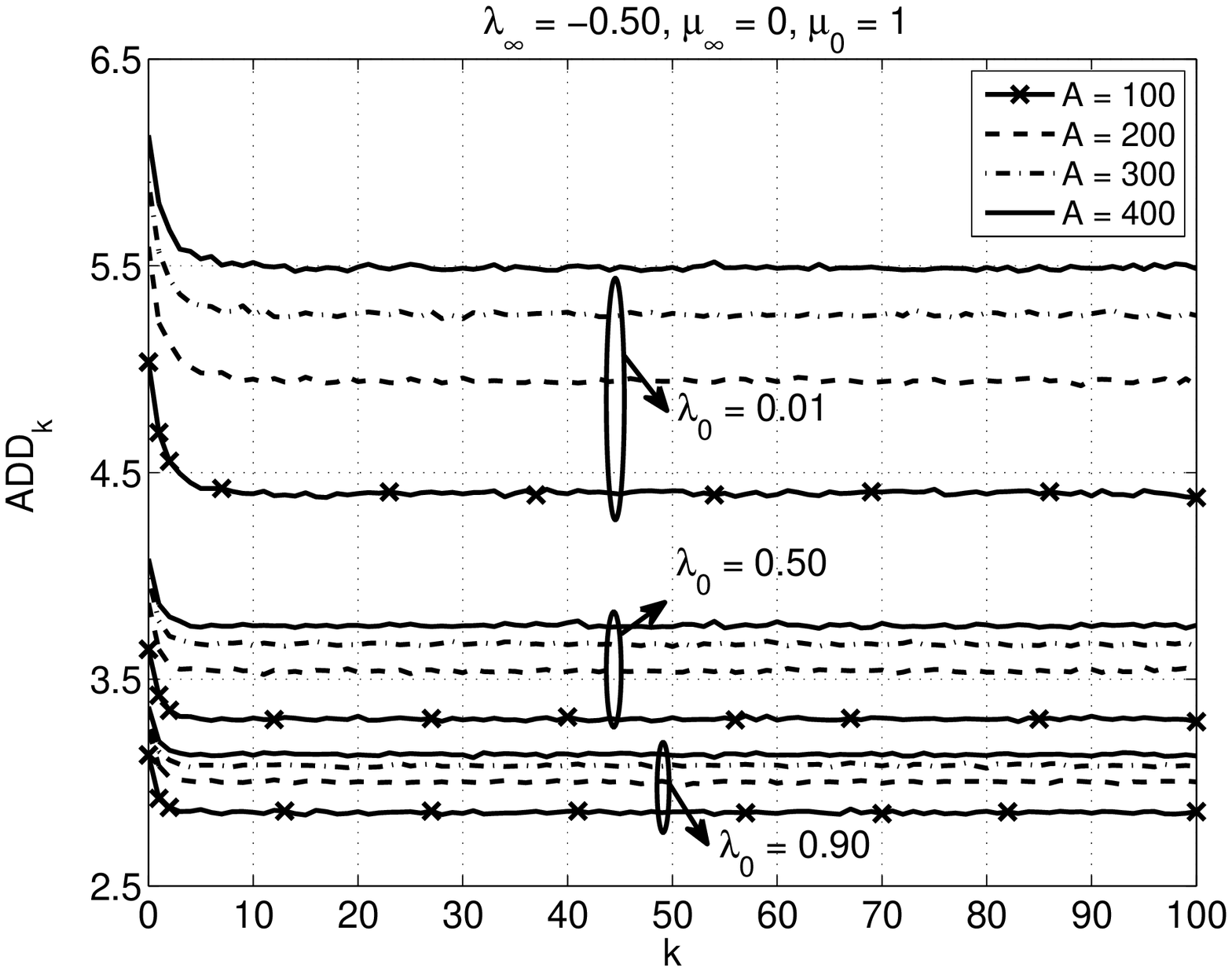}
\\ {\hspace{0.05in}} (e) & (f)
\end{tabular}
\caption{\label{fig_addk}
$\ADD_k$ as a function of $k$ in the i.i.d.\ pre-change setting with
$\mu_{\infty} = 0$ for the (a) CUSUM chart and (b) SR procedure.
$\ADD_k$ vs.\ $k$ for the CUSUM chart and SR procedure in the case
where (c)-(d) $\lambda_{\infty} = 0.50$ and (e)-(f) $\lambda_{\infty} =
-0.50$.}
\end{center}
\vspace{-5mm}
\end{figure*}

\ignore{
How to set thresholds for CUSUM and SR Procedures,
${\rm ADD}_k$ as a Function of $k$

Is the SR procedure inferior to the CUSUM chart?
}

\section{Numerical Studies}
\label{sec:Num}
%Since the AR process dynamics as well as the behavior of ${\cal KL}$
%only depend on the noise statistics via the ratios $\mu_{\infty}/\sigma$
%and $\mu_0/\sigma$, without loss in generality, we assume that $\sigma = 1$
%in the sequel.

We now apply the numerical techniques illustrated in the preceding section
to compute ARL and SADD and perform a comparative analysis of the CUSUM chart
and SR procedure. But before this, we need to design the procedures carefully.

\subsection{Designing the CUSUM chart and SR procedure}
The design of the CUSUM chart and the SR procedure requires an understanding
of how the threshold $A$ should be set (as a function of $\gamma$) to ensure
that the $\ARL$ with either procedure is at least $\gamma$. For this, we study
the $\ARL$ behavior of both procedures numerically as a function of the
threshold $A$. In the i.i.d.\ pre-change setting with $\mu_{\infty} = 0$ and
$\mu_0 = 1$, Figs.~\ref{fig_threshold}(a) and~(b) plot $\ARL$ for the CUSUM
chart and the SR procedure, respectively, for different values of $A$ and
$\lambda_0$. In Figs.~\ref{fig_threshold}(c)-(d), $\ARL$ vs.\ $A$ is plotted for
the CUSUM chart and SR procedure
%(performance of SR procedure is not provided due to space constraints)
with $\lambda_{\infty} = 0.50$ and several $\lambda_0$ values.
Similarly, in Figs.~\ref{fig_threshold}(e)-(f), $\ARL$ vs.\ $A$ is plotted
for the two procedures
%SR procedure (performance of CUSUM chart not provided due to space constraints)
with $\lambda_{\infty} = -0.50$ and several $\lambda_0$ values. The robust
linear dependence in our studies (across different parameter values) suggests
the following empirical relationship (as $A \rightarrow \infty$):
\begin{eqnarray}
\ARL(\TCS) = \alpha_{ \sf cs} \cdot A  +
\beta_{\sf cs},
& &
\ARL(\TSR) = \alpha_{ \sf sr} \cdot A +
\beta_{\sf sr},
%\nonumber
\label{arl_a}
\end{eqnarray}
for some constants
%$\alpha_{\sf cs}$, $\beta_{\sf cs}$, $\alpha_{\sf sr}$ and $\beta_{\sf sr}$
$\{ \alpha_{\bullet} \}$ and $\{ \beta_{\bullet} \}$ depending only on the
model parameters.
% {\bf \em Fill in:}
%Further studies are currently being pursued to understand the
%behavior of $\alpha_{\sf cs}$ and $\alpha_{\sf sr}$ as a function of the model
%parameters.
To further understand the behavior of $\alpha_{\sf cs}$ and $\alpha_{\sf sr}$
as a function of the model parameters, in Figs.~\ref{fig_partial}(a)-(c), we plot
%$\frac{\partial \ARL\left( \tau_{ {\sf cs}} \right)}{\partial A}$ and
%$\frac{\partial \ARL\left( \tau_{ {\sf sr}} \right)}{\partial A}$
estimates of these quantities as a function of $\lambda_0$ in three
settings where $\lambda_{\infty} = 0$, $\lambda_{\infty} = 0.50$ and
$\lambda_{\infty} = -0.50$, each with: i) $\mu_{\infty} = 0$ and $\mu_0 = 1$,
ii) $\mu_{\infty} = 1$ and $\mu_0 = -1$, and iii) $\mu_{\infty} = 1$ and
$\mu_0 = 0$. %While $\alpha_{\bullet}$ shows a
%complicated behavior as a function of $\lambda_0$, in all the cases considered, the
%smallest value of

Noting that $\SADD$ is the supremum of the conditional average detection
delay ($\ADD_k$), we are interested in the choice of $k$ that maximizes
$\ADD_k$. In the $\lambda_{\infty} = 0 = \lambda_0$ case, it is well-understood
that this maximum occurs at $k = 0$. However, generalizing this result to the
AR setting seems difficult. Thus, we pursue a numerical approach in understanding
this problem. In Figs.~\ref{fig_addk}(a) and~(b), we plot the behavior of
$\ADD_k$ as a function of $k$ for the CUSUM chart and the SR procedure, respectively,
for different $A$ and $\lambda_0$ values. The same trend is plotted in
Figs.~\ref{fig_addk}(c)-(d) and~(e)-(f) for the CUSUM chart and SR procedure in the
$\lambda_{\infty} = 0.50$ and $\lambda_{\infty} = -0.50$ settings, respectively.
%(other cases not provided due to space constraints).
In all the cases considered, the maximum of $\ADD_k$ occurs at $k = 0$ thus
suggesting that $\SADD = \ADD_0$ in the AR setting also. This fact is critical
since Sec.~\ref{sec:performance-evaluation} allows us to compute $\ADD_0(T) = \EV_0(T)$.

Further, from these studies, we also observe that an increase in $A$ leads to
an increased $\ADD_k$ for all $k$, and the same threshold results in a higher
$\ADD_k$ for the CUSUM chart relative to the SR procedure --- both of which
are not surprising conclusions. Also, note that for both procedures, $\ADD_k$
converges to a steady-state value ($\ADD_{\infty}$) quickly. $\ADD_{\infty}$
can be treated as the average delay in detecting a change upon repeated trials
of the monitoring process.
It can also be seen from Table~\ref{table_add0_addinf} that the SR procedure
is more sensitive to the  change-point than the CUSUM chart as captured by a
larger value for the metric $\ADD_0 - \ADD_{\infty}$. Further, note that
$\ADD_0 - \ADD_{\infty}$ decreases as $A$ increases confirming the intuition
that in the large $A$ regime (and thus large $\ARL$ regime from~(\ref{arl_a})),
$\ADD_k$ is essentially independent of $k$.

\begin{table*}[!]
\caption{$\ADD_0$ and $\ADD_{\infty}$ for different choices of $A$,
$\lambda_{\infty}$ and $\lambda_0$ with the CUSUM chart and the SR
procedure.}
\label{table_add0_addinf}
%{\vspace{-0.1in}}
\begin{center}
    \scalebox{0.8}{
    \begin{tabular}{|l||c|c|c|c||c|c|c|}
\hline
&  & \multicolumn{3}{|c||}{CUSUM} &
 \multicolumn{3}{|c|}{SR} \\
\hline
& $A$ & $\ADD_0$ & $\ADD_{\infty}$ & $\ADD_0 - \ADD_{\infty}$
& $\ADD_0$ & $\ADD_{\infty}$ & $\ADD_0 - \ADD_{\infty}$ \\
\hline  \hline
& &
\multicolumn{6}{|c|}{$\lambda_{\infty} = 0$} \\ \hline
\multirow{3}[8]{0.5cm}{\rotatebox{90}{$\lambda_0=0.01$}} &
$100$ & $9.4794$ & $8.7804$ & $0.6990$ & $7.7031$ & $6.3756$ & $1.3275$ \\
& $200$ & $10.8089$ & $10.1006$ & $0.7083$ & $9.0141$ & $7.6100$ & $1.4041$ \\
& $300$ & $11.6191$ & $10.8822$ & $0.7369$ & $9.8014$ & $8.3579$ & $1.4435$ \\
& $400$ & $12.1713$ & $11.4335$ & $0.7378$ & $10.3568$ & $8.8858$ & $1.4710$ \\
%& & & & & \\
\hline
\multirow{3}[8]{0.5cm}{\rotatebox{90}{$\lambda_0=0.50$}} &
$100$ & $5.0596$ & $4.8889$ & $0.1707$ & $4.6441$ & $4.1965$ & $0.4476$ \\
& $200$ & $5.4508$ & $5.2807$ & $0.1701$ & $5.0438$ & $4.6122$ & $0.4316$ \\
& $300$ & $5.6700$ & $5.5059$ & $0.1641$ & $5.2674$ & $4.8460$ & $0.4214$ \\
& $400$ & $5.8421$ & $5.6598$ & $0.1823$ & $5.4434$ & $5.0101$ & $0.4333$ \\
%& & & & & \\
\hline
\multirow{3}[8]{0.5cm}{\rotatebox{90}{$\lambda_0=0.90$}} &
$100$ & $3.7302$ & $3.6574$ & $0.0728$ & $3.5688$ & $3.3758$ & $0.1930$ \\
& $200$ & $3.8730$ & $3.8218$ & $0.0512$ & $3.7294$ & $3.5667$ & $0.1627$ \\
& $300$ & $3.9778$ & $3.9136$ & $0.0642$ & $3.8311$ & $3.6697$ & $0.1614$ \\
& $400$ & $4.0328$ & $3.9745$ & $0.0583$ & $3.9018$ & $3.7407$ & $0.1611$ \\
%& & & & & \\
\hline
\hline
& &
\multicolumn{6}{|c|}{$\lambda_{\infty} = -0.50$} \\ \hline
\multirow{3}[8]{0.5cm}{\rotatebox{90}{$\lambda_0=0.01$}}
& $100$ & $5.6261$ & $5.3293$ & $0.2968$ & $5.0340$ & $4.4005$ & $0.6335$ \\
& $200$ & $6.1553$ & $5.8750$ & $0.2803$ & $5.5918$ & $4.9433$ & $0.6485$ \\
& $300$ & $6.4911$ & $6.1949$ & $0.2962$ & $5.9061$ & $5.2635$ & $0.6426$ \\
& $400$ & $6.7129$ & $6.4242$ & $0.2887$ & $6.1317$ & $5.4905$ & $0.6412$ \\
%& & & & & \\
\hline
\multirow{3}[8]{0.5cm}{\rotatebox{90}{$\lambda_0=0.50$}}
& $100$ & $3.8513$ & $3.6362$ & $0.2151$ & $3.6449$ & $3.3084$ & $0.3365$ \\
& $200$ & $4.0576$ & $3.8497$ & $0.2079$ & $3.8682$ & $3.5411$ & $0.3271$ \\
& $300$ & $4.1810$ & $3.9722$ & $0.2088$ & $3.9885$ & $3.6696$ & $0.3189$ \\
& $400$ & $4.2681$ & $4.0559$ & $0.2122$ & $4.0818$ & $3.7593$ & $0.3225$ \\
%& & & & & \\
\hline
\multirow{3}[8]{0.5cm}{\rotatebox{90}{$\lambda_0=0.90$}}
& $100$ & $3.2341$ & $3.0222$ & $0.2119$ & $3.1334$ & $2.8556$ & $0.2778$ \\
& $200$ & $3.3550$ & $3.1547$ & $0.2003$ & $3.2529$ & $3.0027$ & $0.2502$ \\
& $300$ & $3.4152$ & $3.2281$ & $0.1871$ & $3.3089$ & $3.0838$ & $0.2251$ \\
& $400$ & $3.4542$ & $3.2747$ & $0.1795$ & $3.3666$ & $3.1348$ & $0.2318$ \\
%& & & & & \\
\hline
\hline
& &
\multicolumn{6}{|c|}{$\lambda_{\infty} = 0.50$} \\ \hline
\multirow{3}[8]{0.5cm}{\rotatebox{90}{$\lambda_0=0.01$}}
& $100$ & $17.2517$ & $14.5254$ & $2.7263$ & $12.5621$ & $9.3600$ & $3.2021$ \\
& $200$ & $20.1402$ & $17.2179$ & $2.9223$ & $15.2273$ & $11.8155$ & $3.4118$ \\
& $300$ & $21.8208$ & $18.8028$ & $3.0180$ & $16.8002$ & $13.3295$ & $3.4707$ \\
& $400$ & $22.9024$ & $19.9526$ & $2.9498$ & $17.9379$ & $14.4156$ & $3.5223$ \\
%& & & & & \\
\hline
\multirow{3}[8]{0.5cm}{\rotatebox{90}{$\lambda_0=0.50$}}
& $100$ & $9.5787$ & $8.8826$ & $0.6961$ & $7.7808$ & $6.4275$ & $1.3533$ \\
& $200$ & $10.9615$ & $10.2309$ & $0.7306$ & $9.1228$ & $7.6872$ & $1.4356$ \\
& $300$ & $11.8190$ & $11.0263$ & $0.7927$ & $9.9153$ & $8.4447$ & $1.4706$ \\
& $400$ & $12.3540$ & $11.5900$ & $0.7640$ & $10.5127$ & $8.9938$ & $1.5189$ \\
%& & & & & \\
\hline
\multirow{3}[8]{0.5cm}{\rotatebox{90}{$\lambda_0=0.90$}}
& $100$ & $4.8245$ & $4.7586$ & $0.0659$ & $4.4862$ & $4.2058$ & $0.2804$ \\
& $200$ & $5.1108$ & $5.0501$ & $0.0607$ & $4.8206$ & $4.5406$ & $0.2800$ \\
& $300$ & $5.2656$ & $5.2123$ & $0.0533$ & $4.9900$ & $4.7238$ & $0.2662$ \\
& $400$ & $5.3880$ & $5.3240$ & $0.0640$ & $5.1271$ & $4.8481$ & $0.2790$ \\
%& & & & & \\
\hline
\hline
    \end{tabular}
        } %! endof \scalebox
    \end{center}
\end{table*}

\begin{figure*}[htb!]
\begin{center}
\begin{tabular}{cc}
%\begin{minipage}{5.5in}
%\centerline{
\includegraphics[height=2.5in,width=2.8in] {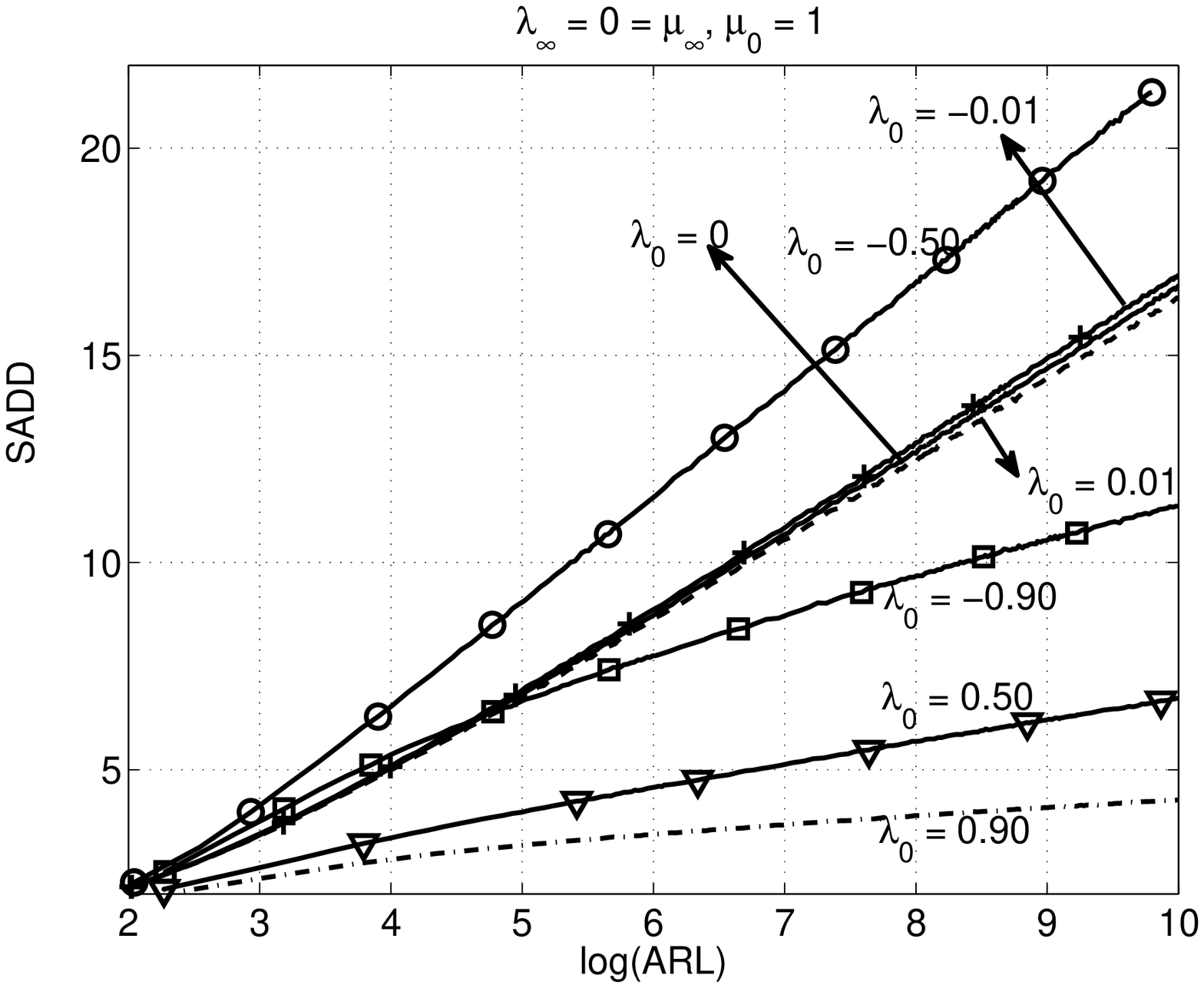}
&
\includegraphics[height=2.5in,width=2.8in] {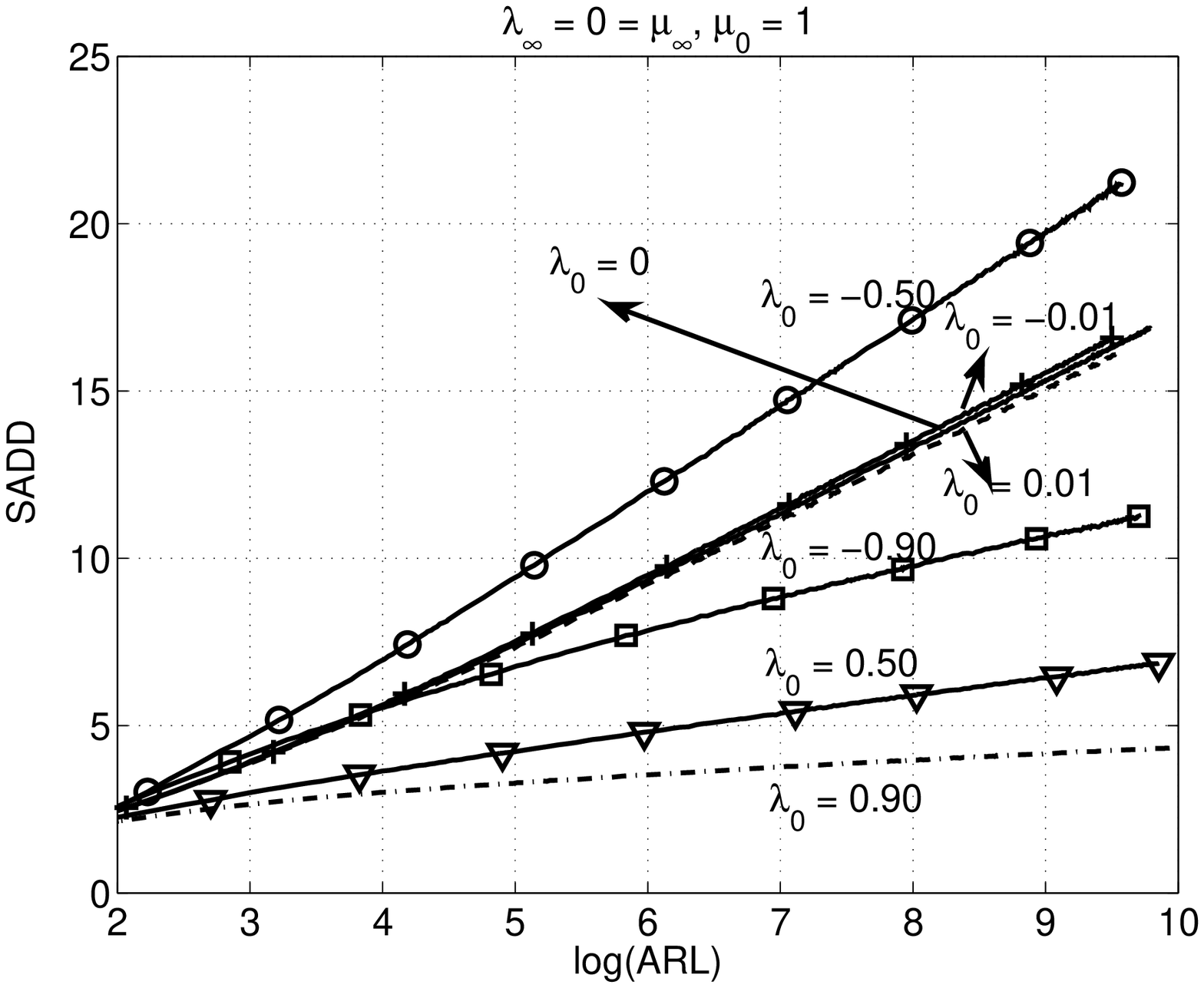}
%{kl_divergence_mu_123.eps}
\\ {\hspace{0.05in}} (a) & (b)
\\
\includegraphics[height=2.5in,width=2.8in] {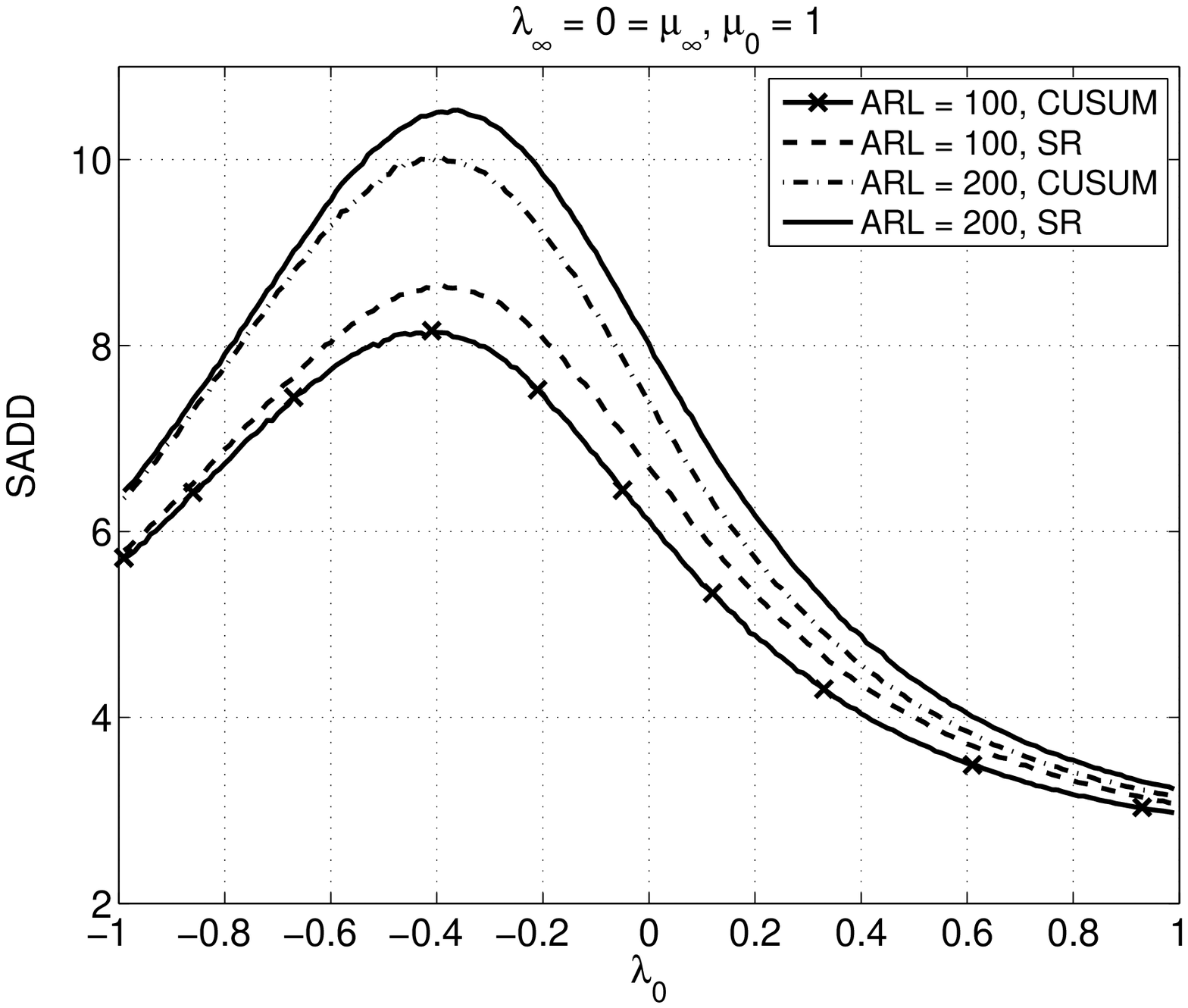}
&
\includegraphics[height=2.5in,width=2.8in] {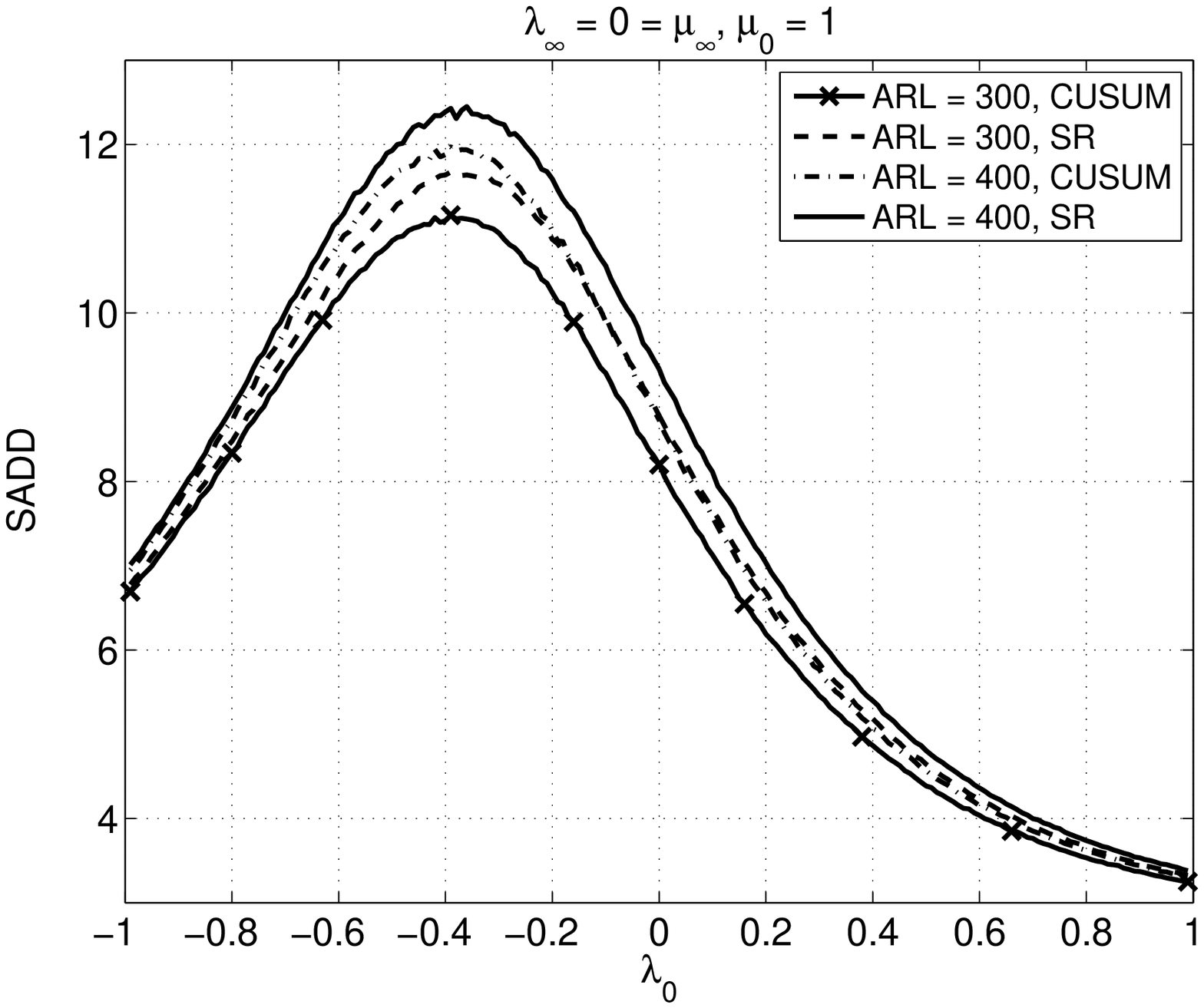}
\\
{\hspace{0.05in}} (c) & (d)
\end{tabular}
\caption{\label{fig_cusum_sr}
Performance of (a) CUSUM chart and (b) SR procedure for the i.i.d.\ pre-change
setting for different values of $\lambda_0$ with $\mu_0 = 1$. (c)-(d)
$\SADD$ as a function of $\lambda_0$ for the CUSUM chart and SR procedure
for different $\ARL$ values.}
%for different $\ARL$ values with (c) CUSUM chart and (d) SR procedure.}
\end{center}
\vspace{-5mm}
\end{figure*}

\begin{figure*}[htb!]
\begin{center}
\begin{tabular}{cc}
%\begin{minipage}{5.5in}
%\centerline{
\includegraphics[height=2.5in,width=2.8in] {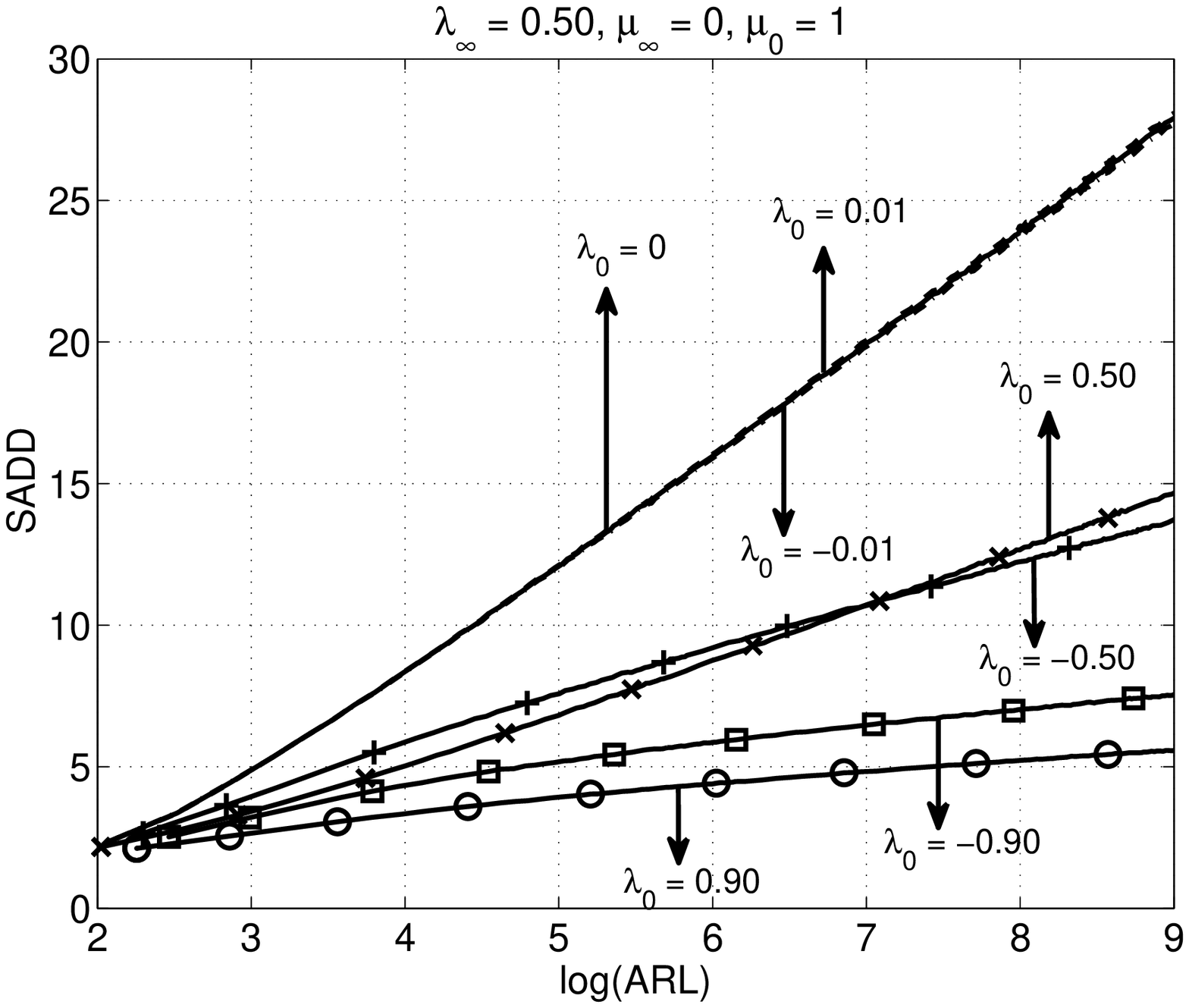}
&
\includegraphics[height=2.5in,width=2.8in] {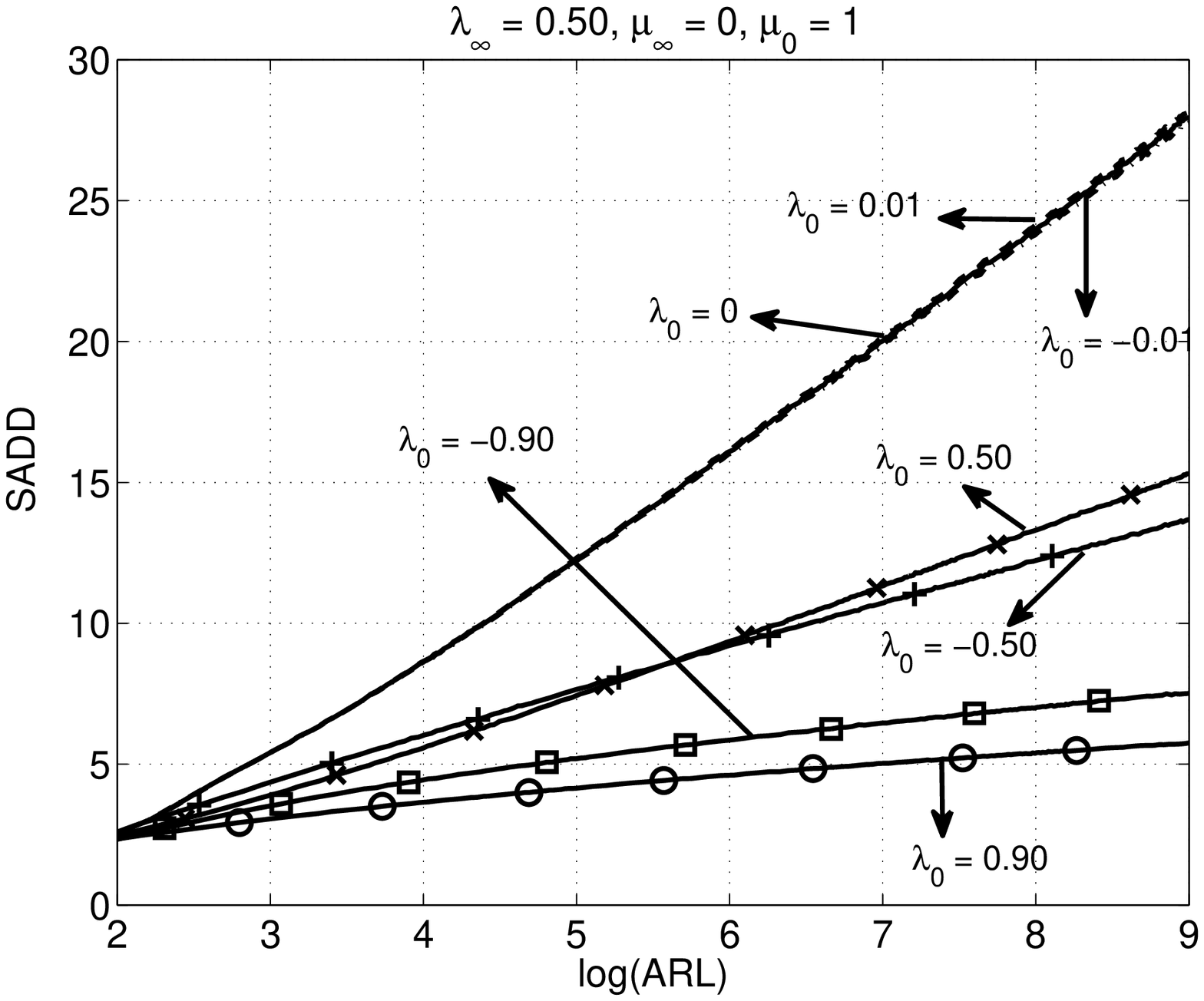}
%{kl_divergence_mu_123.eps}
\\ {\hspace{0.05in}} (a) & (b)
\\
\includegraphics[height=2.5in,width=2.8in] {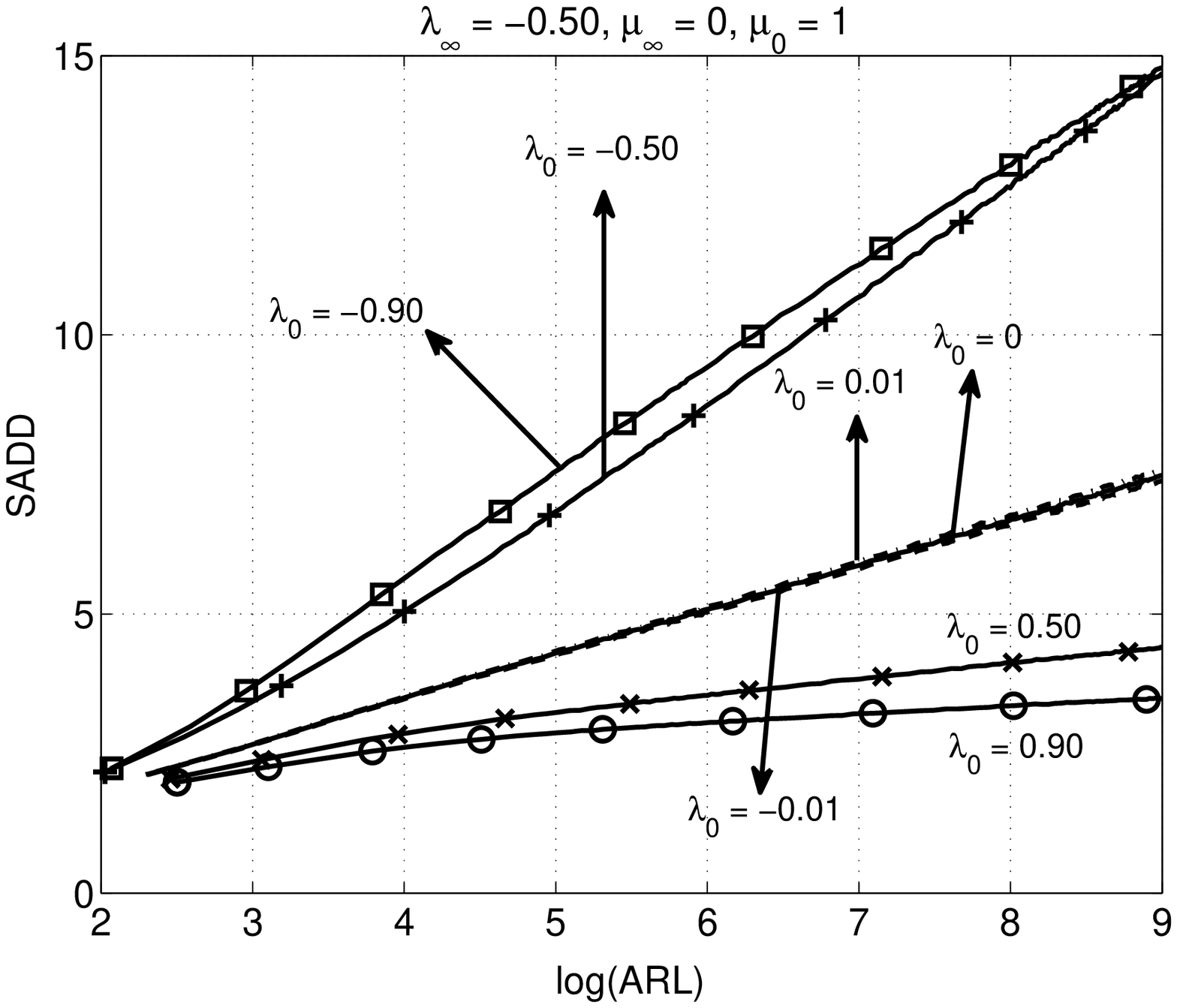}
&
\includegraphics[height=2.5in,width=2.8in] {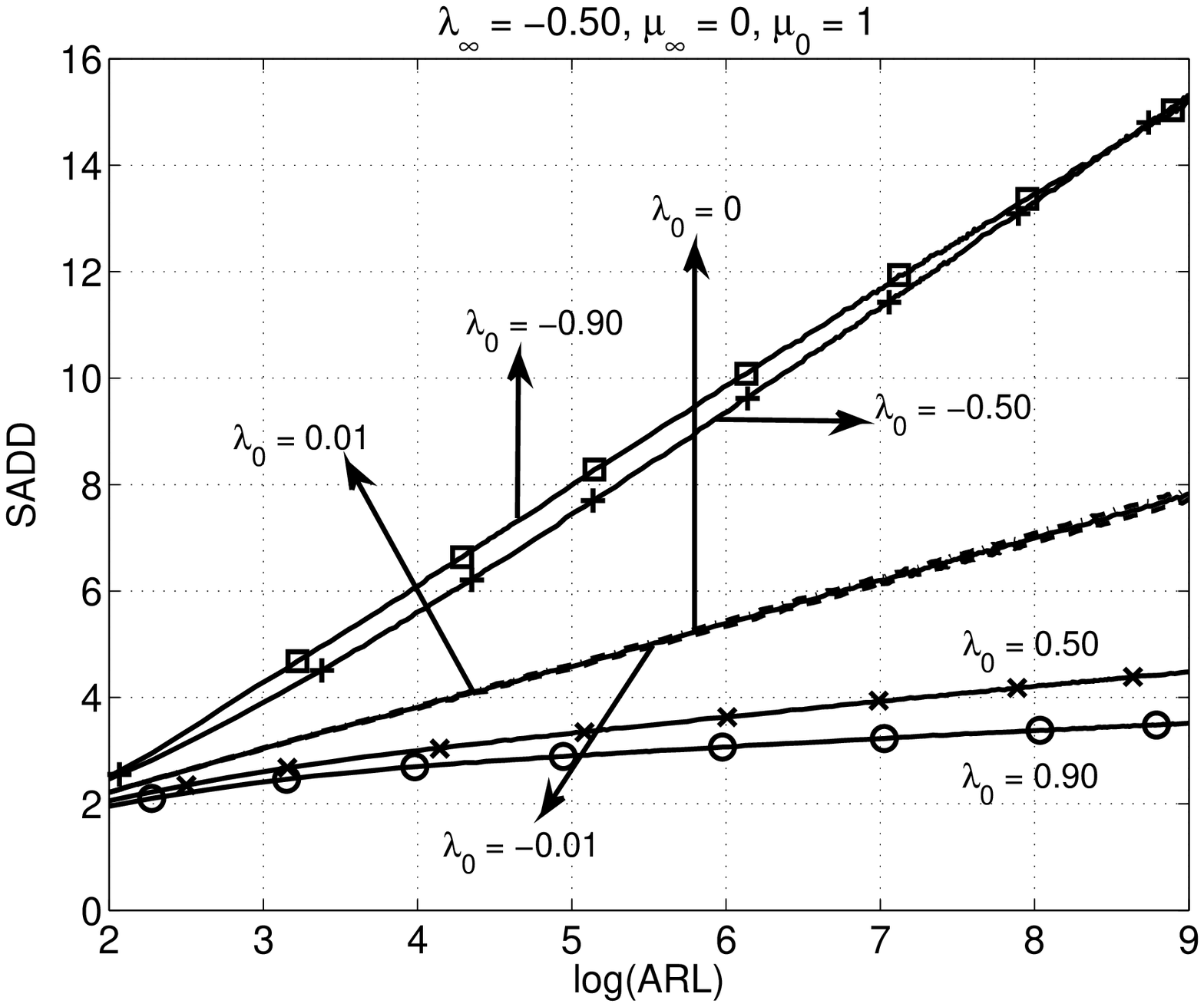}
\\
{\hspace{0.05in}} (c) & (d)
\end{tabular}
\caption{\label{fig_cusum_sr_general}
Performance of CUSUM chart and SR procedure (a)-(b) with $\lambda_{\infty}
= 0.50$ and (c)-(d) with $\lambda_{\infty} = -0.50$ for different values
of $\lambda_0$ and $\mu_0 = 1$.}
\end{center}
\vspace{-5mm}
\end{figure*}

\begin{figure*}[htb!]
\begin{center}
\begin{tabular}{cc}
%\begin{minipage}{5.5in}
%\centerline{
\includegraphics[height=2.1in,width=2.8in] {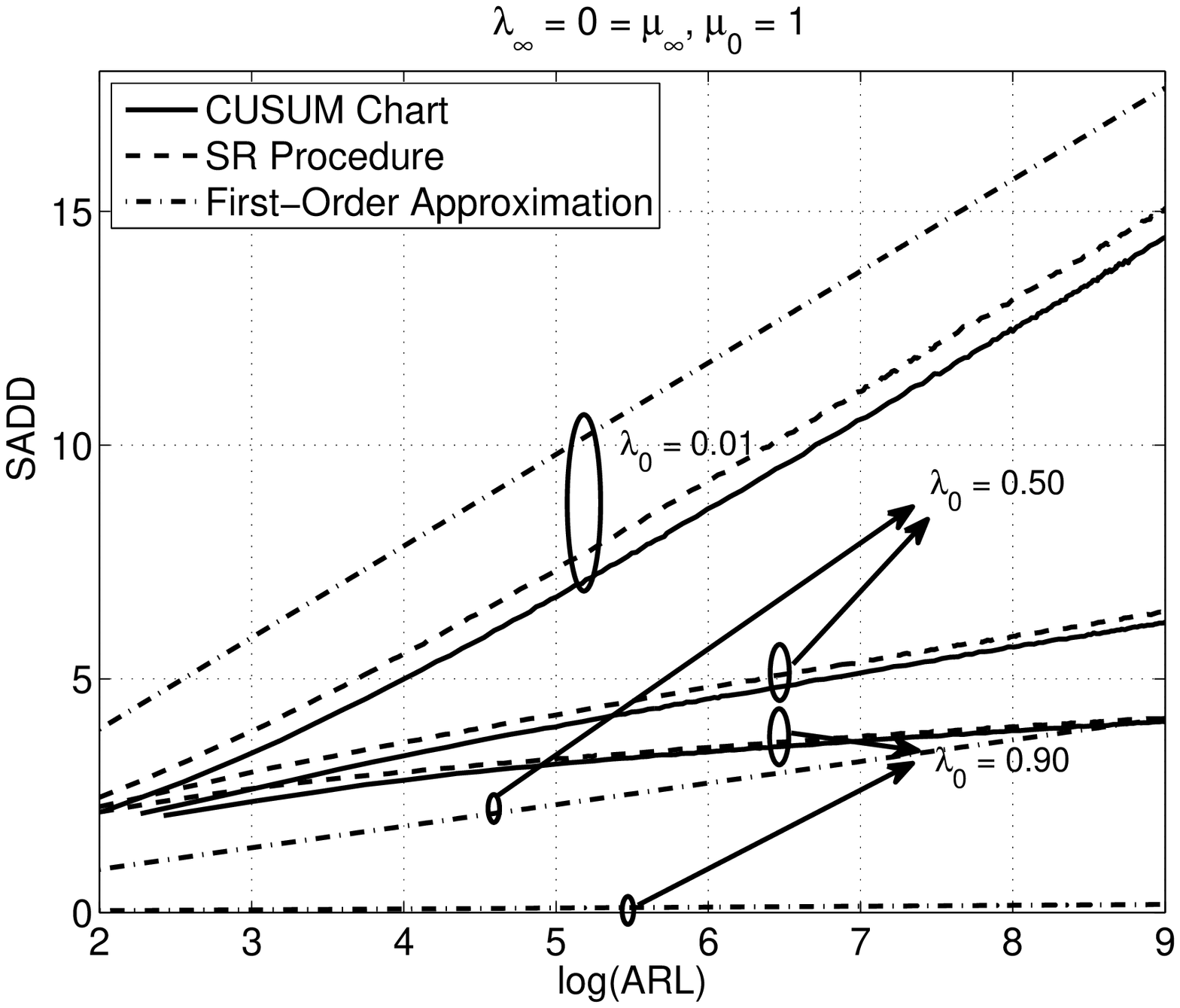}
&
\includegraphics[height=2.1in,width=2.8in] {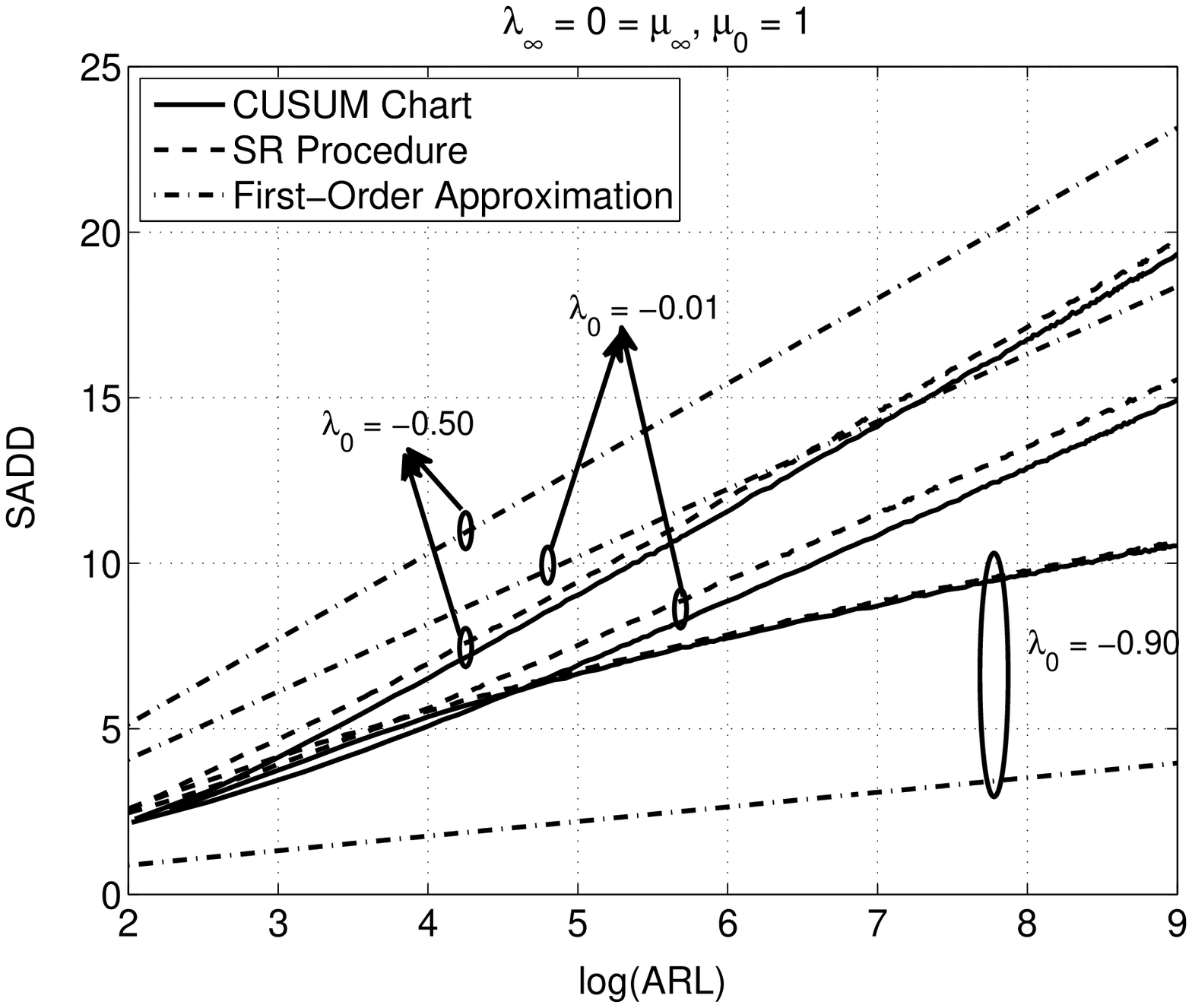}
%{kl_divergence_mu_123.eps}
\\ {\hspace{0.05in}} (a) & (b)
\\
\includegraphics[height=2.1in,width=2.8in] {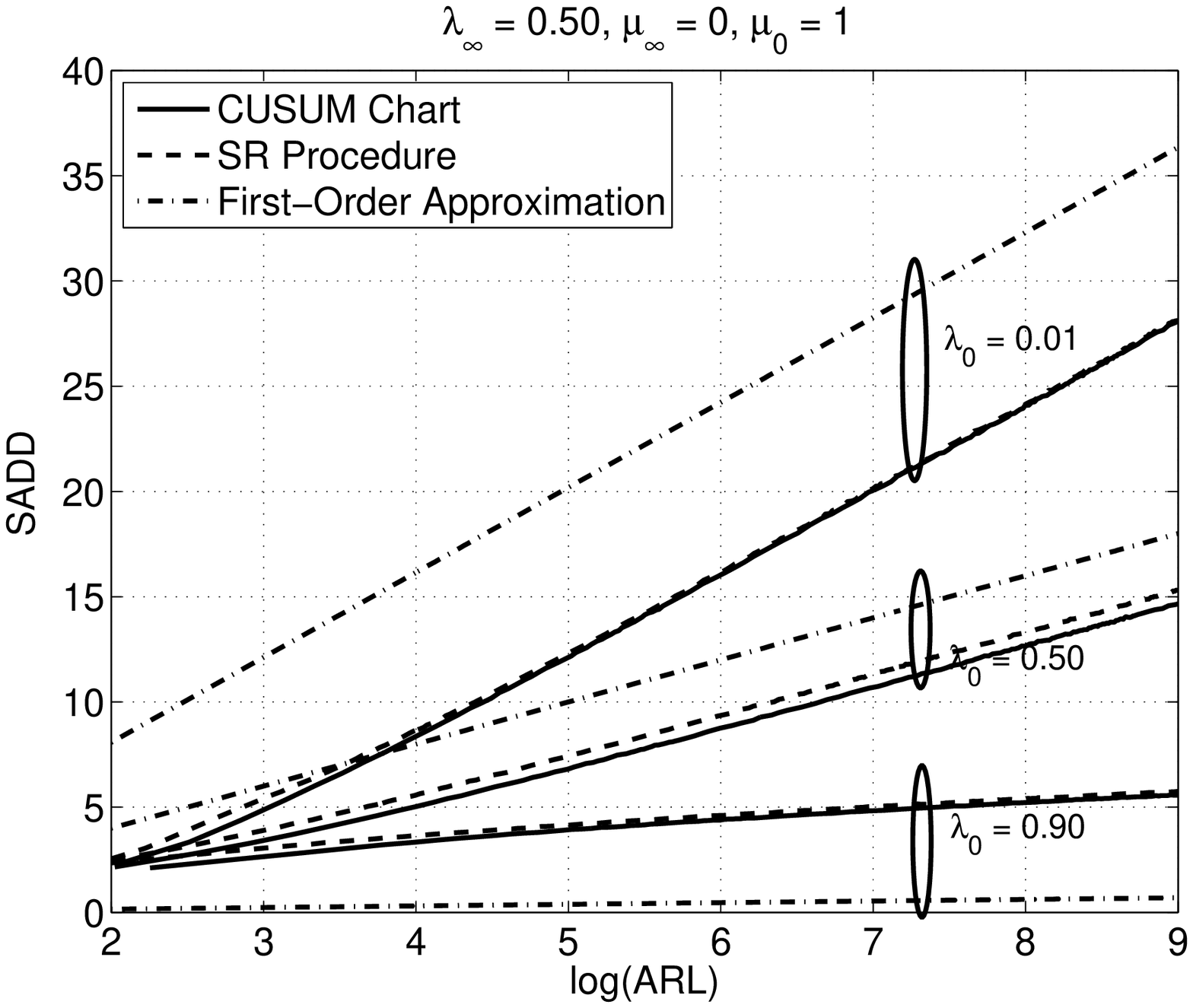}
&
\includegraphics[height=2.1in,width=2.8in] {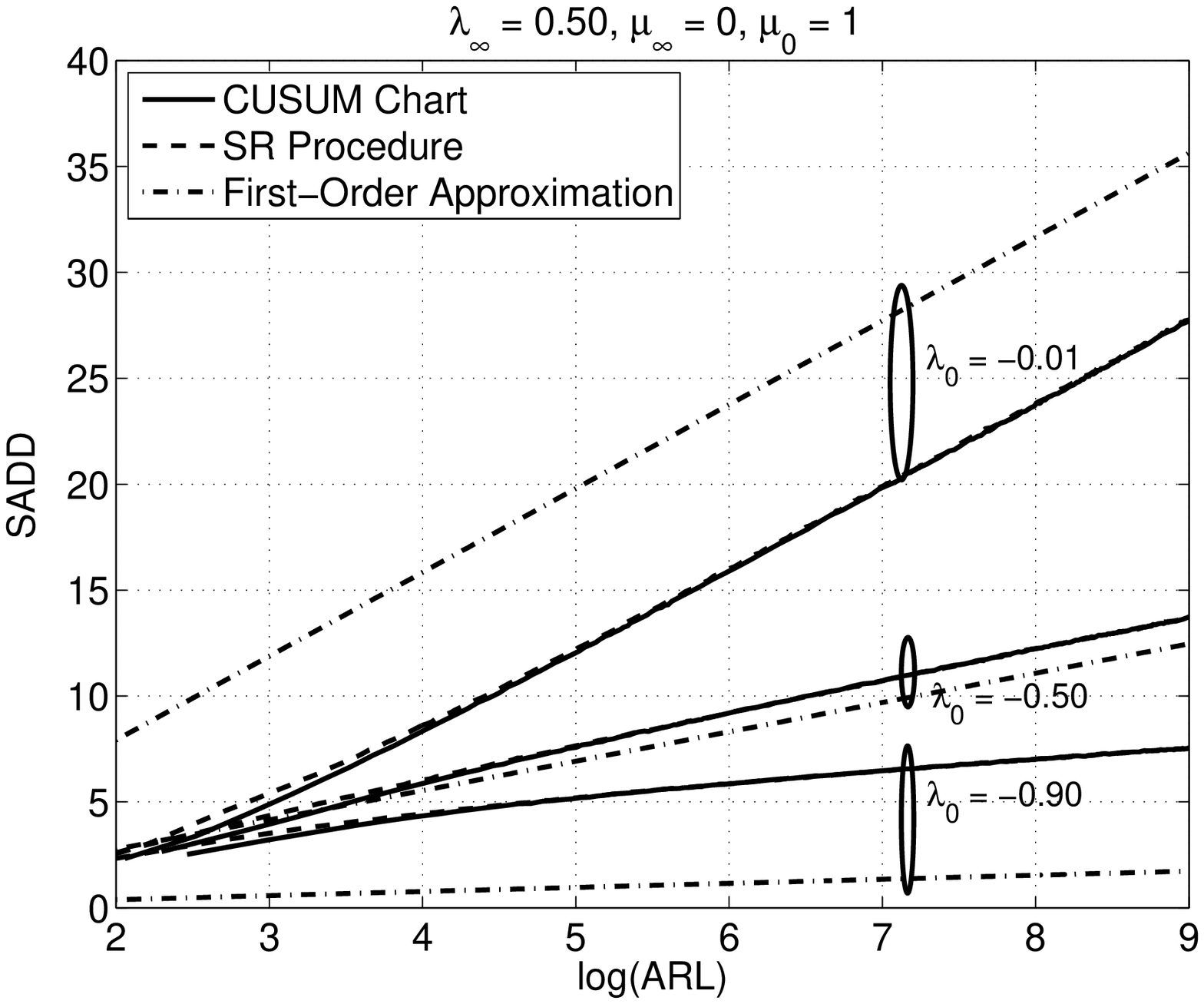}
\\ {\hspace{0.05in}} (c) & (d)
\\
\includegraphics[height=2.1in,width=2.8in] {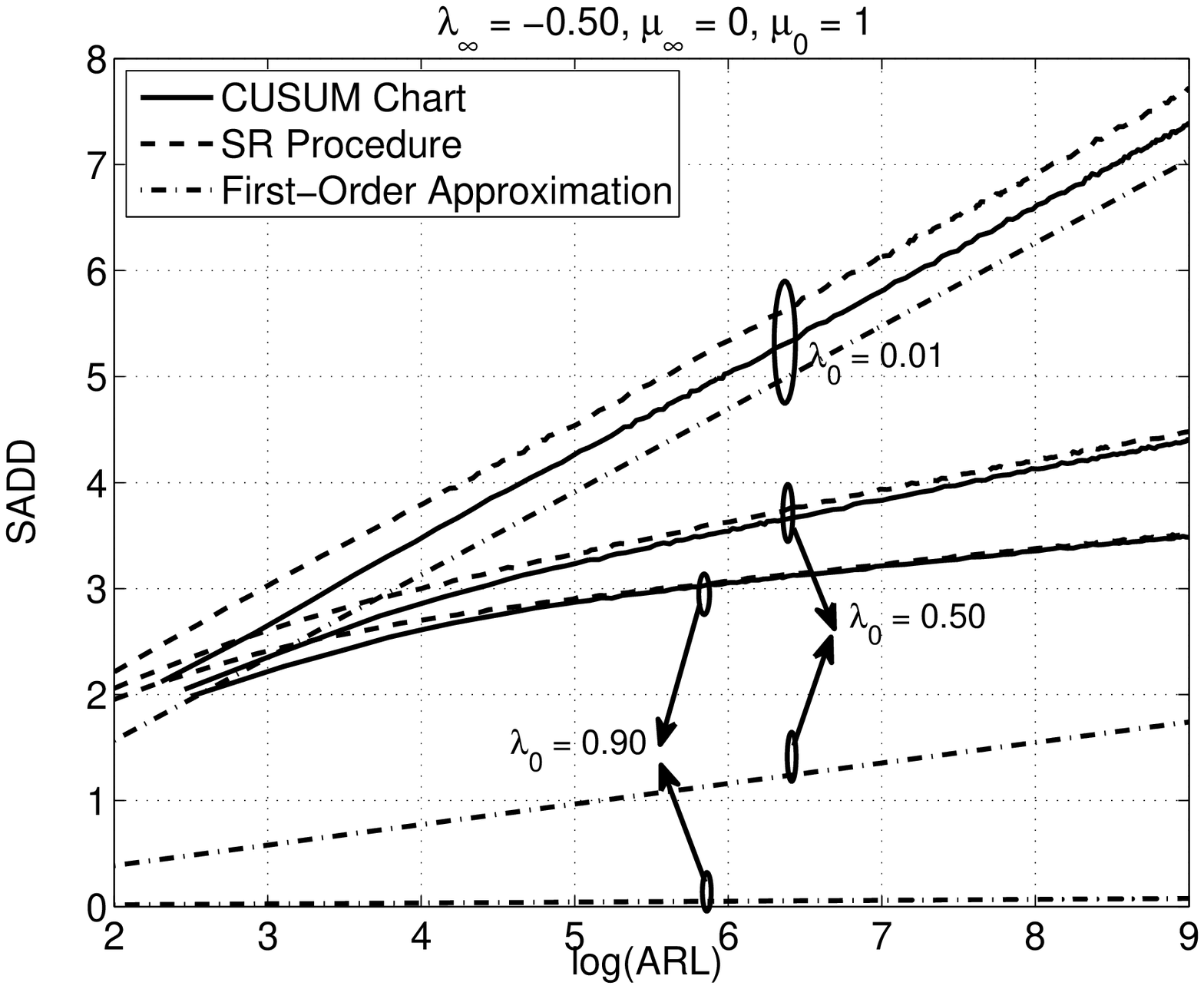}
&
\includegraphics[height=2.1in,width=2.8in] {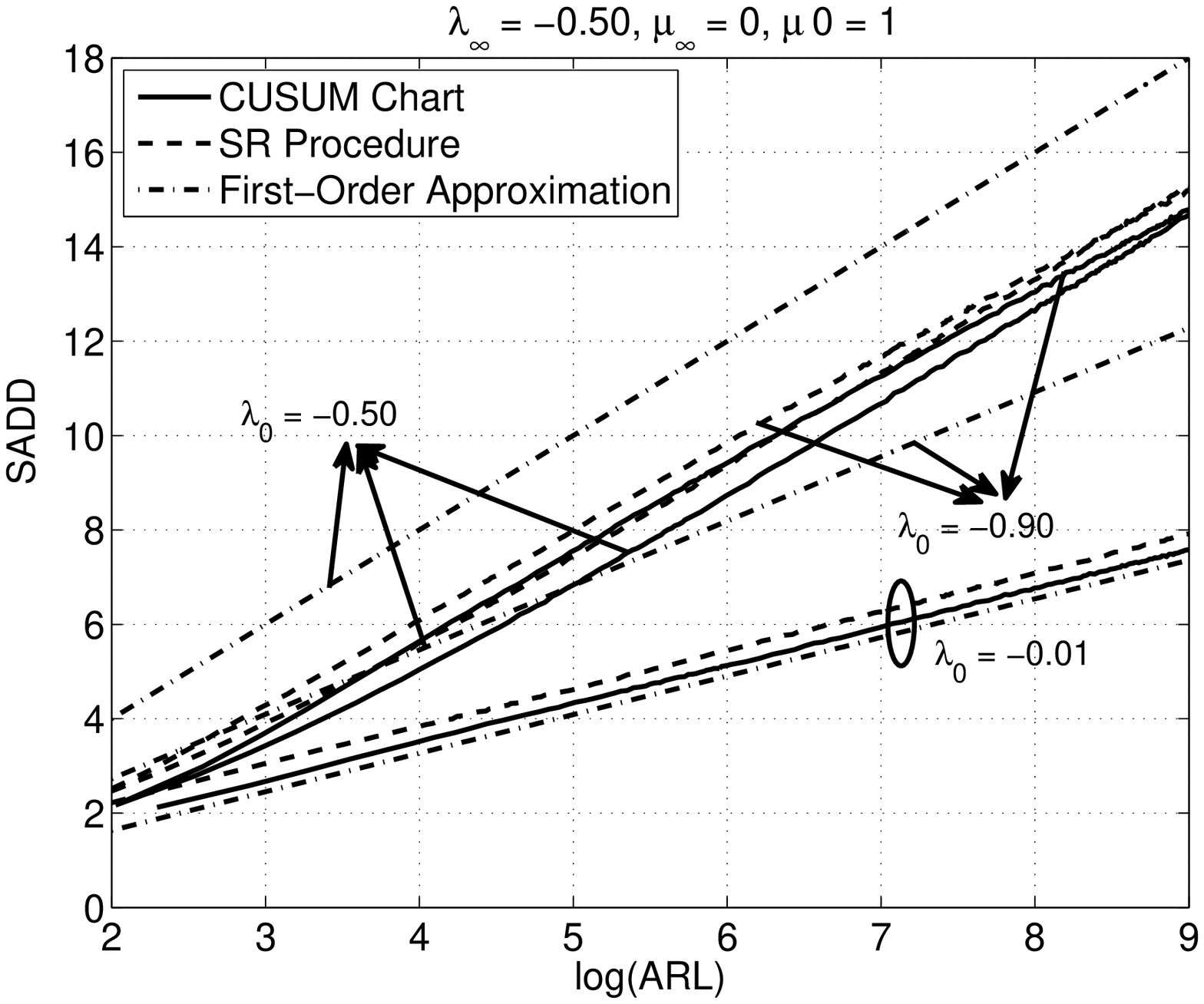}
\\
{\hspace{0.05in}} (e) & (f)
\end{tabular}
\caption{\label{fig_cusum_sr_foa}
CUSUM chart and SR procedure performance along with the first-order
approximation from~(\ref{first_order_approx}) for the i.i.d.\ pre-change
setting with $\mu_{\infty} = 0$ for (a) select positive $\lambda_0$ values
and (b) select negative $\lambda_0$ values. Similar plots are provided
for the case where (c)-(d) $\lambda_{\infty} = 0.50$ and (e)-(f) $\lambda_{\infty}
= -0.50$.
}
\end{center}
\vspace{-5mm}
\end{figure*}

%+-----------------------------------------------------------------------------------------------+%
\subsection{Comparison between the CUSUM chart and SR procedure}

We start with a relative performance comparison between the CUSUM chart
and the SR procedure in the i.i.d.\ pre-change setting ($\lambda_{\infty}
= 0$) with $\mu_{\infty} = 0$ and $\mu_0 = 1$ %and $\sigma = 1$
as a function of $\lambda_0$. For this case, $\SADD$ corresponding to
the CUSUM chart and the SR procedure are plotted as a function of
$\log({\ARL})$ in Figs.~\ref{fig_cusum_sr}(a)-(b), respectively. In
these plots, we consider six post-change settings with correlation parameter
given as $\lambda_0 = \pm 0.01, \hsppp \pm 0.50, \hsppp \pm 0.90$ in addition
to the i.i.d.\ post-change setting ($\lambda_0 = 0$).

As expected from similar studies on the i.i.d.\ problem (see,
e.g.,~\cite{Polunchenko+Tartakovsky:MCAP2012,Tartakovsky+Moustakides:SA2010,Tartakovsky+Veeravalli:TPA05}), $\SADD$ for either procedure is linear in $\log(\ARL)$ in the
AR framework as ${\ARL} \rightarrow \infty$. Further, with
either procedure, the change is more easily detectable (marked by a smaller
$\SADD$ value for the same $\ARL$ value) relative to the $\lambda_0 = 0$ case
as $\lambda_0$ increases from $0.01$ to $0.50$ and $0.90$. On the other hand,
as $\lambda_0$ decreases to $-0.01$ and $-0.50$, the change gets relatively
more difficult to detect. However, with a further decrease in $\lambda_0$ to
$-0.90$, the change becomes easier to detect.

Reinforcing the above observation, we plot $\SADD$ as a function of $\lambda_0$
for four different $\ARL$ values ($\ARL = 100$, $200$, $300$ and $400$) for the
CUSUM chart and the SR procedure in Figs.~\ref{fig_cusum_sr}(c) and~(d),
respectively. For the CUSUM chart, we observe that $\SADD$ in the correlated
case is as large as the $\SADD$ corresponding to the $\lambda_0 = 0$ case if
$\lambda_0 \in [-0.91, \hsppp 0]$ in the $\ARL = 100$ scenario. The
corresponding intervals in the $\ARL = 200, 300$ and $400$ scenarios are
$[-0.84, \hsppp 0]$, $[-0.81, \hsppp 0]$ and $[-0.79, \hsppp 0]$, respectively.
The maximum value of $\SADD$ is observed in the four scenarios at $\lambda_0 =
-0.41$, $-0.39$, $-0.39$ and $-0.39$, respectively. In the case of the SR procedure,
the corresponding intervals in $\ARL = 100$, $200$, $300$ and $400$ scenarios
are $\lambda_0 \in [-0.83, \hsppp 0]$, $\lambda_0 \in [-0.78, \hsppp 0]$,
$\lambda_0 \in [-0.77, \hsppp 0]$ and $\lambda_0 \in [-0.76, \hsppp 0]$,
respectively. The maximum value of $\SADD$ is observed at $\lambda_0 =
-0.39$, $-0.36$, $-0.39$ and $-0.36$, respectively. From a theoretical
perspective, for the AR framework considered here ($\lambda_{\infty} = 0
= \mu_{\infty}$ and $\mu_0 = 1$), it can be checked that
$\lambda_{0, \hsppp {\sf upper}} = 0$, $\lambda_{0, \hsppp {\sf lower}}
\approx -0.6180$ and $\lambda_{0, \hsppp {\sf crit}} = -\frac{1}{3}$ and
the KL number in the correlated case is smaller than the $\lambda_0 = 0$
case over the interval $\left[\lambda_{0, \hsppp {\sf lower}} , \hsppp 0
\right]$ with the minimum attained at $\lambda_0 =  \lambda_{0,
\hsppp {\sf crit}}$~(see Fig.~\ref{fig_kl}(c) for ${\KL}$ as a function
of $\lambda_0$). As $\ARL \rightarrow \infty$, the observed interval where
$\SADD$ is larger and the worst-case correlation value converge to the
theoretically expected values.

\begin{table}[p]
    \centering
    \caption{Operating characteristics of the CUSUM and SR procedures:
    $\lambda_{\infty} = 0$, $\mu_\infty = 0$, and $\mu_0 = 1$. %\newline
    Standard errors are presented in parentheses.}
        \scalebox{0.8}{
        \begin{tabular}{|l||l||c||c|c|c|c|c|c|}
    \hline & Procedure &$\gamma$ &50 &100 &500 &1000 &5000 &10000\\
        \hline
    \hline
       \multirow{6}[8]{0.5cm}{\rotatebox{90}{$\lambda_0=0.90$}}
        &\multirow{3}[8]{1.7cm}{CUSUM}
        &$A$ & $5.6500$ & $9.8750$ & $39.5000$ & $73.9000$ & $324.4000$ & $618.8975$ \\\cline{3-9}
        &&$\ARL$ & $49.81$ & $100.31$ & $499.58$ & $999.64$ & $4999.95$ & $10000.31$ \\
        & & & $(0.04)$ & $(0.07)$ & $(0.35)$ & $(0.71)$ & $(3.54)$ & $(7.07)$ \\
        \cline{3-9}
        &&$\SADD$ & $2.7995$ & $3.0575$ & $3.4895$ & $3.6493$ & $3.9920$ & $4.1264$ \\
        & & & $(0.0015)$ & $(0.0016)$ & $(0.0017)$ & $(0.0017)$ & $(0.0018)$ & $(0.0019)$ \\
        \cline{2-9}
        &\multirow{3}[8]{1.7cm}{SR}
        &$A$ & $14.1500$ & $25.8000$ & $107.8750$ & $202.2350$ & $885.9000$ & $1685.9350$ \\\cline{3-9}
        &&$\ARL$ & $49.99$ & $99.93$ & $499.79$ & $1000.71$ & $5000.99$ & $9999.93$ \\
        & & & $(0.03)$ & $(0.07)$ & $(0.35)$ & $(0.71)$ & $(3.53)$ & $(7.07)$ \\
        \cline{3-9}
        &&$\SADD$ & $2.9775$ & $3.1811$ & $3.5841$ & $3.7438$ & $4.0737$ & $4.2039$ \\
        & & & $(0.0014)$ & $(0.0015)$ & $(0.0017)$ & $(0.0017)$ & $(0.0018)$ & $(0.0018)$ \\
        \hline
    \hline
       \multirow{6}[8]{0.5cm}{\rotatebox{90}{$\lambda_0=0.50$}}
        &\multirow{3}[8]{1.7cm}{CUSUM}
        &$A$ & $6.5750$ & $11.9000$ & $53.2500$ & $103.2500$ & $492.7500$ & $971.2000$ \\\cline{3-9}
        &&$\ARL$ & $50.02$ & $99.65$ & $500.35$ & $999.65$ & $5000.60$ & $10000.97$ \\
        & & & $(0.04)$ & $(0.07)$ & $(0.35)$ & $(0.71)$ & $(3.54)$ & $(7.07)$
        \\
        \cline{3-9}
        &&$\SADD$ & $3.2926$ & $3.7446$ & $4.6894$ & $5.0794$ & $5.9552$ & $6.3127$ \\
        & & & $(0.0020)$ & $(0.0022)$ & $(0.0026)$ & $(0.0028)$ & $(0.0032)$ & $(0.0033)$ \\
        \cline{2-9}
        &\multirow{3}[8]{1.7cm}{SR}
        &$A$ & $18.5000$ & $35.3500$ & $164.1000$ & $320.4500$ & $1532.9250$ & $3024.18$ \\\cline{3-9}
        &&$\ARL$ & $50.12$ & $99.71$ & $499.96$ & $1000.04$ & $4998.86$ & $10002.44$ \\
        & & & $(0.03)$ & $(0.07)$ & $(0.35)$ & $(0.70)$ & $(3.53)$ & $(7.07)$ \\
        \cline{3-9}
        &&$\SADD$ & $3.5868$ & $4.0039$ & $4.9385$ & $5.3144$ & $6.1772$ & $6.5323$
        \\
        & & & $(0.0019)$ & $(0.0021)$ & $(0.0026)$ & $(0.0028)$ & $(0.0032)$ & $(0.0033)$
        \\
        \hline
    \hline
        \multirow{6}[8]{0.5cm}{\rotatebox{90}{$\lambda_0=0.01$}}
        &\multirow{3}[8]{1.7cm}{CUSUM}
        &$A$ & $9.1850$ & $17.1640$ & $80.1035$ & $158.5061$ & $783.2500$ & $1563.1025$
        \\ \cline{3-9}
        &&$\ARL$ & $49.94$ & $99.99$ & $500.19$ & $1000.40$ & $4999.07$ & $9999.14$ \\ %\cline{3-9}
        &&       & $(0.05)$ & $(0.07)$ & $(0.35)$ & $(0.70)$ & $(3.53)$ & $(7.07)$
        \\ \cline{3-9}
        &&$\SADD$ & $4.8373$ & $6.0403$ & $9.0262$ & $10.3655$ & $13.4867$ &
        $14.8425$ \\ %\cline{3-9}
        &&        & $(0.0031)$ & $(0.0026)$ & $(0.0050)$ & $(0.0054)$
        & $(0.0065)$ & $(0.0069)$ \\
        \cline{2-9}
        &\multirow{3}[8]{1.7cm}{SR}
        &$A$ & $27.4112$ & $55.0144$ & $278.0016$ & $555.2155$ & $2776.7500$
        & $5553.0500$
        \\
        \cline{3-9}
        &&$\ARL$ & $49.97$ & $99.70$ & $500.75$ & $999.95$ & $4999.74$
        & $9999.47$ \\
        & & & $(0.03)$ & $(0.07)$ & $(0.35)$ & $(0.70)$ & $(3.53)$ & $(7.06)$
        \\ \cline{3-9}
        &&$\SADD$ & $5.3853$ & $6.6115$ & $9.6433$ & $10.9817$ & $14.1190$ & $15.4596$ \\
        & & & $(0.0027)$ & $(0.0033)$ & $(0.0046)$ & $(0.0051)$ & $(0.0062)$ & $(0.0066)$ \\
        \hline
    \hline
        \multirow{6}[8]{0.5cm}{\rotatebox{90}{$\lambda_0=0$}}
        &\multirow{3}[8]{1.7cm}{CUSUM}
        &$A$ & $9.2412$ & $17.2500$ & $80.5000$ & $159.1250$ & $788.5000$ &$1573.1500$ \\\cline{3-9}
        &&$\ARL$ & $49.97$ & $99.92$ & $499.99$ & $1000.07$ & $5000.90$ & $10000.96$\\
        & & & $(0.03)$ & $(0.07)$ & $(0.35)$ & $(0.70)$ & $(3.53)$ & $(7.06)$
        \\
        \cline{3-9}
        &&$\SADD$ & $4.8471$ & $6.0554$ & $9.1504$ & $10.3719$ & $13.7190$  & $15.0838$ \\
        & & & $(0.0031)$ & $(0.0037)$ & $(0.0050)$ & $(0.0055)$ & $(0.0066)$ & $(0.0070)$ \\
        \cline{2-9}
        &\multirow{3}[8]{1.7cm}{SR}
        &$A$ & $27.5500$ & $55.7500$ & $279.0000$ & $559.0000$ & $2801.0000$ & $5607.0050$ \\
        \cline{3-9}
        &&$\ARL$ & $50.00$ & $100.25$ & $499.01$ & $999.58$ & $5000.46$ & $10000.88$ \\
        & & & $(0.03)$ & $(0.07)$ & $(0.35)$ & $(0.70)$ & $(3.53)$ & $(7.05)$ \\
        \cline{3-9}
        &&$\SADD$ & $5.4281$ & $6.6911$ & $9.7689$ & $11.1363$  & $14.3394$ & $15.7182$ \\
        & & & $(0.0028)$ & $(0.0033)$ & $(0.0046)$ & $(0.0051)$ & $(0.0062)$ & $(0.0066)$ \\
    \hline
    \end{tabular}
    }
    \label{tab:sr_cusum_case1}
\end{table}

Further, the operating characteristics of the CUSUM chart
and the SR procedure corresponding to $\ARL$ values of $50$, $100$, $500$, $1000$,
$5000$ and $10000$ (namely, the corresponding thresholds and $\SADD$ values) are
presented in Table~\ref{tab:sr_cusum_case1} for the pre-change i.i.d.\ setting
with $\mu_{\infty} = 0$ and $\mu_0 = 1$ for four $\lambda_0$ values: $0, 0.01, 0.50$
and $0.90$. Also, presented in this table are the standard errors of $\ARL$ and
$\SADD$ computed according to the formula $\frac{s}{\sqrt{n}}$, where $s$ is the
sample standard deviation and $n$ is the number of samples. For the $\ARL$ calculations
presented in Table~\ref{tab:sr_cusum_case1}, $n = 2 \times 10^6$ independent runs
of the procedures were used, whereas for the $\SADD$ calculations, $n = 10^6$ runs
were used.
% }

In the $\lambda_{\infty} = 0.50$ case, Figs.~\ref{fig_cusum_sr_general}(a)-(b)
plot the $\SADD$ performance of the CUSUM chart and the SR procedure as a
function of $\log({\ARL})$, respectively. The KL number in the seven
scenarios studied ($\lambda_0 = -0.90, -0.50, -0.01, 0$, $0.01, 0.50$ and $0.90$)
are $5.1925, 0.7222, 0.2526, 0.25, 0.2476, 0.50$ and $12.9211$, respectively.
The $\SADD$ vs.\ $\log(\ARL)$ slopes in Figs.~\ref{fig_cusum_sr_general}(a)-(b)
are in agreement with the KL number values. Specifically, while the $\SADD$
is larger in the $\lambda_0 = -0.50$ scenario relative to the $\lambda_0 = 0.50$
scenario for small $\ARL$ values, the larger KL number value leads to a
smaller $\SADD$ at larger $\ARL$ values. Similarly,
Figs.~\ref{fig_cusum_sr_general}(c)-(d) plot the performance of the two procedures
in the $\lambda_{\infty} = -0.50$ case. The KL number in the seven scenarios
are $0.7327, 0.50, 1.2229, 1.25, 1.2779, 5.1667$ and $117.6579$, respectively and
the $\SADD$ vs.\ $\log(\ARL)$ slopes are again in agreement. Specifically, the
slopes in the $\lambda_0 = -0.90$ and $-0.50$ scenarios behave similar to the
description above.

Recall that a change-point detection procedure ${\tau}$ from the class
${\Delta}(\gamma)$ %\triangleq \left\{ \tau : \ARL(\tau) \geq \gamma \right\}$
is said to be {\em second-order optimal} if
\begin{eqnarray}
\SADD(\tau) - \inf \limits_{\tau \in { \Delta}(\gamma) } \SADD(\tau)
= {\cal O}(1), \hspp {\rm as} \hspp \gamma \rightarrow \infty.
\nonumber
\end{eqnarray}
%It is now well-understood (see~\cite{Polunchenko+Tartakovsky:MCAP2012}
%for details) that both the CUSUM chart and SR procedure are second-order
%optimal for the i.i.d.\ problem and a first-order approximation for
%their performance is
As noted in Sec.~\ref{sec:problem-formulation+background}, a first-order
approximation of the performance of either procedure is given by
\begin{eqnarray}
\SADD = \frac{ \log(\ARL) } { {\KL} }, \hspp
{\rm as} \hspp \ARL \rightarrow \infty. %\nonumber
\label{first_order_approx}
\end{eqnarray}
In Fig.~\ref{fig_cusum_sr_foa}(a), the $\SADD$ vs.\ $\log({\ARL})$
performance of the CUSUM chart and SR procedure are compared in three
cases: $\lambda_0 = 0.01, \hsppp 0.50$ and $0.90$. Also plotted is
the first-order approximation from~(\ref{first_order_approx}).
Fig.~\ref{fig_cusum_sr_foa}(b) plots the performance of the CUSUM chart,
SR procedure and first-order approximation in three cases: $\lambda_0 =
-0.01, \hsppp -0.50$ and $-0.90$. On the other hand,
Figs.~\ref{fig_cusum_sr_foa}(c)-(d) and (e)-(f) illustrate the same trends
in the $\lambda_{\infty} = 0.50$ and $\lambda_{\infty} = -0.50$ cases,
respectively. From our studies, we observe that the CUSUM chart
out-performs the SR procedure for any set of parameter values with the
gap in performance (generally) decreasing as $|\lambda_0|$ increases.
Nevertheless, both procedures have the same slope, which is the same as
the first-order approximation. Thus, the constant gap between the true
performance of the CUSUM chart and SR procedure in Fig.~\ref{fig_cusum_sr_foa}
and the first-order approximation suggests that both procedures are
second-order optimal.

%+-----------------------------------------------------------------------------------------------+%
\section{Conclusion}
\label{sec:conclusion}

While change-point detection for AR processes has been extensively studied in the statistical process control literature, a systematic characterization of the performance of the CUSUM chart as a
function of the model parameters has not received significant attention. Further still, a
comparative analysis of the CUSUM chart and a worthy competitor to it (the SR procedure) has
received even lesser attention. The focus of this work is on filling in some of these gaps
in the context of data generated by an AR(1) process that undergoes a change in the mean level
and the correlation coefficient at an unknown change-point.

Extending prior results on the i.i.d.\ problem, we developed recipes for
setting the threshold with either procedure to achieve a certain $\ARL$
performance. We also established that the worst-case detection delay (in
the Pollak sense) is realized when the change-point is at the start
of observation. Toward understanding the $\SADD$ vs.\ $\log(\ARL)$
performance of either procedure, we studied the KL number between AR
processes as a function of the model parameters. We established the existence
of a worst-case post-change parameter value that leads to the smallest KL
number (and hence, poorest detectability of change) and characterized
its structure as a function of other AR process model parameters.

%We then developed an integral-equations driven numerical framework to study
%the operating characteristics of either detection procedure.
Our numerical studies further reinforced the importance of the role played by the KL
number between the post- and pre-change processes. While our results
showed that the CUSUM chart slightly out-performs the SR procedure, both procedures are also second-order optimal with correlated data. Future
work will consider the problem of establishing the second-order optimality of either procedure for detecting a change in AR processes.

\ignore{
\begin{table}[p]
    \centering
    \caption{Operating characteristics of CUSUM and SR procedures:
    $\lambda_{\infty} = 0.50$, $\mu_\infty/\sigma=0$, and $\mu_0/\sigma=1$}
        \begin{tabular}{|l||l||c||c|c|c|c|c|c|}
    \hline & Procedure &$\gamma$ &50 &100 &500 &1000 &5000 &10000\\
        \hline
    \hline
       \multirow{6}[8]{0.5cm}{\rotatebox{90}{$\lambda_\infty=\lambda_0=1.0$}}
        &\multirow{3}[8]{1.7cm}{CUSUM}
        &$A$ &? &? &? &? &? &?\\\cline{3-9}
        &&$\ARL$ &? &? &? &? &? &?\\\cline{3-9}
        &&$\SADD$ &? &? &? &? &? &?\\
        \cline{2-9}
        &\multirow{3}[8]{1.7cm}{SR}
        &$A$ &? &? &? &? &? &?\\\cline{3-9}
        &&$\ARL$ &? &? &? &? &? &?\\\cline{3-9}
        &&$\SADD$ &? &? &? &? &? &?\\
        \hline
    \hline
        \multirow{6}[8]{0.5cm}{\rotatebox{90}{$\lambda_\infty=\lambda_0=0.5$}}
        &\multirow{3}[8]{1.7cm}{CUSUM}
        &$A$ &? &? &? &? &? &?\\\cline{3-9}
        &&$\ARL$ &? &? &? &? &? &?\\\cline{3-9}
        &&$\SADD$ &? &? &? &? &? &?\\
        \cline{2-9}
        &\multirow{3}[8]{1.7cm}{SR}
        &$A$ &? &? &? &? &? &?\\\cline{3-9}
        &&$\ARL$ &? &? &? &? &? &?\\\cline{3-9}
        &&$\SADD$ &? &? &? &? &? &?\\
        \hline
    \hline
        \multirow{6}[8]{0.5cm}{\rotatebox{90}{$\lambda_\infty=\lambda_0=0.1$}}
        &\multirow{3}[8]{1.7cm}{CUSUM}
        &$A$ & $1.5607$ & $2.0809$ & $4.5602$ & $7.1900$ & $26.2030$ & $48.8878$ \\\cline{3-9}
        &&$\ARL$ &? &? &? &? &? &?\\\cline{3-9}
        &&$\SADD$ &? &? &? &? &? &?\\
        \cline{2-9}
        &\multirow{3}[8]{1.7cm}{SR}
        &$A$ & $45.94$ & $93.24$ & $470.45$ & $942.70$ & $4699.83$ & $9433.05$ \\\cline{3-9}
        &&$\ARL$ &? &? &? &? &? &?\\\cline{3-9}
        &&$\SADD$ &? &? &? &? &? &?\\
        \hline
    \hline
        \multirow{6}[8]{0.5cm}{\rotatebox{90}{$\lambda_\infty=\lambda_0=0$}}
        &\multirow{3}[8]{1.7cm}{CUSUM}
        &$A$ &1.676 &2.1 &4.575 &7.205 &26.15 &48.964\\\cline{3-9}
        &&$\ARL$ &50.03 &100.2 &500.64 &1000.8 &5000.1 &10000.62\\\cline{3-9}
        &&$\SADD$ &32.8 &56.45 &166.34 &242.97 &482.88 &605.15\\
        \cline{2-9}
        &\multirow{3}[8]{1.7cm}{SR}
        &$A$ &47.17 &94.34 &471.7 &943.41 &4717.04 &9434.08\\\cline{3-9}
        &&$\ARL$ &50.29 &100.28 &500.28 &1000.28 &5000.24 &10000.17\\\cline{3-9}
        &&$\SADD$ &41.4 &72.32 &209.44 &298.5 &557.87 &684.17\\
    \hline
    \end{tabular}
    \label{tab:sr_cusum_case2}
\end{table}

}

\section*{Acknowledgment}

The authors greatly appreciate the help received from Distinguished Prof. Shelemyahu Zacks of the Department of Mathematical Sciences at the State University of New York at Binghamton who not only encouraged this work, but also diligently read the first draft and provided valuable feedback that helped improve the quality of the manuscript.

The effort of A.\ S.~Polunchenko was supported, in part, by the Simons Foundation via a Collaboration Grant in Mathematics under Award \#\,304574.

%+-----------------------------------------------------------------------------------------------+%
\bibliographystyle{asmbi}
\bibliography{bib_qprc,integral-equations,numerical-analysis,miscellaneous,mcmc}

\end{document}